\newtheorem{theorem}{Theorem}
\newtheorem{lemma}[theorem]{Lemma}
\newtheorem{proposition}[theorem]{Proposition}
\newtheorem{corollary}[theorem]{Corollary}
\newtheorem{claim}[theorem]{Claim}
\theoremstyle{definition}
\newcommand{\mathprob}[1]{\mbox{\textmd{\textsc{#1}}}}
\newcommand{\SAT}{\mathprob{SAT}}
\newcommand{\MCSP}{\mathprob{MCSP}}
\newcommand{\MBPSP}{\mathprob{MBPSP}}
\newcommand{\MFSP}{\mathprob{MFSP}}
\newcommand{\BPIS}{(n\times n)\text{-}\mathprob{BPIS}}
\newcommand{\class}[1]{\mathbf{#1}}
\newcommand{\TIME}{\class{TIME}}
\newcommand{\NP}{\class{NP}}
\newcommand{\coNP}{\class{coNP}}
\newcommand{\CNF}{\class{CNF}}
\renewcommand{\P}{\class{P}}
\newcommand{\XOR}{\mathrm{XOR}}
\newcommand{\BP}{{\rm BP}}
\newcommand{\NBP}{{\rm NBP}}
\newcommand{\onebp}{1\mbox{-}{\rm BP}}
\newcommand{\oneNBP}{1\mbox{-}{\rm NBP}}
\newcommand{\oaBP}{\mathrm{oaBP}}
\newcommand{\DNF}{\mathrm{DNF}}
\newcommand{\OBDD}{{\rm OBDD}}
\newcommand{\defperm}[1]{\kappa_{#1}}
\renewcommand{\phi}{\varphi}
\let\c@author\relax
\newif\ifdraft
\newcommand{\lgl}[1]{\textcolor[rgb]{.00,0.80,0.00}{[LG: #1]}}
\newcommand{\lgs}[1]{\marginpar{\tiny \sf \textcolor[rgb]{.00,0.80,0.00}{[LG: #1]}}}
\newcommand{\ar}[1]{\textcolor[rgb]{.90,0.50,0.50}{[AR: #1]}}
\newcommand{\ars}[1]{\marginpar{\tiny \sf \textcolor[rgb]{.90,0.50,0.50}{[AR: #1]}}}
\newcommand{\lgl}[1]{}
\newcommand{\lgs}[1]{}
\newcommand{\ar}[1]{}
\newcommand{\ars}[1]{}
\def\@newl@bel#1#2#3{%
  \@ifundefined{#1@#2}
    {\global\@namedef{#1@#2}{#3}}
    {\gdef\@multiplelabels{}}%
  }
\title{Partial Minimum Branching Program Size Problem \\ is ETH-hard}
\author{Ludmila Glinskih \\ \small{Boston University} \\ \small{\texttt{lglinskih@gmail.com}} \and Artur Riazanov \\ \small{EPFL} \\ \small{\texttt{tunyash@gmail.com}}}
\date{}
\begin{document}

\maketitle

\begin{abstract}
We show that assuming the Exponential Time Hypothesis, the \textsc{Partial Minimum Branching Program Size Problem} ($\MBPSP^{*}$) requires superpolynomial time. This result also applies to the partial minimization problems for many interesting subclasses of branching programs, such as read-$k$ branching programs and $\OBDD$s.

Combining these results with our recent result (Glinskih and Riazanov, LATIN 2022) we obtain an unconditional superpolynomial lower bound on the size of Read-Once Nondeterministic Branching Programs ($\oneNBP$) computing the \emph{total} versions of the minimum $\BP$, read-$k$-$\BP$, and $\OBDD$ size problems.

Additionally we show that it is $\NP$-hard to check whether a given $\BP$ computing a partial Boolean function can be compressed to a $\BP$ of a given size.
\end{abstract}

\section{Introduction}

The {\sc Minimum Circuit Size Problem (MCSP)} has been a central problem of study in computational complexity in recent years. Given the truth-table of a Boolean function, and a size parameter $s$, $\MCSP$ asks one to determine whether the function can be computed by a Boolean circuit of size $s$. As the input length of $\MCSP$ is exponential in the input length of the function defined by the given truth table, it is clear that $\MCSP$ is contained in $\NP$. Although it is not known whether $\MCSP$ is $\NP$-complete, it is widely believed to be hard to compute. Starting from the work of Kabanets and Cai \cite{KabanetsC00} there has been mounting evidence that $\MCSP \not\in \P$ \cite{RazborovR97}. For example, $\MCSP \in \P$ implies that widely believed cryptographic assumptions do not hold. On the other hand, there are complexity-theoretic barriers to proving $\NP$-hardness of $\MCSP$ \cite{MW17, SaksS20}.

Even before the work of Kabanets and Cai \cite{KabanetsC00}, variations of $\MCSP$ were studied in the context where the function is given not as a truth table, but represented in some computational model. One family of models that was extensively studied in this context is \emph{branching programs}. The line of work \cite{BolligW96, OliveiraCVS98, TakenagaY00, Sieling02, Bollig16} showed that branching program minimization is $\NP$-hard for the natural cases of succinct representation of a function: either the function itself is represented as a branching program as in \cite{BolligW96, Sieling02, Bollig16}, or an input is a \emph{partial function} with few fixed values as in \cite{TakenagaY00}. The results, however, are limited to weak models of branching programs such as $\OBDD$s or read-once branching programs. To our knowledge, no hardness results are known even for minimizing read-twice branching programs.

The potential cross-pollination between these two research areas motivates us to study {\sc Minimum Branching Program Size Problem} ($\MBPSP$): given a function represented as a truth table $\{0,1\}^{2^n}$ and a size parameter $s \in [2^n]$ determine whether exists a \emph{branching program} of size at most $s$ computing the given function.

On the meta-complexity side, one of the approaches to prove $\NP$-hardness of $\MCSP$ is to show lower bounds for $\mathcal{C}$-$\MCSP$ problems, in which we ask whether an input Boolean function can be computed by a circuit from the class $\mathcal{C}$, where $\mathcal{C}$ is a weaker class than general Boolean circuits. There are already results showing $\NP$-hardness of $\DNF$-$\MCSP$ \cite{Masek79}, $(\DNF \circ \XOR)$-$\MCSP$ \cite{HiraharaOS18}, and the hardness of DeMorgan Formulas-MCSP ($\MFSP$) assuming the Exponential Time Hypothesis (ETH) \cite{Ilango21}. All of these reductions share a common structure.
\begin{itemize}[noitemsep]
    \item
First they show hardness of the Partial $\mathcal{C}$-$\MCSP$ ($\mathcal{C}$-$\MCSP^{*}$) problem in which the input is the truth table of a partial function.
    \item Then they show a reduction from the $\mathcal{C}$-$\MCSP^{*}$ problem to the $\mathcal{C}$-$\MCSP$ problem.
\end{itemize}
Hardness results for Partial MCSP ($\MCSP^{*}$) are already known. In \cite{Ilango20} Ilango showed that $\MCSP^{*}$ requires superpolynomial time to compute\footnote{The same hardness result holds for Partial MFSP ($\MFSP^{*}$).}, assuming the Exponential Time Hypothesis. Recently, Hirahara \cite{Hirahara22} showed the $\NP$-hardness of $\MCSP^{*}$ under randomized reductions. So the only missing component for showing hardness of $\MCSP$ is to show an efficient reduction from $\MCSP^{*}$ to $\MCSP$.

The study of hardness of $\MBPSP$ under the ETH continues the line of work above. Expressive power of branching programs is situated between that of Boolean formulas and Boolean circuits \cite{Pratt75, Giel2000}. On the other hand our understanding of branching programs seems to be slightly better than our understanding of circuits. One evidence for this is that the strongest lower bound on the $\BP$ size for an explicit function is $\Tilde{\Omega}(n^2)$ \cite{Nec66} which is much stronger than $\Omega(n)$ lower bound for Boolean circuits \cite{FindGHK16}. Showing the ETH-hardness of $\MBPSP$ would improve the recent ETH-hardness result for $\MFSP$ \cite{Ilango21} which is the frontier hardness result for total circuit minimization problems and move us one step closer to the ETH-hardness of $\MCSP$ itself. Our main result is completing the first step of the common recipe for establishing the hardness of $\mathcal{C}$-$\MCSP$ by showing hardness of $\BP\text{-}\MCSP^*$ (that we call $\MBPSP^{*}$ for short).

\begin{theorem}\label{theorem:main-theorem}
Every deterministic algorithm computing $\MBPSP^{*}$ on inputs of length $N$ requires time $N^{\Omega(\log\log(N))}$ assuming the Exponential Time Hypothesis.
\end{theorem}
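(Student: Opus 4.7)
The plan is to adapt the framework of Ilango~\cite{Ilango20} for the ETH-hardness of $\MCSP^{*}$ and $\MFSP^{*}$, translating it to branching programs. At a high level, the reduction takes an instance of an ETH-hard problem on $m$-bit inputs (\eg $3$-SAT or a convenient tagged covering variant) and produces, in polynomial time, a partial Boolean function $f:\{0,1\}^n\to\{0,1,*\}$ together with a size parameter $s$, such that the base instance is a \yes-instance if and only if $f$ admits a branching program of size at most $s$. The parameters are chosen so that $n = \Theta(m/\log m)$; since the truth-table length is $N = 2^n$, one has $n\log n = \Theta(m)$, so the ETH lower bound $2^{\Omega(m)}$ on the base problem translates into $N^{\Omega(\log n)} = N^{\Omega(\log\log N)}$ for $\MBPSP^{*}$.

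For the construction I would split the $n$ variables into a ``selector'' block of $\Theta(\log m)$ bits and a ``payload'' block of the remaining variables. Each setting of the selector picks out a local piece of the base instance (a clause, a candidate set in a cover, or a similar local constraint), and on the corresponding payload slice $f$ is defined only on a small, carefully chosen set of inputs encoding that piece. This compression is what keeps the truth-table length at $2^{O(m/\log m)}$ while still faithfully carrying an $m$-bit instance. The completeness direction is the easier one: given a witness for the base instance, build a small BP that uses the selector variables to route into shared sub-BPs handling each payload slice according to the witness, giving total size at most $s$.

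The soundness direction---showing that a \no-instance forces every BP for $f$ to have size strictly larger than $s$---is where the main work lies and is the main obstacle. Because partial functions lack the clean algebraic invariants that drive lower bounds for total functions, one must argue combinatorially: given any candidate BP of size at most $s$, identify a sub-object (for example an accepting sub-DAG per selector value, or a ``core'' shared across many selector settings) that behaves consistently with the defined values of $f$ on all slices, and extract from it a witness to the base instance, contradicting the assumption that it is a \no-instance. The payload gadgets must be designed so that correctness on the few defined points per slice already pins down enough structure in any small BP to yield such a witness, and simultaneously the completeness threshold $s$ must sit just below the soundness lower bound to create a decision gap. Once this is in place, combining the polynomial-time reduction with the parameter arithmetic above gives Theorem~\ref{theorem:main-theorem}.
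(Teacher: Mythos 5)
Your high-level framework (reduce from an ETH-hard problem, produce a partial function with a crafted size threshold, tune parameters so $N^{\log\log N}$ matches the base lower bound) matches the paper in spirit, but the crucial step is missing. You write, correctly, that ``the soundness direction\ldots is where the main work lies and is the main obstacle,'' and then offer only a vague recipe (``identify a sub-object\ldots that behaves consistently\ldots and extract a witness''). That is exactly the part the paper has to actually do, and it is not generic: the entire technical content of the proof is a delicate structural analysis of branching programs whose size budget is so tight that every variable can appear at most once (the ``once-appearance'' regime). Without a concrete gadget and a concrete lemma pinning down the shape of every small BP, there is no proof here.

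Concretely, the paper does not design a new selector/payload encoding. It reuses Ilango's reduction verbatim: it starts from the $(n\times n)$-Bipartite Permutation Independent Set problem (not 3-SAT directly), which \cite{LMS18} showed requires $2^{\Omega(n\log n)}$ time under ETH, and maps a graph $G$ to the same $3n$-variable partial function $\gamma_G$ that Ilango used for formulas, with size threshold $3n$ non-sink nodes. The only new work is establishing that $\gamma_G$ extends to a function computable by a $3n$-node BP iff $G$ is a \yes-instance. This requires: (1) showing that a $3n$-node BP for a $3n$-variable function sensitive to all its inputs must be an $\oaBP$; (2) proving that the topological order of any $\oaBP$ computing the core subfunction $\gamma'$ decomposes into consecutive triplets $\{x_a,y_b,z_b\}$, hence defines a permutation $b\mapsto a$ (\cref{lem:order}, derived from \cref{lem:adjacent-mapping,lem:forbidden-subsequence} via a careful case analysis of where edges can go); and (3) showing via \cref{lemma:need_two_ones_to_satisfy} that the permutation must respect the constraints \eqref{item:y-0-z-1-0}--\eqref{item:edges}, hence yields an independent set. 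Your selector/payload sketch is not a substitute for this; it is not even clear a generic selector-based gadget would force the needed rigidity on small BPs, since BPs (unlike formulas) can merge computation paths. Also, your parameter count ($n=\Theta(m/\log m)$ from an $m$-bit base instance with a $2^{\Omega(m)}$ bound) is not what happens here: the base problem is BPIS on a graph with $\Theta(n^4\log n)$-bit adjacency matrix with a $2^{\Omega(n\log n)}$ bound, and the partial function has $3n$ variables, giving $N=2^{3n}$ and $2^{\Omega(n\log n)}=N^{\Omega(\log\log N)}$ directly.
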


 Note that the ETH-hardness of $\MCSP^*$ or $\MFSP^*$ does not imply the ETH-hardness of $\MBPSP^*$. Although on surface it seems that $\MBPSP$ ($\MBPSP^*$) is just a special case of $\MCSP$ ($\MCSP^*$), the hardness of the latter does not necessarily imply the hardness of the former. For example, in some special settings of parameters, an explicit construction of a many-one reduction between $\MCSP$ and $\MBPSP$ yields breakthrough circuit or $\BP$ lower bounds.\footnote{Suppose that $f$ is such a reduction from $\MCSP_{s=3n}$ (the language of the truth-tables computable by circuits with at most $3n$ gates) to $\MBPSP_{s=n^{2.1}}$. Then taking any function requiring circuits of size larger than $3n$ \cite{FindGHK16, LY22} and applying $f$ to it yields $n^{2.1}$ lower bound for general branching programs superior to the current state of the art. In a similar way, we get superlinear circuit lower bounds from any reduction from $\MBPSP_{s=n^2}$ to $\MCSP_{s=n^{1+\varepsilon}}$.} On the other hand, ruling out more natural parameter configurations does not seem as trivial: indeed, our main result shows that the identity function is a reduction on a certain subset of the instances.

\medskip

On the branching program minimization side, the consequences are twofold. First, our result implies that all partial branching program minimization problems require time $N^{\Omega(\log \log N)}$ under ETH for any representation of a function that is at least as efficient as truth table.\footnote{We note that results in this area often establish $\NP$-hardness and even hardness of size approximation for the case of concise representation, while our result only establishes ETH-hardness of the exact version of the problem.} Second, we show that our result implies $\NP$-hardness of minimization of a branching program computing a partial function that itself is represented as a branching program (see \cref{sec:bp-min} for details).

In particular, our result implies that the problem of finding the smallest $\OBDD$ (a subclass of branching programs) that computes the given partial function is ETH-hard. But surprisingly, there is a polynomial-time algorithm that finds the smallest $\OBDD$ computing a \emph{total} function \cite{FS90}. This means that assuming ETH holds, the potential reduction from $\MBPSP^*$ to $\MBPSP$ must fail on $\OBDD$s.

\subsection{Techniques}

To show \cref{theorem:main-theorem} we adapt the proof of a similar result of Ilango \cite{Ilango20} for the hardness of $\MCSP^*$ and $\MFSP^*$ problems.

\begin{theorem}\label{theorem:Ilango}\cite{Ilango20}
Every deterministic algorithm computing $\MCSP^{*}$ on inputs of length $N$ requires time $N^{\Omega(\log\log(N))}$ assuming the Exponential Time Hypothesis holds. The same holds for $\MFSP^*$.
\end{theorem}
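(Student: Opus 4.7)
The plan is to reduce from an ETH-hard problem with a tight $2^{\Omega(n)}$ lower bound -- a convenient choice is parameterized Set Cover or Hypergraph Colouring, but $3$-$\SAT$ on $n$ variables also works -- to $\MCSP^{*}$ (respectively $\MFSP^{*}$), with the blow-up tuned so that the $2^{\Omega(n)}$ hardness translates into $N^{\Omega(\log\log N)}$ hardness on partial truth tables of length $N$. The target is, from a source instance of size $n$, to construct a partial function $f \colon \{0,1\}^{m} \to \{0,1,\ast\}$ on $m = \Theta(n/\log n)$ input variables such that $f$ is consistent with a circuit (respectively formula) of size $s$ iff the source has a solution. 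Then $N = 2^{m} = 2^{\Theta(n/\log n)}$, $\log N \asymp n/\log n$, and $\log \log N \asymp \log n$, so $N^{\log \log N} = 2^{\Theta(n)}$ and any $N^{o(\log \log N)}$-time algorithm for $\MCSP^{*}$ or $\MFSP^{*}$ would yield a $2^{o(n)}$-time algorithm for the source, contradicting ETH.

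The heart of the argument is the gadget translating source solutions into minimum circuits or formulas. I would spend the $2^{m}$ input positions as follows: most are $\ast$ (giving the model maximum freedom), and the specified $0$- and $1$-positions encode the source constraints, e.g.\ one specified input per element of the universe and one per set of the collection, suitably addressed by the $m$ bits. The construction must satisfy completeness (a source solution yields an explicit circuit of size $s$) and soundness (any consistent circuit of size at most $s$ can be decoded into a source solution). For $\MFSP^{*}$ the soundness analysis is comparatively clean: a formula of size $s$ is a tree with at most $s$ leaves and cannot share subformulas, so the leaves induce a cover of the specified $1$-positions whose cost matches a set-cover-style optimum. For $\MCSP^{*}$ one needs an additional rigidity argument, typically by padding with independent ``bulky'' subfunctions of known near-optimal complexity, in order to rule out DAG-style gate sharing that would otherwise let the circuit bypass the combinatorial barrier.

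Once the gadget is in place, verifying both directions is essentially bookkeeping: a source witness builds an explicit circuit or formula by concatenating encoding-specific sub-pieces, while an optimal circuit or formula exhibits a combinatorial structure (a cover, partition, or colouring) decodable to a source witness in polynomial time. The main obstacle is the gate-sharing issue in the $\MCSP^{*}$ case: controlling what a small general circuit can do on the $\ast$-positions is delicate, because those positions give the circuit ``hidden capacity'' to amortise gates across unrelated parts of the reduction. The standard way around this is a sparse, widely separated set of specified positions together with bulky, circuit-rigid padding that fills most of the size budget and forces the remaining resources into the intended canonical form. This extra slack is also what separates the ETH-hardness goal from $\NP$-hardness: for ETH it suffices to be quasi-polynomial and to preserve a constant gap, both of which the sparsification and padding machinery readily deliver, whereas exact $\NP$-hardness would require a much more precise threshold control that is not known for general circuits.
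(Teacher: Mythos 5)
The statement you are proving is a cited theorem from Ilango's paper, and the surrounding text sketches Ilango's actual approach. Your proposal is a genuinely different route, and it has a real gap.

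Ilango's proof (as this paper summarizes it) does not reduce from $3$-$\SAT$ or Set Cover on $n$ variables to a function on $\Theta(n/\log n)$ bits. Instead, it reduces from the $(n \times n)$-Bipartite Permutation Independent Set problem $\BPIS$, which Lokshtanov, Marx, and Saurabh showed requires deterministic time $2^{\Omega(n\log n)}$ under ETH; that $n\log n$ in the exponent is exactly the ``extra $\log$'' you are trying to manufacture by shrinking the number of input bits to $n/\log n$. With this choice of source the reduction can afford to output a partial function $\gamma_G$ on the comparatively generous $m = 3n$ input bits, with truth table length $N = 2^{3n}$, so that $N^{o(\log\log N)} = 2^{o(n\log n)}$ directly contradicts the $\BPIS$ lower bound. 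Your parameter arithmetic ($m=\Theta(n/\log n)$ turns a $2^{\Omega(n)}$ source bound into $N^{\Omega(\log\log N)}$) is numerically consistent, but it forces a far more compressed encoding: you would have to pack an $n$-variable $\SAT$ or Set Cover instance into a circuit-minimization question on only $n/\log n$ bits, which is not something known to be achievable. Ilango avoids this by absorbing the compression into the choice of source problem, where the $n!$-sized permutation space naturally supplies the needed $2^{\Theta(n\log n)}$ hardness.

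The second gap is the gadget itself. You describe a generic scheme (``one specified input per element of the universe, one per set, bulky padding to control sharing'') and assert that the leaves of a small formula ``induce a cover of the specified $1$-positions whose cost matches a set-cover-style optimum,'' but you do not justify this, and it is not at all clear such a correspondence holds for general formulas or circuits. Ilango's reduction hinges on a much more precise structural observation: if a formula on $3n$ inputs has at most $3n-1$ gates and must depend on every input bit, it is forced to be a read-once formula, and the constraints encoded in $\gamma_G$ (the cases $\vec{x}=0^n$, $\vec{x}=1^n$, $\vec{z}=0^n$, $\vec{z}=1^n$, plus two half-cases and the edge constraints) force that read-once formula to be $\bigvee_i x_{\pi(i)} \vee (y_i \wedge z_i)$ for some permutation $\pi$, which is then shown to define an independent set in $G$. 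Without the read-once rigidity argument, neither your $\MFSP^*$ nor your $\MCSP^*$ soundness goes through, and ``bulky padding'' on its own does not control what a small circuit can do on the $\ast$-positions. To make your proposal into a proof you would need to either re-derive something like the $\BPIS$ hardness or exhibit a concrete gadget with the same forced structure, and at that point you would essentially be reconstructing Ilango's argument.
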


Ilango's proof of \cref{theorem:Ilango} is a reduction from the $(n \times n)$-Bipartite Permutation Independent Set Problem ($\BPIS$). For now we can think of it as an independent set problem on some special set of graphs. This is an $\NP$-hard graph problem introduced in \cite{LMS18}, where the authors show that any deterministic Turing machine requires $2^{\Omega(n \log n)}$-time\footnote{The input to this problem has $\Theta(n^4 \log n)$ bit size.} to solve it if the Exponential Time Hypothesis holds.
 For an input graph $G$ with vertices labeled by elements of $[n] \times [n]$ Ilango constructs a truth-table of a $3n$-variable partial Boolean function $\gamma_G$, such that
 \begin{enumerate}[label=($\star$)]
     \item $G$ is a {\sc Yes}-instance of $\BPIS$ iff $\gamma_G$ can be computed by a DeMorgan formula with at most $3n-1$ gates. \label{item:last-step}
\end{enumerate}

The key point here is that if we constrain a circuit or a formula to have at most $3n-1$ (inner) gates, its structure becomes simple enough for a very precise analysis. \cite{Ilango20} characterizes the structure of such formulas that compute $\gamma_G$ which would yield the correspondence between such formulas with independent sets in $G$.

Although branching programs are situated between formulas and circuits in computational expressivity \cite{Giel2000}, \cref{theorem:Ilango} does not apply to $\MBPSP^*$, since with a sharp constraint on the size, branching programs may become stronger than circuits. We observe in \cref{prop:oabp-vs-circuits} that this is indeed the case, showing that branching programs are stronger than circuits if we constrain the size parameter of both models to ``minimal viable'' size: $m-1$ inner nodes for a DeMorgan circuit and $m$ non-sink nodes for a $\BP$ for the case of a Boolean function on $m$ input bits.

Nevertheless, we show that the reduction to $\MCSP^{*}$ from \cref{theorem:Ilango} does yield ETH-hardness for $\MBPSP^*$. Indeed, in our proof of \cref{theorem:main-theorem} we only change the part \ref{item:last-step}, keeping the construction of $\gamma_G$. The key part of the proof is to understand what structure a branching program has if it has at most $3n$ (non-sink) nodes. As $\gamma_G$ is sensitive in every input variable, we need to understand a structure of a branching program that has unique variable labels. We call such programs  \textit{once-appearance branching programs} ($\oaBP$). We show that $\gamma_G$ can be computed by a once-appearance branching program iff $G$ is a {\sc Yes}-instance of $\BPIS$.

\subsection{Corollaries}
\subsubsection{Weaker Classes}
$\BP$ is not the only model that collapses to $\oaBP$ when constrained to have at most $n$ non-sink nodes. The same also holds for $\OBDD$, and $k$-$\BP$ for every positive integer $k$. From this we get a family of hardness of minimization results for restricted classes of branching programs.
\begin{corollary}
Every deterministic algorithm computing $\mathcal{C}$-$\MCSP^*$ for $\mathcal{C} \in \{\OBDD, \BP\} \cup \{k\text{-}\BP \mid k \ge 1\}$\footnote{Or any other class of branching programs that collapses to $\oaBP$ on the $n$-node programs.} on inputs of length $N$ requires time $N^{\Omega(\log\log(N))}$ assuming the Exponential Time Hypothesis.
\end{corollary}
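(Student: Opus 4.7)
The plan is to reuse the reduction from $\BPIS$ to $\MBPSP^{*}$ constructed in the proof of \cref{theorem:main-theorem} and argue that it simultaneously reduces $\BPIS$ to $\mathcal{C}\text{-}\MCSP^{*}$ for every class $\mathcal{C}$ listed. Recall that the reduction maps a graph $G$ on the vertex set $[n]\times[n]$ to the truth table of a $3n$-variable partial Boolean function $\gamma_G$, and establishes the equivalence that $G$ is a \yes-instance of $\BPIS$ iff $\gamma_G$ is computed by a $\BP$ with at most $3n$ non-sink nodes, and in fact iff it is computed by an $\oaBP$ of that size. To transfer this equivalence to the classes $\mathcal{C}\in\{\OBDD,\BP\}\cup\{k\text{-}\BP\mid k\ge 1\}$, it suffices to show that every such $\mathcal{C}$ collapses to $\oaBP$ on programs of size at most $3n$ computing a function sensitive in all $3n$ variables.

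The first step is to record the fact (already used inside the proof of \cref{theorem:main-theorem}) that $\gamma_G$ is sensitive in every one of its $3n$ input variables. Given this, the collapse argument is a single pigeonhole: a $\BP$ with at most $3n$ non-sink nodes computing a $3n$-variable sensitive function must carry every variable as a label of some node (otherwise the computed function would not depend on the absent variable), and by pigeonhole each variable appears on exactly one node, yielding an $\oaBP$. The same argument applies verbatim to $\OBDD$ and $k\text{-}\BP$, since both are subclasses of $\BP$. Conversely, any $\oaBP$ is trivially an $\OBDD$ (under any topological order of its node labels, which is well-defined because every variable appears at most once), a $k\text{-}\BP$ for every $k\ge 1$, and a $\BP$. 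So for each $\mathcal{C}$ in the statement,
\[
    G\in\BPIS \iff \gamma_G \text{ has a size-}3n \text{ } \oaBP \iff \gamma_G \text{ has a size-}3n \text{ } \mathcal{C}\text{-program}.
\]

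Consequently, the pair $(\gamma_G,3n)$ — the very output of Ilango-style reduction used to prove \cref{theorem:main-theorem} — is also a valid reduction from $\BPIS$ to $\mathcal{C}\text{-}\MCSP^{*}$, of the same polynomial length $N$. Applying the $2^{\Omega(n\log n)}$-time ETH lower bound for $\BPIS$ of \cite{LMS18} through this reduction yields the desired $N^{\Omega(\log\log N)}$ bound on $\mathcal{C}\text{-}\MCSP^{*}$. The only obstacle is the routine bookkeeping of checking, class by class, that the pigeonhole collapse to $\oaBP$ is not blocked by additional structural constraints (e.g., the fixed variable order of an $\OBDD$); in all three cases this is immediate, because any order is consistent once each variable appears at most once.
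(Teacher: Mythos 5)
Your proposal is correct and follows essentially the same route as the paper: restrict the size parameter to $3n$ so that any $\mathcal{C}$-program computing a total extension of $\gamma_G$ (which must depend on all $3n$ variables) collapses to an $\oaBP$ by the pigeonhole argument, and conversely an $\oaBP$ is already an $\OBDD$, a $k$-$\BP$, and a $\BP$, so the $\BPIS \to \MBPSP^*$ reduction transfers unchanged to each class. The paper states this collapse condition abstractly as a hypothesis on $\mathcal{C}$ and notes it is easy to check for the listed classes; you simply spell out that check more explicitly.
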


\subsubsection{Unconditional Results for a Weaker Model}
In our recent work \cite{GlinskihR22} we employed the framework of Ilango from \cite{Ilango20} to show an unconditional lower bound on the size of the read-once nondeterministic branching program ($\oneNBP$) computing the total version of $\MCSP$. The proof consists of three components.
\begin{enumerate}[noitemsep]
    \item Show that $\BPIS$ unconditionally cannot be computed by a $\oneNBP$ of smaller than $\Omega(n!)$ size.
    \item Show that Ilango's reduction from $\BPIS$ to $\MCSP^*$ can be efficiently encoded using $\BP$. This shows that if $\MCSP^{*}$ on inputs of length $N$ could be computed by a $\oneNBP$ of size $N^{o(\log\log(N))}$, then $\BPIS$ could be computed by a $\oneNBP$ of size $o(n!)$.
    \item Show that $\MCSP$ and $\MCSP^{*}$ have the same $\oneNBP$ complexity up to a linear multiplicative factor. This implies that $\oneNBP$ complexity of $\MCSP$ is at least $N^{\Omega(\log\log(N))}$.
\end{enumerate}
Since the proof of \cref{theorem:main-theorem} follows similar steps as the proof of the \cref{theorem:Ilango}, and analogs of the third property from \cite{GlinskihR22} also hold for $\BP$, $k$-$\BP$, and $\OBDD$, we get similar unconditional $\oneNBP$ lower bounds for total minimization problems in all these models.
\begin{corollary}
    Every $\oneNBP$ computing $\mathcal{C}$-$\MCSP$ for $\mathcal{C} \in \{\OBDD, \BP\} \cup \{k\text{-}\BP \mid k \ge 1\}$ on inputs of length $N$ requires size at least $N^{\Omega(\log\log(N))}$.
\end{corollary}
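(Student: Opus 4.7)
My plan is to follow the three-step framework that \cite{GlinskihR22} used to get the unconditional $\oneNBP$ lower bound for $\MCSP$, replacing the $\MCSP$-specific pieces by the branching-program variants. Concretely, I would first invoke the unconditional $\oneNBP$ lower bound of $\Omega(n!)$ for $\BPIS$ from \cite{GlinskihR22}, which is a purely combinatorial/communication-theoretic statement about the $\BPIS$ language and transfers verbatim. This is the hard lower bound that everything else is reduced to.

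The second step is to observe that the reduction used in the proof of \cref{theorem:main-theorem} — which sends an input graph $G$ of $\BPIS$ to an instance $(\gamma_G, 3n)$ of $\mathcal{C}\text{-}\MCSP^{*}$ — is itself implementable by a small $\BP$, in the sense that each output bit of the truth table of $\gamma_G$ is a simple Boolean function of the edge-variables of $G$ and a small index. Since the construction of $\gamma_G$ is identical to the one in Ilango's reduction used in \cite{GlinskihR22}, the same $\BP$-encoding argument applies without modification; in particular it does not depend on which class $\mathcal{C}$ we target, only on the structure of $\gamma_G$ and on the size parameter $3n$ being given explicitly. Thus any $\oneNBP$ of size $s$ for $\mathcal{C}\text{-}\MCSP^{*}$ on instances of length $N$ can be composed with this small $\BP$-encoded reduction to yield a $\oneNBP$ of size $\poly(n)\cdot s$ for $\BPIS$, and setting $N = \poly(n)$ combined with the $\Omega(n!) = n^{\Omega(n)}$ lower bound from step~1 forces $s \ge N^{\Omega(\log\log N)}$ for $\mathcal{C}\text{-}\MCSP^{*}$.

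The third step, and the genuinely class-dependent one, is to bridge from the partial problem $\mathcal{C}\text{-}\MCSP^{*}$ to the total problem $\mathcal{C}\text{-}\MCSP$, showing that their $\oneNBP$ complexities differ by at most a linear multiplicative factor. For $\mathcal{C} \in \{\OBDD, \BP, k\text{-}\BP\}$ I would repeat the masking/padding argument from \cite{GlinskihR22}: given any partial truth table $f \in \{0,1,*\}^{2^n}$ one constructs, by adding a small number of fresh input bits that switch between $f$ and appropriate ``don't-care filler'' functions, a total truth table $\tilde f$ whose smallest $\mathcal{C}$-representation has size equal (up to an additive constant depending on $\mathcal{C}$) to the smallest $\mathcal{C}$-representation of $f$. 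The essential class-specific check is that this padding preserves membership in $\mathcal{C}$ and that its cost is additive — this holds for $\OBDD$, $\BP$, and $k\text{-}\BP$ because each is closed under prepending/appending a small number of variable tests, so that any $\oneNBP$ oracle call to $\mathcal{C}\text{-}\MCSP$ on the padded instance can be simulated by a $\oneNBP$ for $\mathcal{C}\text{-}\MCSP^{*}$ of comparable size.

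The step I expect to be the most delicate is step~3: one must verify, class by class, that padding $\gamma_G$ to a total function does not blow up the size by more than an additive constant for the model and, crucially, that the simulation of the total oracle by the partial oracle can itself be realized by a $\oneNBP$ (not merely by an arbitrary reduction). All of this is implicit in the $\MCSP$ case of \cite{GlinskihR22}, so the work here is mostly to transcribe and check; once that is done, chaining the three steps yields the claimed $N^{\Omega(\log\log N)}$ $\oneNBP$ lower bound for every $\mathcal{C}$ in the list.
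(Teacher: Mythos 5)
Your proposal follows exactly the paper's three-step decomposition: invoke the unconditional $\oneNBP$ lower bound for $\BPIS$ from \cite{GlinskihR22}, observe that the reduction $\mathcal{R}$ is implementable by a $\oneNBP$ (which transfers verbatim since the function $\gamma_G$ is unchanged from Ilango's construction), and use the linear equivalence of $\oneNBP$ complexities for $\mathcal{C}$-$\MCSP^*$ and $\mathcal{C}$-$\MCSP$. One arithmetic slip to fix: you write ``setting $N = \poly(n)$,'' but the instance length of $\mathcal{C}$-$\MCSP^*$ is $N = \Theta(2^{3n})$ (the truth table of a $3n$-bit function), so $\log\log N = \Theta(\log n)$; converting the $n^{\Omega(n)}$ lower bound on $\BPIS$ into $N^{\Omega(\log\log N)}$ relies on $N$ being exponential in $n$, and the claimed conclusion does not follow from $N=\poly(n)$.
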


\subsubsection{\texorpdfstring{$\NP$}{NP}-hardness of Compressing BPs}
\label{sec:bp-min}
Additionally, we show that given a graph $G$ we can construct a branching program $B$ of polynomial size that exactly computes the partial function  $\gamma_G$. Namely it outputs $\{0,1\}$ values on inputs on which $\gamma_G$ is defined and ${*}$ on all other inputs. Then $B$ can be compressed to an $\oaBP$ which agrees with $\gamma_G$ on its domain iff $G$ is a {\sc Yes}-instance of $\BPIS$. Combining this result with a folklore result of $\coNP$-hardness of compressing a branching program, we get the following result on the hardness of compressing branching programs.

\begin{corollary}
The language of pairs $(B, s)$ where $B$ is a branching program\footnote{The same is true for all sub-classes of branching programs that simulate $4$-$\BP$.} over the alphabet $\{0,1,*\}$ and $s$ is in integer, such that an extension of the partial function described by $B$ can be computed by a $\BP$ of size at most $s$, is $\NP$ and $\coNP$-hard.
\end{corollary}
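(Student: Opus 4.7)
The corollary has two parts that are handled by separate reductions.

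For the NP-hardness direction, the plan is to reduce from $\BPIS$. Given a graph $G$ on $[n]\times[n]$, the first step is to build in polynomial time a branching program $B_G$ over $\{0,1,*\}$ of size $\poly(n)$ whose associated partial function equals Ilango's partial function $\gamma_G$ on $3n$ variables. Because the domain of $\gamma_G$ is an explicitly described subset of $\{0,1\}^{3n}$ of polynomial cardinality (the same subset used inside the reduction underlying \cref{theorem:main-theorem}), $B_G$ can be built as a small decision DAG that first tests domain membership and then emits either the prescribed $\{0,1\}$ output or the symbol~$*$. The second step is to invoke the structural analysis behind \cref{theorem:main-theorem}: setting $s := 3n$, there exists a total extension of the partial function described by $B_G$ that is computable by a $\BP$ of size at most $s$ if and only if some $\oaBP$ on $3n$ non-sink nodes computes such an extension, if and only if $G$ is a \yes-instance of $\BPIS$. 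Since $\BPIS$ is $\NP$-hard, this yields the $\NP$-hardness claim.

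For $\coNP$-hardness I appeal to the folklore result that minimizing a branching program computing a \emph{total} Boolean function is $\coNP$-hard; the standard reduction starts from unsatisfiability and exploits the $\coNP$-completeness of equivalence testing between BPs (one distinguishes, via the chosen size bound $s$, the input BP from a trivial constant BP). Since total Boolean functions are the special case of the partial functions considered here in which no sink is labeled by $*$, this reduction transfers verbatim, giving $\coNP$-hardness of our language.

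The main obstacle is really the first bullet: realizing $\gamma_G$ exactly by a polynomial-size BP over $\{0,1,*\}$ and then verifying that the characterization of small BPs computing \emph{extensions} of $\gamma_G$ (as opposed to $\gamma_G$ itself as a three-valued function) is precisely the one established in the proof of \cref{theorem:main-theorem}. The construction itself is routine once one has the combinatorial description of the domain of $\gamma_G$ from Ilango's reduction: a decision structure of depth $O(n)$ distinguishes the polynomially many defined inputs from all others, and the $0/1/*$ outputs are hard-wired at its leaves. The correspondence with \cref{theorem:main-theorem} should then be immediate, because the analysis there bounds the size of a $\BP$ computing any total extension of $\gamma_G$, which is exactly the quantity governing membership in our compression language; no new lower-bound technique is needed.
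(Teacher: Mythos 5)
Your high-level plan matches the paper's: for $\NP$-hardness, encode $\gamma_G$ as a small branching program over $\{0,1,*\}$ and invoke the structural analysis of small $\BP$s computing extensions of $\gamma_G$; for $\coNP$-hardness, reduce from unsatisfiability via a $\BP$ that is constant iff the input formula is unsatisfiable. Both of these are the moves the paper makes. However, the central justification in your $\NP$-hardness step is wrong in a way that matters.

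You assert that the domain of $\gamma_G$ has polynomial cardinality and therefore a ``routine'' depth-$O(n)$ decision DAG with hard-wired $\{0,1,*\}$ leaves realizes $\gamma_G$. That cardinality claim is false: already the single condition $\vec{x}=0^n$ in the definition of $\gamma_G$ is satisfied by all $2^{2n}$ assignments to $(\vec{y},\vec{z})$, and similarly for $\vec{x}=1^n$, $\vec{z}=1^n$, $\vec{z}=0^n$. So the domain is of size $\Theta(2^{2n})$, and a naive decision structure that ``distinguishes the polynomially many defined inputs'' does not exist. What does exist, and what the paper actually proves (its corollary about representing $\gamma_G$ as a $2$-$\BP$), is a polynomial-size \emph{width-bounded} construction that exploits the structure of $\gamma_G$: reading variables in the order $x_1,y_1,z_1,\dots$, a $1$-$\BP$ of constant width tracks, for each of the pattern conditions \eqref{item:x-is-0}--\eqref{item:y-0-z-0-1}, whether the prefix read so far is still consistent with that pattern (there are only constantly many patterns, so constantly many states per level), and then for each surviving pattern the restricted function is itself computable by a small $1$-$\BP$; separately, the edge condition \eqref{item:edges} versus the ``$*$'' case is recognized by a second pass reading the input in $n/2$-bit chunks and maintaining a linear-size state per chunk, yielding overall a $2$-$\BP$ of polynomial size. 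Without this argument, your proposal does not establish that $B_G$ can actually be built in polynomial time, which is the crux of the $\NP$-hardness claim. The remainder of your $\NP$-hardness argument (that an extension of $\gamma_G$ has a $\BP$ of size $\le 3n$ iff it has an $\oaBP$ iff $G \in \BPIS$) is sound given the paper's structural theorem, and your $\coNP$-hardness sketch is in the same spirit as the paper's concrete reduction from $\overline{(3,4)\text{-}\SAT}$, though you omit the bounded-occurrence restriction the paper uses to obtain the $4$-$\BP$ sharpening mentioned in the footnote.
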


\subsection{Organization}
The structure of the paper is as follows:
\begin{itemize}[noitemsep]
    \item In \cref{section:prelims} we give definitions of the branching programs and various versions of $\MCSP$ we are working with.
    \item In \cref{section:main} we prove our main result: the hardness of $\MBPSP^*$ under ETH (\cref{theorem:main-theorem}).
    \item In \cref{section:corollaries} we sketch the proofs of several corollaries of \cref{theorem:main-theorem}.
    \item  Finally, in \cref{section:future-work} we explain the future directions of this work and possible barriers we see for strengthening our results.
\end{itemize}

\section{Preliminaries}\label{section:prelims}
Throughout this work we denote by $[n]$ the set of $n$ elements $\{1,\dots,n\}$.
Let $a$ and $b$ be partial assignments. If the supports of $a$ and $b$ do not intersect, denote by $a \cup b$ the partial assignment that coincides with $a$ on the support of $a$ and with $b$ on the support of $b$. For a graph $G$ we denote its set of vertices by $V(G)$ and its set of edges by $E(G)$. Throughout the paper a \emph{circuit} is a DeMorgan circuit ($\land$, $\lor$ and $\lnot$ gates) with gates of arity 2. A \emph{formula} is a tree-like DeMorgan circuit. A \emph{read-once formula} is a formula such that each input bit feeds into exactly one gate. E.g. $x \lor (\lnot y \land z)$ is a read-once formula, and $(x \land y) \lor (\lnot x \land z)$ is not. Since every DeMorgan formulas can be transformed in such a way that all $\lnot$-gates are adjacent to the inputs, we identify DeMorgan formulas with trees with \emph{literals} in the leaves and $\land$/$\lnot$ in the inner nodes.

A \emph{Branching Program} ($\BP$) is a form of representation of functions. A $n$-ary function\footnote{We will instantiate this definition for $D = \{0,1\}$ and $D = \{0,1,*\}$} $f\colon D^n \to \{0,1\}$, where $|D|=d$, is represented by a directed acyclic graph with exactly one source and two sinks. Each non-sink node is labeled with a variable; every internal node has exactly $d$ outgoing edges: labeled with elements of $D$. One of the sinks is labeled with $1$ and the other is labeled with $0$. We say that a node \emph{queries} $x$ if it is labeled with a variable $x$. The \emph{size} of a branching program is a number of nodes in it.

The value of the function for given values of the variables is evaluated as follows: we start in the source and then follow the edge labeled with the value of the variable in the source, we repeat this process until the current node is a sink. The label of this sink is the value of the function. We refer to a path that terminates in a $\sigma$-labeled sink as \emph{$\sigma$-path}.

A \emph{Non-deterministic Branching Program} ($\NBP$) is a Branching Program over the alphabet $\{0,1\}$ with additional non-labeled \emph{guessing} nodes with two non-labeled outgoing edges. The value computed by a Non-deterministic Branching Program is $1$ if there exists at least one $1$-path that agrees with the given assignment.

A Branching Program (deterministic or not) is called \emph{read-once} (denoted as $\onebp$ in the deterministic case and $\oneNBP$ in the non-deterministic case) if any path from the source to a sink contains each variable label at most once.

For a partial assignment $\alpha\colon [n] \to \{0,1,*\}$ and a branching program $B$ with $n$ input bits over the alphabet $\{0,1\}$ we write $B|_{\alpha}$ for a branching program obtained by removing every edge of form $x_i \neq \alpha(i)$ and contracting every edge of form $x_i = \alpha(i)$ if $\alpha(i) \in \{0,1\}$. It is easy to see that if $B$ computes the function $f$, then $B|_{\alpha}$ computes the function $f|_{\alpha}$.

The \textsc{Minimum Branching Program Size Problem} ($\MBPSP(f,s)$) gets as input the truth-table of a Boolean function $f\colon \{0,1\}^n \to \{0,1\}$ of length $N=2^n$ and a size parameter $s$, and outputs $1$ if there exists a branching program of size at most $s$ that computes $f$. We use notation $\MBPSP_{s'(n)}(f) = \MBPSP(f, s'(n))$.

The \textsc{Partial Minimum Branching Program Size Problem} ($\MBPSP^{*}(f^{*},s)$) gets as input the partial truth-table of a function $f^{*}\colon \{0,1\}^n \to \{0,1, *\}$ of length $N=2^n$ and a size parameter $s$, and outputs $1$ if there exists a substitution of each $*$ in the truth-table to a $0/1$, transforming $f^{*}$ to a Boolean function $f$, such that $\MBPSP(f,s)=1$.
We use notation $\MBPSP^*_{s'(n)}(f) = \MBPSP^*(f, s'(n))$.

\subsection{Once-Appearance Branching Programs}
A \emph{once-appearance branching program} ($\oaBP$) is a branching program with distinct node labels. It is easy to see that for a function depending on $n$ variables such $\oaBP$ representation contains at most $n$ non-sink nodes.
For $\oaBP$ we identify the non-sink nodes with their labels, and identify the edges with equalities of form $x = \alpha$ for a variable $x$ and $\alpha$ from the alphabet.

It is easy to see, that $\oaBP$s are strictly stronger than read-once DeMorgan formulas.
\begin{proposition}
\label{prop:de-morgan-to-oabp}
    If $f$ is computable by a read-once DeMorgan formula, then it is computable by an $\oaBP$.
\end{proposition}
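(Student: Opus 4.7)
My plan is to proceed by structural induction on the read-once DeMorgan formula $\varphi$ computing $f$.

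For the base case, $\varphi$ is a single literal. If $\varphi = x_i$, the $\oaBP$ has a single non-sink node labeled $x_i$ with its $1$-edge pointing to the $1$-sink and its $0$-edge to the $0$-sink; if $\varphi = \lnot x_i$, we swap the sinks. In both cases the unique non-sink node trivially has a unique label.

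For the inductive step, suppose $\varphi = \varphi_1 \land \varphi_2$ (the $\lor$ case is symmetric). Because $\varphi$ is read-once, the variable sets of $\varphi_1$ and $\varphi_2$ are disjoint. By induction, there exist $\oaBP$s $B_1$ and $B_2$ computing $\varphi_1$ and $\varphi_2$, respectively, whose label sets are exactly $\mathrm{vars}(\varphi_1)$ and $\mathrm{vars}(\varphi_2)$. Construct $B$ by taking the disjoint union of $B_1$ and $B_2$, identifying the $1$-sink of $B_1$ with the source of $B_2$, and merging the corresponding sinks (the $0$-sink of $B_1$ with the $0$-sink of $B_2$, and using the sinks of $B_2$ as the final sinks). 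A source-to-sink computation in $B$ evaluates $\varphi_1$ first, and only proceeds into $B_2$ along the path corresponding to $\varphi_1 = 1$, yielding exactly $\varphi_1 \land \varphi_2$. Since $\mathrm{vars}(\varphi_1) \cap \mathrm{vars}(\varphi_2) = \emptyset$, the combined node labels are still distinct, so $B$ is an $\oaBP$. The $\lor$ case is analogous, gluing the $0$-sink of $B_1$ to the source of $B_2$.

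This case analysis exhausts all constructors of the formula (negations are already absorbed into the literals at the leaves), completing the induction. There is no real obstacle here: the read-once property of $\varphi$ is exactly what guarantees the disjointness of variable sets needed for the label-uniqueness condition of $\oaBP$s to be preserved when composing the two sub-programs at a sink.
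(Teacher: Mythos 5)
Your proof is correct and matches the paper's argument: both proceed by structural induction on the read-once formula, handle literals as the base case, and glue two sub-$\oaBP$s by identifying the $1$-sink (respectively $0$-sink) of the first with the source of the second to realize $\land$ (respectively $\lor$), with read-onceness guaranteeing disjoint label sets. The paper phrases the gluing as "redirecting edges to the sink", but this is the same construction.
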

\begin{proof}
    We first observe that if $g$ and $h$ are computable by $\oaBP$s, then so are $g\land h$ and $g \lor h$ whenever $g$ and $h$ have disjoint set of variables. Indeed $\oaBP$ for $g\land h$ is constructed by redirecting all edges going to the $1$-in the $\oaBP$ computing $g$ to the source of the $\oaBP$ computing $h$, and by redirecting all edges to the $0$-sink in $g$ to the $0$-sink of $h$. For $g \lor h$ we redirect all edges to the $0$-sink of to the source of the $\oaBP$ for $h$, and all edges to the $1$-sink to the $1$-sink of $h$. Now to finish the proof observe that all literals $\{x_1, \lnot x_1, \dots, x_n, \lnot x_n\}$ are trivially computable by an $\oaBP$.
\end{proof}
\begin{proposition}
\label{prop:oabp-vs-circuits}
    Let $g(x,y,z) \coloneqq (y \land x) \lor (z \land \lnot x)$.
    Then $g$ is computable by an $\oaBP$ but is is not computable by a read-once DeMorgan formula.
\end{proposition}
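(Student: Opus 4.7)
The plan is to handle the two claims separately and directly.

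For the positive direction, I would exhibit an explicit $\oaBP$ computing $g$. Let the source query $x$. On the edge $x=1$ proceed to a node labeled $y$, and on $x=0$ proceed to a node labeled $z$. From the $y$-node route $y=1$ to the $1$-sink and $y=0$ to the $0$-sink; symmetrically from the $z$-node route $z=1$ to the $1$-sink and $z=0$ to the $0$-sink. Every variable appears on exactly one node, so this is a valid $\oaBP$, and an easy trace shows it outputs $1$ exactly when $(y\land x)\lor(z\land\lnot x) = 1$.

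For the negative direction I would argue by direct case analysis on the top gate of a hypothetical read-once DeMorgan formula $F$ computing $g$. After pushing negations to the leaves, $F$ is a tree whose three leaves are literals over $x,y,z$, each variable appearing exactly once. The top gate is therefore either $\land$ or $\lor$, and since $F$ has two binary internal gates, one of the two subtrees at the root is a single literal $\ell$ and the other is a subformula on the remaining two variables. So $F = \ell \star T$ with $\star\in\{\land,\lor\}$ and $\ell$ a literal over $\{x,y,z\}$; there are $12$ cases.

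In each case, setting the variable of $\ell$ so that $\ell$ becomes the absorbing value for $\star$ (namely $0$ for $\land$ and $1$ for $\lor$) would force $g$ to be constant on the resulting restriction. I would then check that in each of the $12$ cases the actual restriction of $g$ is non-constant. For instance, if $\star=\land$ and $\ell = x$, then $g$ restricted to $x=0$ equals $z$, which is non-constant; if $\ell = \lnot y$, then $g$ restricted to $y=1$ equals $x\lor z$, again non-constant; and so on. Every one of the twelve cases yields a contradiction, so no read-once DeMorgan formula computes $g$.

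There is no real obstacle here; the only care needed is keeping the case analysis organized. A compact way to present it is a table with rows indexed by the literal $\ell$ and columns indexed by $\star$, where each cell records the explicit non-constant restriction of $g$ that rules out that case. The positive direction is essentially a picture, so the bulk of the writing goes into the twelve-case verification.
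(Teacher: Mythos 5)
Your proposal is correct, and both halves work, but the negative direction takes a genuinely different route from the paper. The paper's argument avoids case analysis entirely: it observes that $G|_{y=0,z=1}$ computes $\lnot x$, which forces the (unique) $x$-leaf of the read-once formula to be the literal $\lnot x$ (since with $y,z$ fixed the formula is a monotone combination of the remaining $x$-leaf with constants), while $G|_{y=1,z=0}$ computes $x$ and forces that same leaf to be the literal $x$ — an immediate contradiction. Your approach instead fixes the structural form $F = \ell \star T$ with $\ell$ a literal and $T$ a two-variable subformula, and kills each of the twelve $(\ell,\star)$ pairs by plugging in the absorbing value of $\ell$ and checking that the corresponding restriction of $g$ is non-constant; I verified all twelve restrictions and they are indeed non-constant, so the argument is sound. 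The paper's version is shorter and scales cleanly (it is the standard ``conflicting polarity'' trick), whereas yours is more mechanical but has the minor virtue of being entirely local — no appeal to monotonicity of the restricted formula is needed, only the absorbing-value observation. Both rest on the same preliminary fact that a read-once formula computing a function depending on all three variables has exactly three literal leaves and hence a literal as one child of the root; your write-up states this explicitly, which is a small plus. One presentational note: you can halve the twelve cases by noting that $\ell$ with $\star=\land$ behaves identically to $\lnot\ell$ with $\star=\lor$ (the absorbing assignment for the variable is the same), which brings your table down to the six restrictions $g|_{x=0}=z$, $g|_{x=1}=y$, $g|_{y=0}=z\land\lnot x$, $g|_{y=1}=x\lor z$, $g|_{z=0}=y\land x$, $g|_{z=1}=y\lor\lnot x$, all visibly non-constant.
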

\begin{proof}
    The $\oaBP$ for $g$ is a decision tree with the root $x$, the edge $x=0$ going to $z$, the edge $x=1$ going to $y$, the edge $y=\alpha$, $z=\alpha$ going to the $\alpha$-sink for $\alpha \in \{0,1\}$.

    Consider a read-once formula $G$ computing $g$. Then $G|_{y=0, z=1}$ computes a function $\lnot x$, so $G$ must have a leaf  $\lnot x$. But $G|_{y=1,z=0}$ computes a function $x$, so $G$ must have a leaf $x$. That contradicts the read-once property of $G$.
\end{proof}
We note that this example can be extended to a function that is sensitive in an arbitrary number of variables, say for $g(x,y,z) \lor t_1 \dots \lor t_n$.

\subsection{Bipartite Permutation Independent Set Problem}
In \cite{Ilango20} Ilango showed that Partial $\MCSP$ is ETH-hard. The main idea of the proof is to show a reduction from the $(n \times n)$-Bipartite Permutation Independent Set problem that was previously shown to be ETH-hard in \cite{LMS18}. In this section and later we assume that $n$ is even.

In {\sc $(n \times n)$-Bipartite Permutation Independent Set problem} ($\BPIS$) we are given the adjacency matrix of an undirected graph $G$ over the vertex set $[n] \times [n]$ where every edge is between the sets of vertices $J_1=[n/2]^2$ and $J_2=\{n/2+1, \dots, n\}^2$. A graph $G$ is a {\sc Yes}-instance iff it contains an independent set $S \subseteq J_1 \cup J_2$ of size $n$ such that the coordinates of vertices in $S$ define a permutation of $[n]$, i.e. $\forall i \in [n] \: \exists j,k \in [n]\colon v=(i,j), w=(k,i), v, w \in S$.

Note that a permutation of $[n]$ corresponding to a {\sc Yes}-instance of $\BPIS$ can be viewed as two permutations on disjoint $n/2$-element subsets. One permutation is defined by $n/2$ vertices from $J_1$ (corresponding to a permutation of elements $[n/2]$), and another by $n/2$ vertices from $J_2$ (corresponding to a permutation of elements $\{n/2+1,\dots,n\}$). We use each of these interpretations interchangeably.

\begin{theorem}[\cite{LMS18}]
\label{theorem:eth-hardness-of-bpis}
{\sc $(n \times n)$-BPIS} cannot be solved in deterministic time $2^{o(n\log{n})}$ unless ETH fails.
\end{theorem}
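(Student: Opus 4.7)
The target lower bound $2^{\Omega(n \log n)}$ matches the brute-force enumeration of the $\sim n!$ candidate permutations, so a natural reduction should encode an ETH-hard problem into a permutation-selection task on the $n \times n$ grid, paying exactly a $\log n$ compression in the parameter. My plan is to start from $3$-SAT on $N$ variables with $O(N)$ clauses (via the Impagliazzo--Paturi--Zane Sparsification Lemma, so that ETH still yields a $2^{\Omega(N)}$ lower bound) and set $n := \Theta(N / \log N)$, so that $n \log n = \Theta(N)$ and a $2^{o(n \log n)}$ algorithm for $\BPIS$ would give a $2^{o(N)}$ algorithm for $3$-SAT, contradicting ETH.

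\textbf{Construction.} I would partition the variables of the $3$-SAT instance into $n$ blocks of size $\Theta(\log n)$, so that each local assignment to a block is one of at most $n$ options. Introduce a vertex $(i, j)$ to represent ``block $i$ receives local assignment $j$'', distributing the blocks between $J_1$ and $J_2$ so that the two sub-permutations appearing in a \textsc{Yes}-instance correspond to the two halves of the variable set. Place an edge between $(i, j) \in J_1$ and $(i', j') \in J_2$ whenever the two corresponding local assignments jointly falsify some clause. An independent set of the required shape then forces one local assignment per block (by the first-coordinate permutation) and encodes a globally consistent, clause-satisfying assignment (by the independence constraint on cross-edges). To absorb the second-coordinate permutation requirement, I would use a padding/duplication trick: replace each local assignment by $n$ symmetric copies indexed by the second coordinate and design the gadget so that Hall's theorem guarantees a completion to a permutation whenever the first-coordinate choices are already consistent.

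\textbf{Main obstacle.} The delicate part is combining the bipartite edge restriction with the two-sided permutation requirement without spoiling the $\log n$-factor compression. The gadget implementing the second-coordinate condition must not introduce cross-edges that accidentally forbid otherwise-valid independent sets, and the whole construction must keep the $\BPIS$ input size polynomial in $n$ while faithfully encoding each clause via bipartite cross-edges only. I expect this gadget design, together with the bookkeeping that keeps the $\log n$ compression tight, to be the main technical burden; the rest of the argument is a standard ``if the output of the reduction has a permutation independent set then the source instance is satisfiable, and conversely'' verification.
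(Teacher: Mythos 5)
This theorem is not proved in the paper at hand; it is quoted verbatim from \cite{LMS18}, so there is no in-paper proof to compare against. Your high-level plan does capture the standard ``slightly superexponential'' template that \cite{LMS18} uses: sparsify the source instance so that $3$-SAT (or, as \cite{LMS18} actually use, $3$-Colouring) remains $2^{\Omega(N)}$-hard, group the $N$ variables into $n = \Theta(N/\log N)$ blocks of size $\Theta(\log n)$ so that each block has roughly $n$ local states, and identify grid vertex $(i,j)$ with ``block $i$ is in state $j$''. That gives the right parameter trade-off $n\log n = \Theta(N)$.

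The gap is the permutation requirement, which you dispose of with a one-sentence appeal to ``a padding/duplication trick'' and Hall's theorem. That is exactly the part that carries the technical weight of the theorem, and the sketch you give does not work. Duplicating each local assignment into $n$ indexed copies either blows up the grid side-length (destroying the $n\log n$ budget) or, if the grid stays $n\times n$, leaves you with a graph in which two blocks that need the same local state have no legal way to occupy distinct columns; Hall's condition will simply fail for generic instances. In \cite{LMS18} the column-permutation constraint is handled by a genuinely different idea: one first reduces to the grid problem \emph{without} the permutation constraint, and then argues that, after independently re-indexing the columns within each row, a fixed valid one-per-row solution becomes a permutation with probability $k!/k^k = 2^{-O(k)}$. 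Repeating over $2^{O(k)}$ re-indexings (and derandomising via an explicit family of re-indexings constructed by splitter/$k$-restriction techniques) absorbs this loss inside the allowed $2^{o(k\log k)}$ budget. Your proposal has no analogue of this step, so the ``converse'' direction (a permutation IS exists whenever the source instance is satisfiable) does not go through. Finally, note that in $\BPIS$ all edges must run between $J_1 = [n/2]^2$ and $J_2 = \{n/2{+}1,\dots,n\}^2$; with a naive block-to-side assignment, clauses whose blocks fall on the same side of the partition cannot be encoded as cross-edges at all, and you do not address how such clauses are enforced.
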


Ilango in \cite{Ilango20} gave a $2^{O(n)}$-time reduction from {\sc $(n \times n)$-BPIS} to $\MCSP$*, hence, showing that $\MCSP$* cannot be solved in deterministic time $2^{o(n\log{n})}$ unless ETH fails.

\section{Proof of the Main Theorem}\label{section:main}
In this section we show that the reduction Ilango constructed from ETH-hard problem $\BPIS$ to a formula minimization problem holds for the $\BP$ minimization problem as well. More precisely we show:
\begin{theorem}
\label{theorem:reduction-from-bpis}
Suppose that $\MBPSP^{*}\colon\{0,1, *\}^{2^n} \times [2^n] \to \{0,1\}$ is in $\TIME[N^{o(\log \log (N)}]$ where $N=2^n$, then the Exponential Time Hypothesis is false.
\end{theorem}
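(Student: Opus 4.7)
The plan is to adapt Ilango's reduction from $(n \times n)$-$\BPIS$ to $\MCSP^*$ into one from $(n \times n)$-$\BPIS$ to $\MBPSP^{*}$, reusing the instance-construction $G \mapsto \gamma_G$ verbatim but replacing the analysis of small DeMorgan formulas by an analysis of small branching programs. Given $G$, we compute Ilango's partial $3n$-variable function $\gamma_G \colon \{0,1\}^{3n} \to \{0,1,*\}$ in time $2^{O(n)}$, and query the $\MBPSP^*$ oracle with truth-table length $N = 2^{3n}$ and size parameter $s = 3n$. The claim to prove is the identity
\begin{equation*}
G \text{ is a YES-instance of } (n\times n)\text{-}\BPIS \iff \MBPSP^{*}_{3n}(\gamma_G) = 1.
\end{equation*}
Given this, $\MBPSP^*$ in time $N^{o(\log\log N)}$ yields a deterministic algorithm for $\BPIS$ running in time $2^{O(n)} + 2^{3n \cdot o(\log\log 2^{3n})} = 2^{o(n \log n)}$, contradicting \cref{theorem:eth-hardness-of-bpis}.

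For the forward direction I would take an independent set/permutation certifying that $G$ is a YES-instance and, using the gadget semantics of $\gamma_G$, build an explicit once-appearance branching program on the $3n$ variables realizing $\gamma_G$. By \cref{prop:de-morgan-to-oabp} any read-once DeMorgan formula witnessing a YES-instance in Ilango's proof already converts into an $\oaBP$ of size $3n$, so the ``easy'' direction essentially comes for free from the formula construction of \cite{Ilango20}.

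The hard direction, and the real heart of the proof, is a structural analysis: any branching program $B$ of size at most $3n$ computing an extension of $\gamma_G$ must (i) be an $\oaBP$, and (ii) have a shape that we can decode into an independent set/permutation in $G$. Step (i) follows from the fact that $\gamma_G$ is sensitive in every one of its $3n$ variables (this is where we use that the range contains $\{0,1\}$ on enough inputs): each sensitive variable must label at least one non-sink node of $B$, so with only $3n$ non-sink nodes each variable label is used exactly once. Step (ii) is the main technical obstacle, since \cref{prop:oabp-vs-circuits} shows that $\oaBP$s are strictly stronger than read-once formulas, so we cannot directly invoke Ilango's characterization. My plan is to re-execute Ilango's gadget analysis for $\gamma_G$ directly in the $\oaBP$ model: group the $3n$ variable labels of $B$ by which gadget of $\gamma_G$ they belong to, use the partial-assignment restriction $B|_\alpha$ together with the observation that restricting an $\oaBP$ by a literal on its unique label yields an $\oaBP$ of smaller size, and read off, from the pattern of which variables appear along $1$-paths and $0$-paths, the permutation and independent-set data certifying that $G$ is a YES-instance.

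Finally I would verify that the size parameter $3n$ and the $2^{O(n)}$-time construction are preserved by the reduction, and combine everything with \cref{theorem:eth-hardness-of-bpis} to obtain the claimed ETH-hardness. The main obstacle throughout is step (ii) of the structural argument: because $\oaBP$s can simulate multiplexers (as in \cref{prop:oabp-vs-circuits}) that no read-once formula can, the combinatorial case analysis on $3n$-node $\oaBP$s will be genuinely more delicate than Ilango's formula argument and will consume most of the proof.
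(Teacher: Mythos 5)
Your proposal follows essentially the same route as the paper: reuse Ilango's $\gamma_G$ with size bound $3n$, observe that full sensitivity of $\gamma_G$ forces any $3n$-node program to be a once-appearance $\BP$, take the easy direction from \cref{prop:de-morgan-to-oabp}, and then carry out a structural analysis of $\oaBP$s computing $\gamma_G$ (grouping labels, using restrictions $B|_{\vec{x}=0^n}$, $B|_{\vec{x}=1^n}$, $B|_{\vec{z}=0^n}$, $B|_{\vec{z}=1^n}$ to read off a permutation and independent set). You correctly flag the structural step (ii) as the real obstacle — indeed the paper spends most of its effort there (the characterization lemma about consecutive $xyz$-triplets in the topological order, a forbidden-subsequence lemma, and a ``need two ones to satisfy'' helper lemma), and you are right that the multiplexer phenomenon from \cref{prop:oabp-vs-circuits} is exactly why the formula argument cannot be invoked directly.
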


Let us recall the reduction $\mathcal{R}\colon \{0,1\}^{\binom{n \times n}{2}} \to \{0,1,*\}^{2^{3n}} \times [2^{3n}]$ from \cite{Ilango20}\footnote{We ignore all the edges except for ones between the sets $[n/2] \times [n/2]$ and $\{n/2+1,\dots n\} \times \{n/2+1,\dots,n\}$, the second argument is represented in binary form.}.
Let $G = ([n] \times [n], E)$ be an instance of $\BPIS$. Then the reduction outputs the pair $(t,3n-1)$ where $t$ is the truth table of a partial function $\gamma_{G}\colon \{0,1\}^{n} \times \{0,1\}^{n} \times \{0,1\}^{n} \to \{0,1,*\}$ defined as

\newsavebox{\gammadef}
\sbox{\gammadef}{%
\parbox{\textwidth}{%
\begin{empheq}[left = {\gamma_{G}(\vec{x},\vec{y},\vec{z}) = \empheqlbrace}]{align}
       &\textstyle\bigvee_{i \in [n]} (y_i \land z_i), &&\quad \text{if } \vec{x}=0^{n} \label{item:x-is-0} \\
       & \textstyle\bigvee_{i \in [n]} z_i, &&\quad \text{if } \vec{x}=1^{n} \label{item:x-is-1} \\
       & \textstyle\bigvee_{i \in [n]} (x_i \lor y_i), &&\quad \text{if } \vec{z}=1^{n} \label{item:z-is-1} \\
       & 0, &&\quad \text{if } \vec{z}=0^{n} \label{item:z-is-0}\\
       & \textstyle\bigvee_{i=1}^{n/2} x_i &&\quad \text{if } \vec{z}=1^{n/2}0^{n/2} \text{ and } y=0^{n} \label{item:y-0-z-1-0}\\
       & \textstyle\bigvee_{i=n/2+1}^{n} x_i &&\quad \text{if } \vec{z}=0^{n/2}1^{n/2} \text{ and } y=0^{n} \label{item:y-0-z-0-1} \\
       & 1 &&\quad \text{if } \exists ((j,k), (n + j',n + k')) \in E\colon (\vec{x},\vec{y},\vec{z}) = (\overline{e_k e_{k'}}, 0^{n}, e_j e_{j'}) \label{item:edges} \\
       & * &&\quad \text{otherwise}
    \end{empheq}}}
\noindent\usebox{\gammadef}
where $\vec{x}=(x_1,\dots,x_{n})$, $\vec{y}=(y_1,\dots,y_{n})$, $\vec{z}=(z_1,\dots,z_{n})$, and $e_i \in \{0,1\}^{n/2}$ is the vector with 1 in the $i$th entry and zeroes in all the others, and $\overline{e_i}$ is such that $e_i + \overline{e_i} = 1^{n/2}$. In the rest of the proof, we use $e_i$ as a vector with the $i$th coordinate equal to $1$ and all the rest equal to zero with the appropriate dimension.

The intuition behind this reduction is that for every read-once formula, computing this function has to group triples of bits $x_i, y_j,$ and $z_k$ such that $j=k$, and $i \to j$ encodes that an element $i$ maps to an element $j$ in the permutation. The cases \eqref{item:x-is-0}-\eqref{item:z-is-0} ensure that the read-once formula encodes some permutation on $[n]$. The cases \eqref{item:y-0-z-1-0} and \eqref{item:y-0-z-0-1} ensure that the permutation maps the first $n/2$ elements into the first $n/2$ elements and the last $n/2$ elements into the last $n/2$ elements. And, finally, the case \eqref{item:edges} enforces that for every edge in the original graph, the permutation defined by the read-once formula contains at most one endpoint of this edge. Note, that every endpoint or vertex chosen in the independent set encodes a mapping of one element of $[n]$ to one element of $[n]$ in the permutation.

As was shown in \cite{Ilango20}, this reduction maps {\sc Yes}-instances of BPIS to {\sc Yes}-instances of $\MCSP^{*}_{s=m-1}$ where $m=3n$ is the number of input bits of $\mathcal{R}(G)$:
\begin{theorem}[\cite{Ilango20}]
\label{theorem:reduction-correctness-circuits}
For the reduction $\mathcal{R}$ as above  $G \in \BPIS$ iff $\mathcal{R}(G) \in \MCSP_{s=m-1}^{*}$.
\end{theorem}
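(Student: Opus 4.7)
The plan is to handle the two implications separately, using the extremal size bound $s = 3n-1$ to force a rigid structure.

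For the forward direction, suppose $G$ is a \yes-instance with independent set defining a permutation $\pi\colon [n] \to [n]$ that preserves $[n/2]$ and $\{n/2+1,\dots,n\}$. I would exhibit the read-once DeMorgan formula
\[
F_\pi \;=\; \bigvee_{i \in [n]} \bigl((x_{\pi(i)} \lor y_i) \land z_i\bigr),
\]
which uses $n$ inner ORs, $n$ ANDs, and $n-1$ outer ORs, for $3n-1$ gates total. Checking that $F_\pi$ agrees with $\gamma_G$ on all defined inputs is a case-by-case substitution: \eqref{item:x-is-0}--\eqref{item:z-is-0} hold for any $\pi$; \eqref{item:y-0-z-1-0} and \eqref{item:y-0-z-0-1} hold because $\pi$ preserves the two halves; and \eqref{item:edges} reduces to $F_\pi = x_{\pi(j)} \lor x_{\pi(n/2+j')}$, which must evaluate to $1$ since the edge $((j,k),(n/2+j',n/2+k'))$ forbids $\pi(j)=k$ and $\pi(n/2+j')=n/2+k'$ from holding simultaneously in an independent set.

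For the backward direction, I first observe that $\gamma_G$ is sensitive in each of its $3n$ variables (witnesses come from cases \eqref{item:x-is-0}, \eqref{item:x-is-1}, and \eqref{item:z-is-1}), so any DeMorgan circuit $C$ of size at most $3n-1$ computing an extension of $\gamma_G$ must read every variable at least once. A fan-in-$2$ circuit with $g$ gates has $2g$ input wires; at most $g-1$ of them can originate from other gates, so at least $g+1$ come from variables. Setting $g \le 3n-1$ and demanding $\ge 3n$ variable occurrences forces $g = 3n-1$, each variable to appear as exactly one leaf, and each non-root gate to have fan-out exactly $1$. Therefore $C$ is a read-once DeMorgan formula on $3n$ leaves; using that $\gamma_G$ is monotone on its domain one further discards negations.

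The main work is then extracting the permutation from this read-once formula $F$. From $F|_{\vec{z}=0^n} \equiv 0$ together with $F|_{\vec{x}=1^n} = \bigvee_i z_i$, a read-once-monotone argument forces $F$ to decompose as an outer $\bigvee_{i \in [n]} F_i$ with each $F_i$ containing a distinct $z_i$ on an AND-path to its root. Then $F|_{\vec{x}=0^n} = \bigvee_i (y_i \land z_i)$ pairs each $z_i$ with a unique $y_{\tau(i)}$, and $F|_{\vec{z}=1^n} = \bigvee_i (x_i \lor y_i)$ attaches a unique $x_{\pi(i)}$, yielding $F_i = (x_{\pi(i)} \lor y_{\tau(i)}) \land z_i$; relabeling absorbs $\tau$ into the identity. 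Cases \eqref{item:y-0-z-1-0} and \eqref{item:y-0-z-0-1} then force $\pi$ to preserve the two halves, and case \eqref{item:edges} forces $\{(i,\pi(i))\}_{i \in [n]}$ to be an independent set. The main obstacle is this rigidity step: read-once formulas satisfying a handful of slice-restriction identities are not a priori unique, and ruling out alternative configurations (a $z_i$ placed inside an OR, or variables crossing between subtrees) requires careful use of the read-once property together with the sensitivity of $\gamma_G$. This is precisely where the tight size budget $s = m-1$ pays off, and is also the step that later needs to be re-proved with ``read-once formula'' replaced by ``$\oaBP$'' in order to lift the reduction from $\MFSP^{*}$ to $\MBPSP^{*}$.
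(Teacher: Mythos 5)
The paper gives no proof of this statement; it is cited from~\cite{Ilango20}, and the only internal hint is equation~\eqref{eq:read-once-formula}. Your outline follows Ilango's: exhibit a read-once formula for the {\sc Yes} case, and in the converse use the tight size budget plus full sensitivity to collapse a size-$(3n-1)$ circuit to a read-once monotone formula, then read a permutation off it and check cases \eqref{item:y-0-z-1-0}--\eqref{item:edges}. Notably, your witness formula $\bigvee_i\bigl((x_{\pi(i)}\lor y_i)\land z_i\bigr)$ is the correct one; the paper's display~\eqref{eq:read-once-formula} as literally written, $\bigvee_i x_{\pi(i)}\lor(y_i\land z_i)$, does not restrict to $\bigvee_i z_i$ on $\vec{x}=1^n$ nor to $0$ on $\vec{z}=0^n$ and is evidently a typo, so you have silently corrected an error in the source text.

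The one step that, as literally stated, fails is the gate count. You claim a fan-in-2 circuit with $g$ gates has \emph{at most} $g-1$ input wires originating from other gates, hence at least $g+1$ from variables. That bound holds only for tree-like circuits, i.e.\ for the class you are trying to prove the circuit lies in; a general DAG circuit can feed one gate's output into many others. The inequality runs the other way: each of the $g-1$ non-output gates must have fan-out $\ge 1$, so \emph{at least} $g-1$ of the $2g$ wires originate from gates, hence \emph{at most} $g+1$ from variables. Combined with ``every one of the $3n$ variables is read'' this gives $3n \le g+1$, so $g \ge 3n-1$, and equality with the budget $g\le 3n-1$ pins down a read-once circuit in which every fan-out is one, i.e.\ a read-once formula. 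Your conclusion is correct, but the inequality is inverted, and the proof as written would not go through. Two smaller hand-waves also deserve a mention: the monotonicity step (``one further discards negations'') requires a separate argument that $\gamma_G$ extends to a monotone total function, and the claim that $\tau$ may be ``absorbed by relabeling'' is not quite right --- the constraint imposed by the slice $\vec{x}=0^n$ actually forces $\tau$ to be the identity, which needs to be observed explicitly.
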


To prove \cref{theorem:reduction-correctness-circuits}, Ilango shows that $\gamma_G$ extends to a total function computable by a \textit{read-once monotone formula} iff $G$ is a {\sc Yes}-instance of $\BPIS$. In our work we show that a similar statement holds for \textit{once-appearance branching programs} ($\oaBP$).

\begin{theorem}
\label{theorem:reduction-correctness-oaBP}
For the reduction $\mathcal{R}$ as above $G \in \BPIS$ iff $\mathcal{R}(G)$ can be computed by $\oaBP$.
\end{theorem}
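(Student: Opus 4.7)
The plan is to mirror Ilango's analysis of read-once monotone formulas computing $\gamma_G$ and translate it to the (slightly more expressive) $\oaBP$ model while reusing the same reduction $\mathcal{R}$ and the same partial function. For the easy direction, assume $G\in\BPIS$ is witnessed by an independent set $S\subseteq J_1\cup J_2$ whose coordinates define a permutation $\pi$ of $[n]$ preserving the bipartition $[n/2]\sqcup\{n/2+1,\dots,n\}$. I will consider the read-once DeMorgan formula
\[
F_\pi \;:=\; \bigvee_{j \in [n]} \bigl((x_{\pi(j)} \lor y_j) \land z_j\bigr).
\]
A direct case check will show $F_\pi$ agrees with $\gamma_G$ on every defined input: cases \eqref{item:x-is-0}--\eqref{item:z-is-0} use only that $\pi$ is a permutation; cases \eqref{item:y-0-z-1-0}--\eqref{item:y-0-z-0-1} use that $\pi$ preserves the bipartition; and case \eqref{item:edges} uses independence of $S$, since an edge forcing both $\pi(j)=k$ and $\pi(n/2+j')=n/2+k'$ would make $F_\pi$ evaluate to $0$ on an input where $\gamma_G=1$. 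Because $F_\pi$ is read-once, \cref{prop:de-morgan-to-oabp} then turns it into the desired $\oaBP$.

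For the hard direction, let $B$ be any $\oaBP$ computing some extension of $\gamma_G$. Since cases \eqref{item:x-is-0}--\eqref{item:z-is-1} force $\gamma_G$ to be sensitive in each of its $3n$ variables, $B$ must query every variable, so it has exactly $3n$ non-sink nodes. The next step is a structural analysis based on the natural restrictions of $\gamma_G$: $B|_{\vec z=0^n}$ computes the constant $0$, $B|_{\vec z=1^n}$ computes $\bigvee_i (x_i\lor y_i)$, $B|_{\vec x=0^n}$ computes $\bigvee_i (y_i\land z_i)$, and $B|_{\vec x=1^n}$ computes $\bigvee_i z_i$. From these I plan to extract a rigid ``triple'' decomposition of $B$ in which each $z_j$-node controls a unique $y_j$-node and a unique $x_{\pi(j)}$-node, yielding a bijection $\pi\colon[n]\to[n]$. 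Cases \eqref{item:y-0-z-1-0} and \eqref{item:y-0-z-0-1} will then force $\pi([n/2])=[n/2]$, and case \eqref{item:edges} will rule out $\pi$ containing both endpoints of any edge of $G$, exhibiting $\{(j,\pi(j)) : j\in[n]\}$ as the required independent set.

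The main obstacle will be establishing the ``triple'' decomposition. Unlike read-once formulas, an $\oaBP$ is a DAG and may share intermediate nodes across branches; indeed \cref{prop:oabp-vs-circuits} shows that $\oaBP$s are strictly more powerful than read-once formulas on functions sensitive in all variables, so genuinely new combinatorial work is needed to rule out non-trivial sharing between the $n$ triples. My strategy is to repeatedly apply partial assignments that isolate a suspected shared subprogram and then derive a contradiction with a defined value of $\gamma_G$ (using case \eqref{item:edges} or the rigidity of the restrictions $B|_{\vec z=0^n}$ and $B|_{\vec z=1^n}$, which together force the $y$- and $z$-nodes to interleave in a very constrained way). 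If this direct route becomes too delicate, a backup plan is to show that any $\oaBP$ computing an extension of $\gamma_G$ can be rewritten, without altering its behaviour on the defined inputs, in the form $F_\pi$, thereby reducing the hard direction to Ilango's formula analysis.
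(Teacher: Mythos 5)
Your overall plan matches the paper's structure (the easy direction via a read-once formula plus \cref{prop:de-morgan-to-oabp}, the hard direction by extracting a permutation from the $\oaBP$ and using conditions \eqref{item:y-0-z-1-0}--\eqref{item:edges}), and your formula
\[
F_\pi = \bigvee_{j\in[n]} \bigl((x_{\pi(j)}\lor y_j)\land z_j\bigr)
\]
is the correct one: it yields $\bigvee_i(y_i\land z_i)$, $\bigvee_i z_i$, $\bigvee_i(x_i\lor y_i)$, $0$ under the restrictions $\vec x=0^n$, $\vec x=1^n$, $\vec z=1^n$, $\vec z=0^n$ respectively, and the bipartition/independence conditions handle the remaining cases exactly as you say. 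So the ``if'' direction is fine.

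The gap is entirely in the ``only if'' direction, and it is not a small one. You correctly identify the needed decomposition and correctly flag that ``genuinely new combinatorial work is needed'' because an $\oaBP$ is a DAG with potential sharing, but you then leave that work as a promissory note: ``repeatedly apply partial assignments\dots{}if this direct route becomes too delicate, a backup plan is\dots'' This is exactly where the paper's actual content lies, and the paper itself remarks that the decomposition lemma (\cref{lem:order}) is ``surprisingly delicate,'' explicitly ruling out the easy query-complexity route (since $\gamma'$ has query complexity only $2n+2$, not $3n$). Concretely, you are missing: (i) the precise rigid structure of the four restrictions of an $\oaBP$ (\cref{lem:or-of-ands-structure}, \cref{lem:structures}); (ii) a lemma guaranteeing that each $y_i$ has an $x$-neighbor in the topological order, together with a sign constraint on that neighbor's $1$-edge (\cref{lem:adjacent-mapping}); (iii) a forbidden-pattern lemma (\cref{lem:forbidden-subsequence}) ruling out two $yz$-blocks sharing one $x$-node, which is precisely what gives injectivity of $\pi$; and (iv) the helper lemma that any $1$-path must set at least two of $\{x_{\pi(j)},y_j,z_j\}$ to $1$ in the last triplet it visits (\cref{lemma:need_two_ones_to_satisfy}), which is what actually drives the applications of \eqref{item:y-0-z-1-0}--\eqref{item:edges}. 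Without these, ``extract a rigid triple decomposition'' is an aspiration, not a proof. Your backup plan --- rewriting the $\oaBP$ into the form $F_\pi$ --- is unlikely to go through as stated and the paper does not do this; the paper only shows the topological order splits into $xyz$-blocks, not that the $\oaBP$ is (equivalent to) one arising from a read-once formula.

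One small side note: display \eqref{eq:read-once-formula} in the paper is parenthesized as $x_{\pi(i)}\lor(y_i\land z_i)$, which would give the constant $1$ under $\vec x=1^n$ and so cannot be right; your parenthesization $(x_{\pi(i)}\lor y_i)\land z_i$ is the one that actually matches $\gamma'$. You spotted the correct form, so this is in your favor, but be careful to verify it explicitly rather than saying ``a direct case check will show,'' since the paper's own display would fail that check.
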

The plan of the proof is the following:
\begin{itemize}[noitemsep]
    \item In \cref{sec:gamma-prime-structure} we show that the topological order of an $\oaBP$ computing $\gamma_G$ defines a unique permutation $\pi\colon [n] \to [n]$.
    \item In \cref{sec:gamma-G-structure} we show that for $\pi$ as above $\{(i, \pi(i)) \mid i \in [n]\}$ is an independent set in $G$, thus showing the ``only if'' direction of the equivalence.
    \item Since by \cref{prop:de-morgan-to-oabp} that $\oaBP$ losslessly simulates read-once formulas, we conclude that \cref{theorem:reduction-correctness-circuits} implies the ``if'' direction of the equivalence.
\end{itemize}

For the first step of our plan it is convenient to simplify the definition of $\gamma_G$.
Let $\gamma'_n\colon \{0,1\}^{3n} \to \{0,1,*\}$ be equal to $\gamma_{G}$ whenever one of \eqref{item:x-is-0}-\eqref{item:z-is-0} is satisfied, and $*$ otherwise. Here we may choose an arbitrary $G$ as in this domain $\gamma_G$ does not depend on it.
\begin{multline*}
    \gamma'_n(\vec{x}, \vec{y}, \vec{z}) =
    \begin{cases}
        \bigvee_{i \in [n]} (y_i \land z_i), & \text{if } \vec{x}=0^{n} \\
        \bigvee_{i \in [n]} z_i, & \text{if } \vec{x}=1^{n} \\
        \bigvee_{i \in [n]} (x_i \lor y_i), & \text{if } \vec{z}=1^{n} \\
        0, & \text{if } \vec{z}=0^{n} \\
        * & \text{otherwise}
    \end{cases}
\end{multline*}
We usually shorthand $\gamma'_n$ as $\gamma'$.

\subsection{Structure of \texorpdfstring{$\oaBP$}{oaBP} Computing \texorpdfstring{$\gamma'$}{gamma prime}}
\label{sec:gamma-prime-structure}

In this section we identify the structure of an $\oaBP$ computing $\gamma'$. Let us start by constructing an example of such an $\oaBP$. As we mentioned before, whenever a function is computable by a read-once formula, it is computable by $\oaBP$. Ilango \cite{Ilango20} shows that all read-once formulas computing $\gamma'$ have form
\begin{equation}
\label{eq:read-once-formula}
    \bigvee_{i\in [n]} x_{\pi(i)} \lor (y_i \land z_i)
\end{equation} for a permutation $\pi \in S_n$. One way to translate this into an $\oaBP$ is to chain together $n$ three-node $\oaBP$s each of them computing one disjunct of \eqref{eq:read-once-formula}. A topological order of such a $\oaBP$ is a sequence of $n$ triplets of $\{x_{\pi(i)}, y_i, z_i\}$ with variables in triplets permuted in some order. We claim that topological order of \emph{every} $\oaBP$ computing $\gamma'$ has this form\footnote{Notice that we do not show that the $\oaBP$ in question has the exact form as the one constructed according to \eqref{eq:read-once-formula}.}.
\begin{lemma}
\label{lem:order}
Suppose $B$ is an $\oaBP$ that computes $\gamma'$. Then let $\ell_1, \ell_2, \dots, \ell_{3n}$ be labels of vertices of $B$ in a topological order. Then for every $i \in \{0,..,n-1\}$ we have $\{\ell_{3i + 1}, \ell_{3i + 2}, \ell_{3i + 3}\} = \{x_a,y_b, z_b\}$ for some $a,b \in [n]$.
\end{lemma}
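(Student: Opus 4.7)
The plan is to proceed by induction on $n$. The base case $n = 1$ is trivial: $B$ queries each of its three variables, so the unique triple is $\{x_1, y_1, z_1\}$. For the inductive step, the heart of the argument is to identify the first triple $\{\ell_1, \ell_2, \ell_3\}$ as $\{x_a, y_b, z_b\}$ for some $a, b \in [n]$; once this is established, I would show that the residual $\oaBP$ (obtained by contracting the first triple along the $0$-continuation of its computation) is an $\oaBP$ computing the smaller instance $\gamma'_{n-1}$ on the reduced variable set $\{x_i : i \neq a\} \cup \{y_i, z_i : i \neq b\}$, and then invoke the induction hypothesis.

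To identify the first triple, I would first observe that $\gamma'$ is sensitive in each of its $3n$ variables (witnesses can be read off from the defining cases), so $B$ has exactly $3n$ non-sink nodes, one per variable. The key structural principle is the following: whenever two distinct assignments of $(\ell_1, \ell_2, \ell_3)$ produce paths that end at the same entry point in the rest of $B$, the corresponding restrictions of $\gamma'$ must agree on every defined input of the remaining variables. I would exploit the explicit restrictions $\gamma'|_{\vec x = 0^n} = \bigvee_i y_i z_i$ and $\gamma'|_{\vec x = 1^n} = \bigvee_i z_i$ to derive the required constraints.

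The crucial case is a coupling argument analogous to Ilango's formula analysis: if two labels of the first triple are both $y$-labels $y_b, y_{b'}$ with $b \neq b'$ (or both $z$-labels, or $y_b$ and $z_{b'}$ with $b \neq b'$), then sensitivity of $\gamma'$ in the missing partner variables forces the two out-edges of the second such label to lead through the same downstream node, collapsing the four paths into at most two entry points. The four resulting restrictions of $\gamma'$ then conflict on defined inputs such as $(\vec x = 0^n, \vec y = e_b, \vec z = e_b)$ versus $(\vec x = 0^n, \vec y = 0, \vec z = e_b)$ — one yields $\gamma' = 1$, the other $\gamma' = 0$ — giving the contradiction. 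A parallel argument using the $\vec x = 1^n$ restriction combined with the $\vec z = 0^n$ case rules out first triples lacking an $x$-label (for instance $\{y_b, z_b, y_{b'}\}$) by pinning down the forced behavior of the $z_b = 1$ out-edge and exhibiting a conflict between $\gamma'|_{y_b = 0, z_b = 1}$ and $\gamma'|_{y_b = 1, z_b = 1}$. Combining these, the first triple must have the form $\{x_a, y_b, z_b\}$.

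The main obstacle I expect is the delicate edge-by-edge case analysis: for each ``bad'' configuration of the first triple, I must trace exactly which out-edges are forced by the sensitivity of $\gamma'$ in its various cases (distinguishing edges that skip nodes, go to next-triple entry points, or go to sinks), and then exhibit two paths that share an entry point while corresponding to two restrictions of $\gamma'$ that disagree on a common defined input. Once the first triple is pinned down as $\{x_a, y_b, z_b\}$, verifying that the residual $\oaBP$ computes $\gamma'_{n-1}$ requires showing that the $0$-continuations of the three triple nodes merge into a single entry point that serves as the source of the residual program, so that the induction hypothesis applies cleanly.
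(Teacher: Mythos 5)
Your approach is genuinely different from the paper's: you propose an induction on $n$ that contracts the first triple and invokes the inductive hypothesis on a residual $\oaBP$, whereas the paper proves two global structural lemmas about the topological order (\cref{lem:adjacent-mapping}, showing each $y_i$ is adjacent to a unique $x_{\pi(i)}$, and \cref{lem:forbidden-subsequence}, excluding a five-node pattern) and then assembles the block decomposition directly from those, without ever re-entering the recursion on a smaller instance of $\gamma'$.

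The crucial step of your induction has a gap that I believe is fatal: the residual $\oaBP$ after the first triple does \emph{not} compute $\gamma'_{n-1}$. The residual (the sub-program entered along the $0$-continuation) computes the restriction of the total extension chosen by $B$ to $x_a = 0,\, y_b = 0,\, z_b = 0$, which is only constrained to agree with $\gamma'|_{x_a=0,y_b=0,z_b=0}$. That restricted partial function is strictly weaker than $\gamma'_{n-1}$: setting $x_a = 0$ kills the constraint $\vec{x}=1^n$ (case \eqref{item:x-is-1}) and setting $z_b = 0$ kills the constraint $\vec{z}=1^n$ (case \eqref{item:z-is-1}), so $\gamma'|_{x_a=0,y_b=0,z_b=0}$ is $*$ on all points where $\gamma'_{n-1}$ is defined via those two cases. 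To see that the residual can genuinely violate $\gamma'_{n-1}$, take the canonical $\oaBP$ built from $\bigvee_{i}\bigl(x_{\pi(i)} \lor (y_i \land z_i)\bigr)$ with $\pi = \mathrm{id}$. The residual after the first block computes $\bigvee_{i>1}\bigl(x_i \lor (y_i\land z_i)\bigr)$, which evaluates to $1$ on $\vec{x}' = 1^{n-1},\ \vec{y}' = \vec{z}' = 0^{n-1}$; but $\gamma'_{n-1}$ on that same point is $0$ (both from its $\vec{x}' = 1^{n-1}$ case and its $\vec{z}' = 0^{n-1}$ case). So the residual does not compute (any extension of) $\gamma'_{n-1}$, and the induction hypothesis is inapplicable. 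This is exactly the kind of leakage that the paper's non-recursive strategy is designed to avoid: because $\gamma'$ only pins down its structure via the \emph{global} restrictions $\vec{x}=0^n$, $\vec{x}=1^n$, $\vec{z}=1^n$, $\vec{z}=0^n$ (each of which is destroyed when one coordinate of the first block is fixed to the ``wrong'' value), the triplet structure cannot be extracted block by block; one must first certify, from the restricted diagrams $B|_{\vec{x}=0^n}$, $B|_{\vec{x}=1^n}$, $B|_{\vec{z}=1^n}$, $B|_{\vec{z}=0^n}$, that the entire topological order already has the right adjacency pattern, and only then read off the blocks.

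Your preliminary analysis of the first triple (the ``coupling'' case analysis ruling out two $y$-labels, two $z$-labels, or mismatched $y_b$, $z_{b'}$) is in the right spirit and roughly parallels \cref{lem:or-of-ands-structure}; the part that will not close is the recursion.
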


We then say that $B$ \emph{defines} a permutation $\defperm{B}$ mapping $b \mapsto a$ iff $\{x_a, y_b, z_b\} = \{\ell_{3i + 1}, \ell_{3i + 2}, \ell_{3i + 3}\}$ for some $i \in \{0, \dots, n-1\}$. We refer to the triplets of consecutive nodes of form $\{x_a, y_b, z_b\}$ as \emph{$xyz$-triplets} or \emph{$xyz$-blocks}.

The proof of \cref{lem:order} is surprisingly delicate. The starting point is the characterization of the structure of $\oaBP$s computing the subfunctions of $\gamma'$ corresponding to the conditions \eqref{item:x-is-0}-\eqref{item:z-is-0}. The main tool in the proofs of these characterizations is the reduction to query complexity, e.g. $\gamma'|_{\vec{x}=0^n}$ has maximum possible query complexity $2n$, which enforces a very tight constraint on the structure of the $\oaBP$ computing it. A potential easy proof of \cref{lem:order} could have followed a similar logic as the proof above if $\gamma'$ had maximal query complexity $3n$. This, however, is not the case, query complexity\footnote{Query $x_1, x_2, z_1, z_2$, if $x_1 \neq x_2$ or $z_1 \neq z_2$, then we may return any value, as none of the conditions \eqref{item:x-is-0}-\eqref{item:z-is-0} hold. Otherwise, for every possible combination of values $x_1, z_1$ we have only two possible conditions among \eqref{item:x-is-0}-\eqref{item:z-is-0} that might hold. If $x_1 = z_1 = \alpha$ we return $\alpha$ right away, if $x_1 = 0$ and $z_1=1$, it suffices to query $\vec{y}$ and $\vec{z}$ to get the answer, if $x_1 = 1$ and $z_1 = 0$, it suffices to query $\vec{x}$ and $\vec{z}$ to get the answer.} of $\gamma'$ is at most $2n+2$, so such a simple argument is unfortunately ruled out.

\subsubsection{Structure of \texorpdfstring{$\gamma'$}{gamma prime} Conditioned on \texorpdfstring{\eqref{item:x-is-0}-\eqref{item:z-is-0}}{conditions}}
We start by characterizing the structure of $\oaBP$ that computes $\gamma'|_{\vec{x}=0^n}$.
\begin{lemma}
\label{lem:or-of-ands-structure}
The structure of an $\oaBP$ computing the function $\bigvee_{i \in [n]} (y_i \land z_i)$ is the following: all nodes except for the $1$-sink form a single $2n$-edge path  $(a_1, b_1, a_2, b_2, \dots, a_n, b_n, 0\text{-sink})$ where $\{a_i, b_i\} = \{y_{\pi(i)}, z_{\pi(i)}\}$ for some permutation $\pi \in S_n$. For each $i \in [n]$, the edge $a_i = 1$ goes to $b_i$, the edge $b_i = 0$ goes to $a_{i+1}$, the edge $a_i = 0$ goes to $a_{i+1}$ (for $a_n$ it goes to the $0$-sink), and the edge $b_i = 1$ goes to the $1$-sink.
\end{lemma}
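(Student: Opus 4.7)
The plan is to analyze an arbitrary $\oaBP$ $B$ computing $f = \bigvee_{i\in[n]}(y_i \land z_i)$ by first bounding its depth from below via decision tree complexity, and then inducting on $n$ to expose the prescribed structure.

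The first step is to show that the deterministic query complexity of $f$ is exactly $2n$. An adversary answers $1$ the first time a pair $(y_i,z_i)$ is touched and $0$ the second time; if only $2n-1$ queries are made, some pair is still missing a variable and the missing variable alone toggles $f$ between $0$ and $1$, so the answer stays ambiguous. Combined with the fact that $B$ has exactly $2n$ non-sink nodes (one per sensitive variable of $f$), this forces a source-to-sink path $v_1,\ldots,v_{2n},s$ visiting every non-sink node. Reading off the adversary along this path shows that each pair gets one $1$-answer and one $0$-answer, so $f$ evaluates to $0$ on the path input and $s$ must be the $0$-sink.

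The second step is an induction on $n$; the base $n=1$ is direct. For the inductive step, after relabeling pairs and swapping $y \leftrightarrow z$ if necessary, I may assume the source of $B$ is labeled $y_1$. Set $B_0 = B|_{y_1=0}$ (computing $f_{n-1} = \bigvee_{j\ge 2}(y_j z_j)$) and $B_1 = B|_{y_1=1}$ (computing $z_1 \lor f_{n-1}$). The unique $z_1$-labeled node must have its $z_1=1$ edge going to the $1$-sink, because the sub-$\oaBP$ hanging off it would need to be constantly $1$ and, by distinct labels, such a sub-$\oaBP$ collapses to a single sink. I then claim $z_1 \notin B_0$: otherwise correctness of $B_0$ on the non-constant $f_{n-1}$ would force the path in $B_0$ reaching $z_1$ to already query both variables of some pair $(y_j,z_j)$, $j\ge 2$, with value $1$; but then the sub-$\oaBP$ attached to the $z_1=0$ edge must still compute the full $f_{n-1}$ in the $B_1$ context and hence query every $y_j, z_j$ for $j\ge 2$, while it sits in the DAG strictly after the forcing pair's nodes and cannot reach them, a contradiction.

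The crux of the proof, and the step I expect to require the most care, is showing that the $y_1=1$ edge points exactly at the $z_1$-node. Suppose instead it leads to a node $w$ labeled $y_j$ or $z_j$ with $j\neq 1$. Since $B_0$ already contains the node with that label, $w$ lies inside $B_0$. In the $B_1$ context, both out-edges of $w$ lead to sub-$\oaBP$s computing functions that still depend on $z_1$ (e.g.\ $z_1 \lor \bigvee_{k \ne 1,j}(y_k z_k)$ and its analogue), so each such sub-$\oaBP$ must contain the node $z_1$; but reachability of $w$ from the root of $B_0$ then places $z_1$ inside $B_0$, contradicting the previous step. Hence $y_1 = 1 \to z_1$. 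A parallel but easier node-counting argument, based on both $B_0$ and the sub-$\oaBP$ under $z_1=0$ being $\oaBP$s for $f_{n-1}$ on all $2n-2$ variables, shows that the $y_1=0$ edge and the $z_1=0$ edge share the same target, namely the root of $B_0$. Applying the inductive hypothesis to $B_0$ yields the path $(a_2,b_2,\ldots,a_n,b_n,0\text{-sink})$ with matching pairs and the prescribed off-path edges; prepending $a_1 = y_1$ and $b_1 = z_1$ then recovers the full lemma.
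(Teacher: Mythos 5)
Your overall strategy---query complexity $2n$ forces a $2n$-edge source-to-sink path visiting every non-sink node, then peel off the first $yz$-block and induct on $n$---is the same as the paper's. Both proofs also hinge on the same delicate point, namely that the $y_1=1$ edge goes directly to the $z_1$-node. The paper argues this from the topological order of the Hamiltonian path (if the second node were some $z_2$, both $y_1$-edges would be forced to point at it, making $B$ insensitive to $y_1$); your route goes through the decomposition into $B_0$ and $B_1$ and a reachability contradiction, which is a fine alternative in spirit.

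There is, however, a circularity in your proof that $z_1\notin B_0$. You conclude that ``the sub-$\oaBP$ attached to the $z_1=0$ edge must still compute the full $f_{n-1}$ in the $B_1$ context and hence query every $y_j,z_j$ for $j\ge 2$''---but this is only justified if the path in $B_1$ from its source to $z_1$ is trivial, i.e., if $y_1=1$ points directly at $z_1$. That is exactly the ``crux'' you establish afterwards, and your proof of the crux in turn invokes $z_1\notin B_0$. If the $B_1$-path to $z_1$ sets, say, some $y_k=0$, the sub-$\oaBP$ below $z_1=0$ need not touch $y_k$ or $z_k$, and the counting argument breaks. The step can be repaired without ever mentioning $B_1$: once the $B_0$-path $P$ to $z_1$ sets some $y_j=z_j=1$, we have $f_{n-1}|_P\equiv 1$, so by once-appearance (the sub-$\oaBP$ below $z_1=0$ mentions no variable of $P$) that sub-$\oaBP$ computes the constant $1$ and collapses to the $1$-sink; then both out-edges of $z_1$ land on the $1$-sink, making $B$ insensitive to $z_1$, contradicting that $f$ depends on $z_1$. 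With this patch (and noting that the crux itself can also be run independently of $z_1\notin B_0$, since the single function computed below $w$ would have to both depend and not depend on $z_1$ according to the $B_1$- and $B_0$-views respectively), the induction goes through as you describe.
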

\begin{proof}
The query complexity of the function $\bigvee_{i \in [n]} (y_i \land z_i)$ is $2n$, hence there should exist at least one path of length $2n$ in this branching program. Next, consider the first two variables queried in this path. Assume they do not correspond to $y$ and $z$ with the same label. Without loss of generality assume the first node is labeled by $y_1$ and the second node is labeled by $z_2$ (observe that the cases of the first two labels being $y_1$ and $y_2$ or $z_1$ and $z_2$ are identical, since the variables in the pairs are symmetric). Now, both edges outgoing from $y_1$ should be parts of the paths that lead to $z_2$. As for both values of $y_1=0$ and $y_1=1$ exists a $1$-assignment that is zero in all variables other than $z_2 = y_2 = 1$. Hence, if after querying $y_1$ we do not query $z_2$, we get a BP that incorrectly computes the function. As we consider the first two nodes of the longest path, we get that both edges from $y_1$ should directly go to the node labeled by $z_2$. But it turns the $\BP$ to be insensitive to a change in the value of $y_1$ that contradicts the sensitivity of $\bigvee_{i \in [n]} (y_i \land z_i)$ function. Hence the edge $y_1=1$ must lead to the node labeled with $z_1$. Then it remains to observe that after applying the substitutions $y_1 = 0$ or $y_1 = 1, z_1 = 0$ the query complexity of the function decreases by exactly $2$. Note that paths corresponding to these two substitutions must lead to the third node in the branching program. This node then must compute $\bigvee_{i \in [n] \setminus \{1\}} (y_i \land z_i)$, so the rest of the $\oaBP$ has the desired structure by the simple induction on $n$.
\end{proof}
Next, we summarize the structures for the conditions \eqref{item:x-is-1}-\eqref{item:z-is-0}.
\begin{lemma}
\label{lem:structures}
The following structural properties hold.
\begin{itemize}[noitemsep]
    \item[\eqref{item:x-is-1}] An $\oaBP$ computing the function $\gamma'_n|_{\vec{x}=1^n} (\vec{y}, \vec{z}) = \bigvee_{i \in [n]} z_i$ has the following property. Every node $v$ of this $\oaBP$ reachable from the source computes the function $\bigvee_{i \in S(v)} z_{i}$ where $S(v) \subseteq [n]$ is the set of indices of all $z$-variables that do not precede $v$ in the topological order.
    \item[\eqref{item:z-is-1}] An $\oaBP$ computing the function $\gamma'_n|_{\vec{z}=1^n} (\vec{x}, \vec{y}) = \bigvee_{i \in [n]} (x_i \lor y_i)$ is a Hamiltonian path terminating in the $0$-sink with edges of the form $x_i = 0$, $y_i = 0$ going to the next node in the path, and the edges $x_i=1$, $y_i=1$ going to the $1$-sink.
    \item[\eqref{item:z-is-0}] In an $\oaBP$ computing the function $\gamma'_n|_{\vec{z}=0^n} (\vec{x}, \vec{y}) \equiv 0$ the $1$-sink is not reachable from the source.
\end{itemize}
\end{lemma}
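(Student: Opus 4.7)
The plan is to treat the three cases separately, with the main technical work concentrated in case \eqref{item:x-is-1}, since the target function there is insensitive to the $y$-variables that can nonetheless appear as node labels of the $\oaBP$. Case \eqref{item:z-is-0} is almost immediate: since $\gamma'_n|_{\vec{z}=0^n} \equiv 0$, any source-to-$1$-sink path would correspond to a partial assignment that extends to an input evaluating to $1$, contradicting that the restricted function is identically zero; hence the $1$-sink is unreachable from the source.

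For case \eqref{item:z-is-1}, I would exploit that $h := \bigvee_{i \in [n]}(x_i \lor y_i)$ is sensitive in each of its $2n$ variables and has query complexity $2n$. Sensitivity forces every $x_i$ and every $y_i$ to label some non-sink node reachable from the source. Since setting any single variable to $1$ makes $h$ equal $1$, every edge labeled $w = 1$ for $w \in \{x_i, y_i\}$ must terminate at the $1$-sink: otherwise the $=1$ branch would contain a non-trivial subprogram computing the constant $1$, and by the once-appearance property any variable labeling a node of that subprogram could not appear elsewhere in the $\oaBP$, contradicting the sensitivity of $h$ in that variable. Finally, following $=0$ edges from the source must reach the $0$-sink (since the all-zero input evaluates to $0$), and query complexity $2n$ forces this path to have length $2n$ and visit every variable exactly once, yielding the Hamiltonian-path structure with the claimed edge labeling.

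For case \eqref{item:x-is-1}, my plan is structural induction along a topological order of the $\oaBP$, maintaining the invariant $f_v = \bigvee_{i \in S(v)} z_i$ at every non-$1$-sink reachable node $v$. The base case is the source, where $S(\mathrm{source}) = [n]$ and $f_{\mathrm{source}}$ is the whole function $\bigvee_{i \in [n]} z_i$. In the inductive step at $v$ with in-neighbor $u$ satisfying $f_u = \bigvee_{i \in S(u)} z_i$: if $u$ is a $y$-node, then $f_u$ is already independent of $u$'s label, so both outgoing branches of $u$ compute the same function, and placing $v$ immediately after $u$ in the topological order gives $S(v) = S(u)$ and preserves the invariant; if $u$ is a $z_j$-node, the two children must compute the constant $1$ and $\bigvee_{i \in S(u) \setminus \{j\}} z_i$ respectively, and by the once-appearance analogue of the sensitivity argument from case \eqref{item:z-is-1}, the $z_j = 1$ edge must go directly to the $1$-sink while the $z_j = 0$ edge leads to a successor with $S$-set equal to $S(u) \setminus \{j\}$.

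The main obstacle I foresee is this last step: because $\gamma'_n|_{\vec{x}=1^n}$ is insensitive to all $y$-variables, the naive sensitivity argument does not immediately rule out chains of $y$-node detours with both branches eventually reaching the $1$-sink. I expect to close this gap either by synchronizing the choice of topological order with the DAG structure so that any such detour forces a contradiction with the claimed form of $S(v)$, or by leveraging the fact that the $\oaBP$ here arises by restricting an $\oaBP$ for $\gamma'$ in which every $y$-variable must play a non-trivial role via the branches corresponding to $\vec{x} \ne 1^n$.
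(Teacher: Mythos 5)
Your treatment of cases \eqref{item:z-is-0} and \eqref{item:z-is-1} is essentially correct and closely parallels the paper. (The paper's proof of \eqref{item:z-is-1} proceeds in the reverse order: it first extracts the length-$2n$ $0$-path from the query-complexity lower bound, and then observes that if any $1$-edge $uv$ did not go to the $1$-sink, one could splice $uv$ into the Hamiltonian $0$-path and obtain a $0$-path consistent with a nonzero assignment; but your sensitivity/once-appearance version of the argument gets to the same place.)

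For case \eqref{item:x-is-1} there is a genuine gap, and you have correctly located its neighbourhood but not its source. The issue is not primarily ``chains of $y$-node detours''; it is that your induction ``at $v$ with in-neighbor $u$'' implicitly assumes a unique predecessor, whereas a node in the DAG may have several in-edges and, a priori, they could carry distinct functions $\bigvee_{i\in S} z_i$ with different sets $S$. Your plan does not say why the $z$-nodes are totally ordered by reachability, which is precisely what makes $S(v)$ well-defined as a function of topological position. The paper's missing ingredient is the following observation: every source-to-$z_i$ path must visit every $z_j$ that precedes $z_i$ in the topological order, because otherwise the input $\vec z=e_j$ (with $\vec y$ chosen along that path) reaches $z_i$ without ever reading the unique $z_j=1$, and the $\oaBP$ therefore outputs $0$ on an input where $\bigvee z_i = 1$. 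From this one gets $S_i\supseteq S_k$ whenever $z_i$ precedes $z_k$, and $i\in S_i$, $i\notin S_k$, so the $S_i$'s form a strictly decreasing chain of $n$ subsets of $[n]$; a cardinality count then forces $S_i$ to be exactly the set of $z$-indices not preceding $z_i$, and the $y$-node case follows because a reachable $y$-node must have both outgoing edges leading to the same function (else $\bigvee z_i$ would depend on that $y$-variable). Without this chain argument your induction cannot close, because nothing in it prevents two incomparable $z$-nodes from both being reachable directly from the source. Your fallback of ``leveraging that the $\oaBP$ arises by restriction from an $\oaBP$ for $\gamma'$'' is a plausible alternative in this instance, but it proves a weaker statement than the lemma as written and the paper does not need it.
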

\begin{proof}
    The structural property for \eqref{item:z-is-0} is obvious: if there is a path from the source to the $1$-sink, the function is not identically zero.

    The structural property for \eqref{item:z-is-1} is also easy to see. The query complexity of $\gamma'_n|_{\vec{z}=1^n}$ is $2n$, so the $\oaBP$ must have a $2n$-edge path. The edges in these path are the $0$-edges and the path terminates in the $0$-sink. Suppose an edge of the form $y_i=1$ or $z_i=1$ does not go to the $1$-sink, say it goes from a node $u$ to a node $v$, then the path from the source to $u$ according to the Hamiltonian path, then following the edge $uv$ and then continuing to the $0$-sink by the Hamiltonian path contradicts the fact that the $\oaBP$ computes the disjunction of all variables.

    Now let us show the structural property for \eqref{item:x-is-1}. The source computes the function $\bigvee_{i \in [n]} z_i$ by definition. For any node $z_i$ computing $\bigvee_{j \in S} z_j$ for a set $S \subseteq [n]$ we have that the edge $z_i=0$ goes to the node computing $\bigvee_{j \in S \setminus \{i\}} z_j$. For any node $y_i$ that is reachable from the source its edges go to the nodes that compute the same function since otherwise the function computed by the source depends on $y_i$. Then take any node $z_i$. Since the function $\gamma'_n|_{\vec{x}=1^n}$ depends on all bits of $\vec{z}$, this node is reachable from the source. Hence, by the observations above, it computes a function of the form $\bigvee_{j \in S} z_j$ with $i \in S$.

    Observe that every path from the source to $z_i$ must visit all nodes from $\{z_1, \dots, z_n\}$ that precede $z_i$ in the topological order. Indeed, if there is a path from the source to $z_i$ that does not visit $z_j$ that precedes $z_i$ in the topological order, then the $\oaBP$ returns $0$ given the assignment $e_j = 0^{j-1}10^{n-j}$, which is a contradiction. Hence, if the node $z_i$ computes $\bigvee_{j \in S_i} z_j$ and the node $z_k$ computes $\bigvee_{j \in S_k} z_j$, where $z_i$ precedes $z_k$ in the topological order, then $S_i \supseteq S_k$. Moreover $S_i \ni i$ and $S_k \not\ni i$ so $S_i \neq S_k$. Therefore, the sets $S_1, \dots, S_n$ form a chain by inclusion and are all different, which implies the property given that $S_i$ does not contain indices of nodes that precede $z_i$ in the topological order.
\end{proof}

\subsubsection{Proof of \texorpdfstring{\cref{lem:order}}{Lemma order}}
We derive \cref{lem:order} from the following two lemmas.
\begin{lemma}\label{lem:adjacent-mapping} Let $B$ be an $\oaBP$ computing $\gamma'$.
    Then for every topological ordering of the nodes of $B$ there exists a mapping $\pi\colon [n] \to [n]$ such that $x_{\pi(i)}$ neighbors $y_i$ in the topological order. Additionally, the following property holds
    \begin{enumerate}[label=\emph{($\dagger$)}]
\item\itshape \label{it:x-to-sink}
 if $x_{\pi(i)}$ precedes $y_i$ and $z_i$ in the topological order, then the edge $x_{\pi(i)}=1$ does not go to 1-sink.
\end{enumerate}
\end{lemma}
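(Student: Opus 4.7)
The plan is to use the rigid substructures of $B$ restricted to the configurations \eqref{item:x-is-0}--\eqref{item:z-is-0} to first pin down the shape of the topological order $L$ of $B$, and then rule out every configuration in which some $y_i$ has no $x$-neighbor. As the first step, \cref{lem:or-of-ands-structure} applied to $B|_{\vec{x}=0^n}$ forces the restricted $\oaBP$ to be the rigid path $(a_1, b_1, \ldots, a_n, b_n)$ with $\{a_j, b_j\} = \{y_{\sigma(j)}, z_{\sigma(j)}\}$ for some permutation $\sigma$, so the $y,z$-subsequence of $L$ equals this path; in parallel, \cref{lem:structures}\eqref{item:z-is-1} forces the $x,y$-subsequence of $L$ to follow the Hamiltonian path of $B|_{\vec{z}=1^n}$.

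To establish the main claim I would fix some $y_i$, suppose that both $L$-neighbors of $y_i$ are non-$x$, and derive a contradiction. Treating the case $y_i = a_k$ (the case $y_i = b_k$ is analogous), the failure assumption forces the $L$-neighbors of $y_i$ to be $b_{k-1}$ and $b_k = z_i$; in particular, no $x$-node lies between $y_i$ and $z_i$ in $L$. This immediately pins the $y_i=1$ edge of $B$ to go directly to $z_i$, because any intermediate $x$-node on the $B|_{\vec{x}=0^n}$-path from $y_i$ to $z_i$ would have to appear between them in $L$. Reconciling the target required by \cref{lem:or-of-ands-structure} for $b_k=1$ in $B|_{\vec{x}=0^n}$ with the target required by \cref{lem:structures}\eqref{item:z-is-1} for $y_i=1$ in $B|_{\vec{z}=1^n}$ then forces $z_i=1$ in $B$ to go directly to the $1$-sink. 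Finally, \cref{lem:structures}\eqref{item:x-is-1} says that in $B|_{\vec{x}=1^n}$ the node $y_i$ acts as a dummy, so both its outgoing edges land at a common target; combined with the two previous facts this target must be $z_i$. But in $B|_{\vec{x}=0^n}$ the edge $y_i=0$ is required to land at $a_{k+1}\ne z_i$, so the $y_i=0$ edge in $B$ cannot land directly at either $z_i$ or $a_{k+1}$; it must enter some $x$-node $x_j$ with $x_j=0$ eventually leading to $a_{k+1}$ and $x_j=1$ eventually leading to $z_i$. The topological constraints $y_i\to x_j$ and $x_j\to z_i$ then place $x_j$ strictly between $y_i$ and $z_i$ in $L$, contradicting the failure assumption. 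Defining $\pi(i)$ to be any $x$-index whose node is adjacent to $y_i$ in $L$ yields the required mapping.

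For property \ref{it:x-to-sink}, if $y_i = b_k$ then the hypothesis is vacuous, since the $x$-neighbor of $y_i$ lies either between $a_k=z_i$ and $y_i$ or immediately after $y_i$, so it cannot precede both. In the remaining case $y_i = a_k$ with $x_{\pi(i)}$ immediately preceding $y_i$ in the left-gap, I would suppose $x_{\pi(i)}=1$ points to the $1$-sink and derive a contradiction with \cref{lem:structures}\eqref{item:z-is-0} by showing $x_{\pi(i)}$ is reachable in $B|_{\vec{z}=0^n}$. The in-edges of $x_{\pi(i)}$ in $B$ are exactly the edges that, after contracting $x_{\pi(i)}=0$, become in-edges of $a_k$ in $B|_{\vec{x}=0^n}$; all of them come from the $a_{k-1}=0$ or $b_{k-1}=0$ paths, and none uses a $z=1$ transition, so they survive in $B|_{\vec{z}=0^n}$. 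An easy induction from the source along the $a$-chain then shows that $x_{\pi(i)}$ is reachable, so following $x_{\pi(i)}=1$ reaches the $1$-sink, contradicting the identically-zero behavior of $B|_{\vec{z}=0^n}$.

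The hardest step should be the failure-mode analysis in the middle paragraph, which requires carefully tracking which edges of $B$ are preserved, contracted, or deleted across the three restrictions $B|_{\vec{x}=0^n}$, $B|_{\vec{x}=1^n}$, and $B|_{\vec{z}=1^n}$, and distinguishing sub-cases based on whether $y_i$ equals $a_k$ or $b_k$ and on the $y/z$-type of the non-$x$ boundary neighbor.
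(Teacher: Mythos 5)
Your high-level strategy differs from the paper's. You fix $y_i$, assume both $L$-neighbors are non-$x$, and try to reconcile constraints coming from the three restrictions $B|_{\vec{x}=0^n}$, $B|_{\vec{x}=1^n}$, $B|_{\vec{z}=1^n}$ to force a contradiction. The paper instead works constructively: for each $yz$-block it does a case analysis on which edges of $B_0 = B|_{\vec{x}=0^n}$ survive unchanged in $B$, traces a concrete path under assignments like $(\vec{x},\vec{y},\vec{z}) = (1^n,0^n,e_i)$ or $(0^n,1^n,0^n)$ to show at least one such edge must be broken, and directly exhibits the $x$-node $r$ that neighbors $y_i$.

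The key gap is your "common target" step. You write that \cref{lem:structures}\eqref{item:x-is-1} makes $y_i$ a "dummy" in $B|_{\vec{x}=1^n}$ so both outgoing edges land at a common target; but both the statement and the proof of \cref{lem:structures}\eqref{item:x-is-1} apply only to nodes \emph{reachable from the source} of $B|_{\vec{x}=1^n}$, and nothing you have established guarantees that $y_i$ is reachable there. The source of $B|_{\vec{x}=1^n}$ is found by contracting $x{=}1$ chains from $B$'s source and may bypass $y_i$ entirely (indeed the forced structure has every $z_m{=}1$ edge jumping to the $1$-sink in $B|_{\vec{x}=1^n}$, so a $y$-node sitting in $B_0$ between $z_m$ and the next $z$-node can easily become unreachable once $\vec{x}=1^n$). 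The paper circumvents this exactly by never reasoning about unreachable nodes: it traces the concrete path of the assignment $(1^n,0^n,e_i)$, which by the structure of $B|_{\vec{x}=1^n}$ is guaranteed to visit every $z$-node preceding $z_i$, and then follows explicit $B$-edges to $y_i$ and beyond. You would need to replace the common-target reasoning with such a path trace, or else prove reachability of $y_i$ separately. A secondary issue: the "analogous" claim for $y_i=b_k$ does not hold as stated — in that case $y_i{=}1$ goes directly to the $1$-sink already in $B_0$, so the $y_i{=}1\to z_i$ / $z_i{=}1\to$ $1$-sink reconciliation you rely on has no counterpart there, and you need an argument of the paper's Case~1 flavour (trace $(1^n,0^n,e_i)$ and observe the path skips $z_i$). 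You also leave the boundary sub-case (when $y_i$ is the source of $B_0$) unaddressed. Your treatment of \ref{it:x-to-sink} is in the right spirit but would likewise be cleaner if phrased, as the paper does, by tracing the path of $(0^n,1^n,0^n)$ from the source to $x_{\pi(i)}$ rather than arguing about in-edges backwards.
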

\begin{lemma}
\label{lem:forbidden-subsequence}
Suppose an $\oaBP$ $B$ computes $\gamma'$.
Suppose $\ell_1, \dots, \ell_{3n}$ are the labels of the nodes in some topological ordering of $B$. Then for every $i \in [3n-4]$ and every $a,b,c \in [n]$ we have $\{\ell_i, \ell_{i+1}, \dots, \ell_{i+4}\} \neq \{z_a, y_a, x_b, y_c, z_c\}$.
\end{lemma}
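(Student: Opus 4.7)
The plan is to proceed by contradiction: assume there exist $i \in [3n-4]$, distinct $a, c \in [n]$, and $b \in [n]$ such that $\{\ell_i, \dots, \ell_{i+4}\} = \{y_a, z_a, x_b, y_c, z_c\}$. I would impose two structural constraints on the ordering of the five window labels by specializing $B$ along two different restrictions, and then rule out the surviving window orderings with explicit witness inputs.

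The first constraint comes from the restriction $\vec{x} = 0^n$. By \cref{lem:or-of-ands-structure}, the restricted $\oaBP$ $B|_{\vec{x}=0^n}$ is a single $2n$-edge path in which matching $\{y_j, z_j\}$ occur as consecutive pairs. Since the relative topological order of the $y$- and $z$-nodes is inherited by $B|_{\vec{x}=0^n}$, the pair $\{y_a, z_a\}$ can have at most the single $x$-node $x_b$ between its two members in the window, and similarly for $\{y_c, z_c\}$. Combined with \cref{lem:adjacent-mapping}, which forces $x_b$ to be topologically adjacent to $y_{\pi^{-1}(b)}$, this reduces the possible window orderings to a small family. The second constraint comes from the restriction $\vec{z} = 1^n$. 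By \cref{lem:structures}, case~\eqref{item:z-is-1}, $B|_{\vec{z}=1^n}$ is a Hamiltonian path terminating at the $0$-sink in which $1$-edges from $x$- and $y$-nodes go directly to the $1$-sink and $0$-edges go to the successor in the path. Chasing these edges back through chains of $z$-nodes contracted via $z_i=1$ in the original $B$ then pins down where $y_a, x_b$, and $y_c$ actually route their $1$-edges in $B$ itself.

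For each surviving window ordering I would exhibit an assignment on which the path traced in $B$ arrives at the wrong sink. The natural witnesses are inputs of the form $\vec{x} = e_b$ combined with carefully chosen $y_a, z_a, y_c, z_c$ (or dual inputs with $\vec{z}$ close to $1^n$): such inputs either lie in the totally-defined cases \eqref{item:x-is-0}--\eqref{item:z-is-0} of $\gamma'$, where correctness is forced, or in the $*$-region where the pairing constraint forced by $\vec{x}=0^n$ and the Hamiltonian-path constraint forced by $\vec{z}=1^n$ become mutually inconsistent. The main obstacle is the bookkeeping: the five possible positions of $x_b$ within the window, together with the two orientations of each $yz$-pair and the two possible values of $\pi^{-1}(b) \in \{a, c, \text{outside}\}$, produce on the order of a dozen sub-cases, each needing its own witness. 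The key observation that keeps the case analysis tractable is that the two structural constraints together force the window geometry into only a few families, so the witnesses also come in a short list of templates.
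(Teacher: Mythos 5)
Your plan identifies the right two structural tools (Lemma~\ref{lem:or-of-ands-structure} via the $\vec{x}=0^n$ restriction, and Lemma~\ref{lem:structures}\eqref{item:z-is-1} via the $\vec{z}=1^n$ restriction) and the overall strategy of deriving forced edges and then getting a contradiction, which is indeed what the paper does. But as written this is a sketch, not a proof, and it misses the idea that actually closes the argument.

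The paper does not run a broad case analysis over window orderings. It fixes the order $z_a, y_a, x_b, y_c, z_c$ (the one that arises in the application inside Lemma~\ref{lem:order}, given Lemmas~\ref{lem:or-of-ands-structure} and~\ref{lem:adjacent-mapping}) and then derives a single linear chain of forced edges: $z_a{=}1 \to y_a$, $y_a{=}0 \to x_b$, $x_b{=}0 \to y_c$, $y_c{=}1 \to z_c$, and then — using witnesses $(0^n,1^n,e_a)$, $(1^n,0^n,e_a)$, $(1^n,0^n,0^n)$, $(1^n,0^n,e_c)$, all of which lie in the \emph{defined} domain of $\gamma'$ — that $y_a{=}1$ and $x_b{=}1$ go to the $1$-sink, that $z_a{=}0 \to y_c$, and finally that $y_c{=}0 \to z_c$. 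The punchline is not ``a path reaches the wrong sink'': it is that both edges leaving $y_c$ now go to $z_c$, so $B|_{\vec{x}=0^n}$ is insensitive to $y_c$, while $\gamma'|_{\vec{x}=0^n}=\bigvee_i (y_i\land z_i)$ visibly depends on $y_c$ (compare $(0^n,e_c,e_c)$ and $(0^n,0^n,e_c)$). Your plan never reaches this insensitivity observation, and without it the case analysis you gesture at does not obviously terminate.

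Two more concrete concerns. First, you defer all the work to ``a dozen sub-cases, each needing its own witness'' without exhibiting any of them; that is precisely the part that has to be checked, and the paper shows that, with the right intermediate claims, there is essentially one thread to follow rather than a dozen. Second, your proposed witness family $\vec{x}=e_b$ is mostly in the $*$-region of $\gamma'$ unless $\vec{z}\in\{0^n,1^n\}$, and on $*$-inputs $B$ is unconstrained, so ``arrives at the wrong sink'' cannot yield a contradiction there. You hint that the inconsistency would instead be internal to the edge structure, but that is exactly the step you would need to make precise — and the way the paper makes it precise is the $y_c$-insensitivity argument above, which stays entirely on defined inputs.
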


\begin{proof}[Proof of \cref{lem:order}]
    First we show that \cref{lem:forbidden-subsequence} guarantees that $\pi$ given by \cref{lem:adjacent-mapping} is a bijection between $\{y_1, \dots, y_n\}$ and $\{x_1, \dots, x_n\}$.

    Let $\prec$ denote the topological order of the labels of the nodes in $B$. Consider a modified version of $\pi$ that we denote $\pi'$ such that whenever possible $y_i \prec \pi'(y_i) \prec z_i$ or $z_i \prec \pi'(y_i) \prec y_i$. Observe that this modification never introduces additional violations of injectivity since $y_j$ cannot neighbor a node between $y_i$ and $z_i$ in the topological order for $j \neq i$.

    Now suppose $\pi'(y_i) = \pi'(y_j)$ for some $i \neq j$. WLOG assume that $i=1$ and $j=2$. Then, by construction of $\pi'$ and $\pi$ there exists $k$ such that $\ell_k = y_1, \ell_{k+1} = \pi'(y_1), \ell_{k+2} = y_2$. Notice that $\pi'(y_1)$ must be a single element of $\{x_1, \dots, x_n\}$ that neighbors $y_1$ and $y_2$, thus $\ell_{k-1} = z_1$ and $\ell_{k+3} = z_2$. Then we get a contradiction with the statement of \cref{lem:forbidden-subsequence}. Hence $\pi'$ is an injective mapping. As $\pi'$ is an injective mapping that maps each element of a set of the size $n$ into an element of a set of the size $n$, we get that $\pi'$ is a bijection.

    By \cref{lem:or-of-ands-structure} $\ell_1, \dots, \ell_{3n}$ has a subsequence of form $a_1, b_1, \dots, a_n, b_n$ where $\{a_i, b_i\} = \{y_{\tau(i)}, z_{\tau(i)}\}$ for a permutation $\tau\colon [n] \to [n]$.  Observe that $\{\ell_1, \ell_2, \ell_3\}$ must contain $x_i$ for $i \in [n]$, otherwise $\{\ell_1, \ell_2, \ell_3\} = \{a_1, b_1, a_2\}$, contradicting \cref{lem:adjacent-mapping} as $y_{\tau(1)}$ does not neighbor $\pi(y_{\tau(1)})$.

    If $\{\ell_1, \ell_2, \ell_3\}$ contains more than one element of $\{x_1, \dots, x_n\}$ then $\{\ell_1, \ell_2, \ell_3\} \ni y_{\tau(1)}$ by surjectivity of $\pi'$. Moreover, $\ell_3 = x_b$ and either $\ell_1 = x_a$, $\ell_2 = y_{\tau(1)}$ or $\ell_1 = y_{\tau(1)}$, $\ell_2 = x_a$ for some $a,b \in [n], a\neq b$. Then by \cref{lem:or-of-ands-structure} we get $\ell_4 \in \{x_1, \dots, x_n, z_{\tau(1)}\}$. Thus $x_b$ does not belong to the image of $\pi$, which is a contradiction. Therefore, $\{\ell_1, \ell_2, \ell_3\} = \{\pi'(y_{\tau(1)}), y_{\tau(1)}, z_{\tau(1)}\}$. Iteratively applying this procedure to $\{\ell_{3k+1}, \dots, \ell_{3n}\}$ and $\pi'$ restricted to $\{y_{\tau(k+1)}, \dots, y_{\tau(n)}\}$ for $k \in [n]$ concludes the proof of the theorem.
\end{proof}

\subsubsection{Proof of \texorpdfstring{\cref{lem:adjacent-mapping}}{adjacent mapping}}
    Let $B_\alpha$ for $\alpha \in \{0,1\}$ be the diagram obtained from $B$ by contracting all edges of form $x_i = \alpha$ and removing all edges of form $x_i = 1 - \alpha$. Then the diagram $B_0$ computes $\gamma'|_{\vec{x} = 0^n}$. Then from \Cref{lem:or-of-ands-structure} and \eqref{item:x-is-0} we get that $B_0$ consists of a path of length $2n$ that is split into consequent blocks with two nodes in each,  where the labels of the nodes in each block are $\{y_i, z_i\}$ for $i \in [n]$.

    Consider an arbitrary block in $B_0$ and let $u$ and $v$ be the nodes of $B$ corresponding to the nodes in the block, such that $u$ precedes $v$ in the topological order of $B$.

    \paragraph{\boldmath Case 1: $u$ and $v$ are labeled with $z_i$ and $y_i$ respectively.} $u$ must still be present in $B_1$ and be reachable from the source (say, with the assignment $\vec{x} = \vec{y} =1^n;\, \vec{z} = 0^n$). Then consider edges in $B_0$ labeled with $z_i = 1$ and $y_i = 0$: $uv$ and $vw$, where $w$ is the first node in the subsequent block of $B_0$, these are uniquely defined since $B_0$ is an $\oaBP$. We claim that at least one of these edges does not appear in $B$ i.e. it is present in $B_0$ due to a contraction.

    Assume towards a contradiction that both of the edges are present in $B$. Then let us trace a path from the source of $B$ to the $0$-sink according to the assignment $\vec{x} =1^n;\, \vec{y} = 0^n;\, \vec{z} =e_i$. This path must contain $u$ and $w$, as $z_i$-node, corresponding to $u$ should be queried on substitution $x=0^n$, and then, after assigning $z_i=1$ and $y_i=0$ this path leads to the node $w$.
    %Now note, if we'd consider a part of $B_1$ starting at the $w$-node, we would get an $\oaBP$ $B_1'$, which wouldn't depend on the value of $z_i$ anymore, but would depend on values of $z$s which are going after $z_i$ in the topological order.
    By \cref{lem:structures} for \eqref{item:x-is-1} the node $w$ in $B_1$ and our choice of $w$, computes $\bigvee_{j \in S} z_j$. Since $z_i$ precedes $w$ in the topological order, $S \not\ni i$. Hence, we get $B(1^n,0^n,e_i) = 0$ which with value $\gamma'(1^n, 0^n, e_i) = 1$ leads to a contradiction.

    Now, since at least one of the edges $uv$ and $vw$ does not appear in $B$, but appears in $B_0$, the node $v$ in $B$ is adjacent to a node $r$ labeled with an element of $\{x_1, \dots, x_n\}$. Since $uv$ and $vw$ belong to the Hamiltonian path in $B_0$, $r$ must neighbor $v$ in the topological order. Then define $\pi(y_i) = x_j$ where $x_j$ is the label of $r$. Notice that here the node $x_{\pi(i)}$ never precedes both $y_i$ and $z_i$, so the property \emph{\ref{it:x-to-sink}} is satisfied.

    \paragraph{\boldmath Case 2: $u$ and $v$ are labeled with $y_i$ and $z_i$ respectively.}
    First, suppose $y_i$ and $z_i$ are the first two nodes in $B_0$. By \cref{lem:or-of-ands-structure} the edge $y_i=0$ in $B_0$ goes to the first node of the next $yz$-pair, let us denote it by $w$. The edge $y_i=1$ goes to the $z_i$-node in $B_0$. Now, assume, towards a contradiction, that the $y_i=0$-edge in $B$ goes to $w$ as well. Then, a path in $B$ corresponding to an assignment $s=(1^n, 0^n, e_i)$ skips the node $z_i$, as $y_i$ is the source of this $\oaBP$, and $w$ is going after $z_i$ in the topological order. Similarly to the proof of Case 1, we get a contradiction with the fact that the subdiagram of $B_1$ rooted in $w$ computes the function $\bigvee_{j \in S} z_j$ with $S \not\ni i$, which is $0$ on $s$, hence $B(s) = B_1(0^n, e_i) = 0$, whereas $\gamma'(s)=1$.

    Now, assume there is at least one other $yz$-pair queried before $y_i$ and $z_i$.
    Let $y_j$ and $z_j$ be the pair that precedes $y_i$ and $z_i$ in the topological order of $B$.
    Note, that in $B_0$ both edges $y_j=0$ and $z_j=0$ go to $y_i$. And $y_i=0$ edge in $B_0$ goes to a node $w$ (skipping $z_i$), which is the first node in the next $yz$-block.

    \begin{claim}
    \label{claim:non-existence-of-an-edge}
    One of the edges among $z_j=0$ and $y_i=0$ differ in $B_0$ and $B$ (that is, the edge with this label connects different pairs of nodes).
    \end{claim}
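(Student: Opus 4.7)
The plan is to argue by contradiction, mirroring the strategy of Case 1: assume that in $B$ both edges $z_j = 0$ and $y_i = 0$ coincide with the corresponding edges in $B_0$, so that in $B$ the edge $z_j = 0$ goes directly to $y_i$ and the edge $y_i = 0$ goes directly to $w$. The goal is to trace the path of $B$ on a carefully chosen input, show it passes $z_j \to y_i \to w$ thereby bypassing $z_i$, and conclude that $B$ outputs $0$ on an input where $\gamma'$ returns $1$.

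First, consider the input $s = (1^n, 0^n, e_i)$, for which $\gamma'(s) = 1$ by condition \eqref{item:x-is-1}. I would pass to $B_1 := B|_{\vec{x}=1^n}$, whose behaviour on $(0^n, e_i)$ coincides with the behaviour of $B$ on $s$. By \cref{lem:structures} applied to case \eqref{item:x-is-1}, every internal node $v$ of $B_1$ reachable from the source computes $\bigvee_{k \in S(v)} z_k$, where $S(v)$ collects the $z$-indices that do not precede $v$ in the topological order. From this structural description I would extract two facts: (i) the only edges entering the $1$-sink of $B_1$ are $z_k = 1$ edges out of $z_k$-nodes with $k \in S$; and (ii) the $z_k = 0$ edge out of $z_k$ leads to the node whose $S$-set equals $S(z_k) \setminus \{k\}$, so following $z=0$ edges visits the $z$-labels in topological order without skipping any.

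Second, I would apply these facts to the input $(0^n, e_i)$. Since the path in $B_1$ must terminate at the $1$-sink, by (i) it must take a $z_k = 1$ edge, and the only one compatible with the input is $z_i = 1$; hence $z_i$ lies on the path, and by (ii) so does every $z$-node preceding $z_i$ in the topological order. In particular $z_j$ is on the path, and the path exits $z_j$ through the $z_j = 0$ edge. Lifting back to $B$ (whose $y$- and $z$-nodes, together with their visitation pattern, agree with those of $B_1$), the path of $B$ on $s$ also passes through $z_j$; by our assumption it then follows $z_j = 0$ directly to $y_i$, and $y_i = 0$ directly to $w$, thereby skipping $z_i$ entirely.

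Finally, because $z_i$ precedes $w$ in the topological order of $B$, we have $i \notin S(w)$, so the subdiagram of $B_1$ rooted at $w$ computes $\bigvee_{k \in S(w)} z_k$ which evaluates to $0$ on input $\vec{z} = e_i$. Hence $B(s) = 0 \neq 1 = \gamma'(s)$, the desired contradiction. The main obstacle I anticipate is establishing the no-skipping property (ii) cleanly from \cref{lem:structures} and verifying that visits of $z$-nodes in $B_1$ faithfully translate back to visits in $B$ despite the contracted $x$-nodes; once these technical points are in place, the remainder is the same bookkeeping as in Case 1.
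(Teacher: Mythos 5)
Your proposal is correct and follows essentially the same approach as the paper: assume both edges are unchanged, trace the path on the substitution $s=(1^n, 0^n, e_i)$, show it must visit $z_j$ (using \cref{lem:structures} for \eqref{item:x-is-1}), then follow $z_j=0 \to y_i$ and $y_i=0 \to w$, skipping $z_i$, and conclude $B(s)=0 \neq \gamma'(s)=1$. The "no-skipping" property you worry about is exactly what the paper's \cref{lem:structures} establishes for case \eqref{item:x-is-1}, so your argument is complete.
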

    \begin{proof}The proof of this claim repeats a similar argument in the Case 1.
    Assume these two edges in $B$ are the same as in $B_0$.
    Consider a substitution $s=(1^n, 0^n, e_i)$. As $B_1$ computes the disjunction of $z$, a path $p$ corresponding to $s$ in $B$ should query all bits of $\vec{z}$ preceding $z_i$ in the topological order. Hence, $p$ goes through $z_j$, and $z_j=0$ to $y_i$. Now, as $y_i=0$ skips $z_i$  and goes to $w$, which by \cref{lem:structures}  computes $\bigvee_{t \in S} z_t$ in $B_1$. Analogously to the Case 1, $S \not\ni i$, so we get that $B(s)=0$, which contradicts $\gamma'(s)=1$. Thus, one of the edges $z_j=0$ or $y_i=0$ in $B$ leads to a different node compared to $B_0$.\end{proof}

    \paragraph{\boldmath Case 2a: the edge $z_j=0$ goes to $y_i$.} By \cref{claim:non-existence-of-an-edge} $y_i=0$ does not go to $w$. But in this case it goes to a node $r$ labeled by an $x$-variable in $B$. Then as $\oaBP$ $B$ queries all bits in $\vec{z}$ on the assignment $(1^n, 0^n, 0^n)$, we get that $r$ goes before $z_i$ in the topological order. So we get that $y_i$ neighbors an $x$-node in the topological order.

    \paragraph{\boldmath Case 2b: the edge $z_j=0$ does not go to $y_i$.}
     Let $x_\ell$ be the endpoint of $z_j = 0$ in $B$ (notice that $z_j=0$ can not go to $y_j$ since then it would go there in $B_0$ as well). Consider the path from $x_\ell$ to $y_i$ in $B$ according to the assignment $\vec{x} = 0^n$, by the construction of $B_0$ all nodes on this path are labeled with $x$-variables. Let $x_k$ be the last node on this path before $y_i$. We define $\pi(i) = k$. Now it remains to show the following:
     \begin{enumerate}[noitemsep]
         \item $x_k$ is the immediate predecessor of $y_i$ in the topological order. \label{item:predecessor}
         \item The property \emph{\ref{it:x-to-sink}}: $x_k=1$ in $B$ does not go to the $1$-sink. \label{item:not-1-sink}
     \end{enumerate}
      Let us first prove \cref{item:predecessor}. Consider the program $B|_{\vec{z}=1^n}$. By \cref{lem:structures} for \eqref{item:z-is-1}, $B|_{\vec{z}=1^n}$ consists of a single Hamiltonian path of $0$-edges. Then, consider the edge $x_k=0$ in $B$. This edge goes to $y_i$, so $x_k y_i$ is present in $B|_{\vec{z}=1^n}$, so in topological order there are no $x$-node or $y$-node between $x_k$ and $y_i$. Given that $z_j$ precedes $x_k$ in the topological order, $x_k$ is the immediate predecessor of $y_i$.

      Now let us prove \cref{item:not-1-sink}. Consider an assignment $\vec{x}=0^n$, $\vec{y}=1^n$ and $\vec{z}=0^n$. Let us trace the path corresponding to this assignment from the root of $B$ to $z_j$. Notice that $z_j$ is reachable by this path since it is reachable in $B_0$ by the path according to $\vec{y}=1^n$, $\vec{z}=0^n$. After $z_j$ let us follow the edge $z_j=0$ to $x_\ell$ and then to $x_k$ following the $0$-edges. Suppose that $x_k=1$ goes to the $1$-sink. Let $p$ be the path from the root to the $1$-sink that we have traced. Then consider the program $B|_{\vec{z}=0^n}$. Let $p|_{\vec{z}=0^n}$ be the result of contraction of the edges of form $z_t=0$ in $p$. $p|_{\vec{z}=0^n}$ exists in $B_{\vec{z}=0^n}$, hence the $1$-sink is reachable in this program. This contradicts the fact that $\gamma'|_{\vec{z}=0^n}$ is identically zero.

    \begin{figure}
         \centering
         \begin{tabular}{cc}
             \begin{tikzpicture}[node distance=17mm, auto, every node/.append style={rectangle, draw=black},>=stealth, shorten >=1mm, shorten <=1mm]
        %\node(onesink) {$1$-sink};  \node[left of=onesink] (zerosink) {$0$-sink};
        \node (zc) {$z_c$};
        \node[above of=zc] (yc) {$y_{c}$};
        \draw[->] (yc) to[bend right] node[draw=none] {1} (zc);
        \node[above of=yc] (xb) {$x_{b}$};
        \draw[->] (xb) to[bend right] node[draw=none] {0} (yc);
        \node[above of=xb] (ya) {$y_{a}$};
        \draw[->] (ya) to[bend right] node[draw=none] {0} (xb);
        \node[left of=ya] (onesink) {$1$-sink};
        \draw[->] (ya) to node[draw=none] {1} (onesink);
        \node[above of=ya] (za) {$z_{a}$};
        \draw[->] (za) to[bend right] node[draw=none] {1} (ya);
        \draw[dashed, ->] (za.east) to[bend left] node[draw=none] {0} (xb.east);
        \draw[dashed, ->] (za.east) to[bend left] node[draw=none] {0} (yc.east);
        \end{tikzpicture} &
        \begin{tikzpicture}[node distance=17mm, auto, every node/.append style={rectangle, draw=black},>=stealth, shorten >=1mm, shorten <=1mm]
        \node (zc) {$z_c$};
        \node[above of=zc] (yc) {$y_{c}$};
        \draw[->] (yc) to[bend right] node[draw=none] {1} (zc);
        \node[above of=yc] (xb) {$x_{b}$};
        \draw[->] (xb) to[bend right] node[draw=none] {0} (yc);

        \draw[->] (xb) to node[draw=none] {1} (onesink);
        \node[above of=xb] (ya) {$y_{a}$};
        \draw[->] (ya) to[bend right] node[draw=none] {0} (xb);
        \node[left of=ya] (onesink) {$1$-sink};
        \draw[->] (ya) to node[draw=none] {1} (onesink);
        \node[above of=ya] (za) {$z_{a}$};
        \draw[->] (za) to[bend right] node[draw=none] {1} (ya);
        \draw[->] (za.east) to[bend left] node[draw=none] {0} (yc.east);
        \end{tikzpicture}
         \end{tabular}
        \caption{Fixed edges between $\{z_a, y_a, x_b, y_c, z_c\}$. }
        \label{fig:first-round-labels}

    \end{figure}
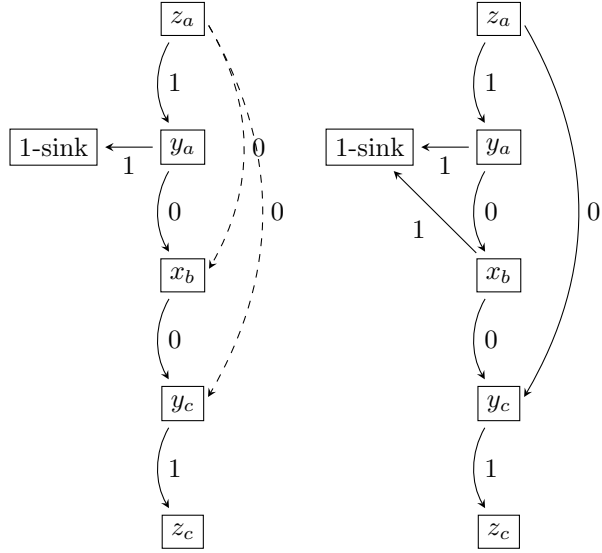
\subsubsection{Proof of \texorpdfstring{\cref{lem:forbidden-subsequence}}{forbidden subsequence}}
    We are going to gradually derive the structure of the edges going out of the nodes $\{z_a, y_a, x_b, y_c, z_c\}$.

    By \eqref{item:z-is-1} we have $\gamma'|_{\vec{z} = 1^n} = \bigvee_{i \in [n]} x_i \lor y_i$, so by \cref{lem:structures} the edges of from $x_i = 0$ and $y_i = 0$ must either go to a variable in $\{z_1, \dots, z_n\}$ or to the next variable in $\{x_1, \dots, x_n, y_1, \dots, y_n\}$ in the topological order. Hence $y_a = 0$ leads to $x_b$, $x_b = 0$ leads to $y_c$.

    By \eqref{item:x-is-0} we have $\gamma'|_{\vec{x}=0^n} = \bigvee_{i \in [n]} z_i \land y_i$, the diagram $B|_{\vec{x}=0^n}$ has structure defined by \cref{lem:or-of-ands-structure}. Then $z_a = 1$ leads to $y_a$ and $y_c = 1$ leads to $z_c$ and $z_a = 0$ leads either to $x_b$ or $y_c$.
    Now observe that the edge $y_a=1$ goes to the $1$-sink by the structure of the diagram $B|_{\vec{z}=1^n}$. If it does go to $x_b$ the assignment $\vec{x} = 0^n$, $\vec{y} = 1^n$ and $\vec{z} = e_a$ is evaluated to $0$ by $B$, which is a contradiction with \eqref{item:z-is-1}. See \cref{fig:first-round-labels} for the edges we identified so far.

    Now consider an edge $x_b = 1$. By \cref{lem:structures} for \eqref{item:z-is-1}  it must go to the $1$-sink in the diagram $B|_{\vec{z}=1^n}$, so in $B$ it goes either to $\{z_1, \dots, z_n\}$ or to the $1$-sink. Recall that by \eqref{item:x-is-1} we have $B|_{\vec{x}=1^n} = \bigvee_{i \in [n]} z_i$. Suppose $x_b = 1$ goes to $z_i$ for $i \in [n]$. Then consider the assignment $\vec{x} = 1^n$, $\vec{z} = e_a$, $\vec{y} = 0^n$. A path $p$, corresponding to it passes through the node $z_a$ by the structure of $B|_{\vec{x}=1^n}$ given by \cref{lem:structures}. Hence, $p$ then leads to $x_b$ and to $z_i$, which computes $\bigvee_{t \in S} z_t$ for some $S \subseteq [n] \setminus \{a\}$. Thus $B(1^n, 0^n, e_a) = 0$, yet $\gamma'(1^n, 0^n, e_a) = 1$. Hence $x_b = 1$ goes to the $1$-sink.

    Now we can pinpoint the endpoint of the edge $z_a = 0$. Suppose it goes to $x_b$. Then we get the contradiction with the assignment $\vec{x} = 1^n$, $\vec{y} = 0^n$ and $\vec{z} = 0^n$. $B(1^n, 0^n, 0^n) = 1$ since we go from $z_a$ to $x_b$ and then to the $1$-sink, yet $\gamma(1^n, 0^n, 0^n) = 0$.

    Then consider the edge $y_c = 0$, we claim that it goes to $z_c$.  Otherwise consider the assignment $\vec{x} = 1^n$, $\vec{y} = 0^n$ and $\vec{z} = e_c$, $B(1^n, 0^n, e_c) = 0$ since the path of this assignment passes through $z_a$, then goes to $y_c$ and then to a node strictly below $z_c$ that by \cref{lem:structures} computes $\bigvee_{i \in S} z_i$ for $S \subseteq [n] \setminus \{c\}$. Thus we get a contradiction with $\gamma'(1^n, 0^n, e_c) = 1$.

    Then we get the final contradiction as follows: $\gamma'|_{\vec{x}=0^n}$ depends on $y_c$, but $B|_{\vec{x}=0^n}$ does not, in other words $B(0^n, e_c, e_c) = B(0^n, 0^n, e_c)$, yet $\gamma'(0^n, e_c, e_c) \neq \gamma'(0^n, 0^n, e_c)$.\qedhere

\subsection{Structure of \texorpdfstring{$\oaBP$}{oaBP} Computing \texorpdfstring{$\gamma_G$}{gamma of G}: proof of \texorpdfstring{\cref{theorem:reduction-correctness-oaBP}}{theorem}}
\label{sec:gamma-G-structure}
\noindent
The theorem follows from the following lemmas, utilizing the properties \eqref{item:y-0-z-1-0}-\eqref{item:edges} of $\gamma_G$.
\begin{lemma}\label{lemma:oaBP_satisfies_gamma_properties_5-6}
For every $\oaBP$ $B$ computing $\gamma_G$, the permutation $\defperm{B}$ defined by $B$ maps elements of $[n/2]$ into $[n/2]$, and elements of $\{n/2+1, \dots, n\}$ into elements of $\{n/2+1, \dots, n\}$.
\end{lemma}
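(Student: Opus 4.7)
The plan is to exploit conditions \eqref{item:y-0-z-1-0} and \eqref{item:y-0-z-0-1} of $\gamma_G$ via two targeted restrictions: set $\vec{y}=0^n$ and consider the two restricted $\oaBP$s $B^1 := B|_{\vec{y}=0^n,\vec{z}=1^{n/2}0^{n/2}}$ and $B^2 := B|_{\vec{y}=0^n,\vec{z}=0^{n/2}1^{n/2}}$, which by \eqref{item:y-0-z-1-0} and \eqref{item:y-0-z-0-1} compute $\bigvee_{k=1}^{n/2} x_k$ and $\bigvee_{k=n/2+1}^{n} x_k$, respectively. The first statement of the lemma (that $\defperm{B}$ maps $[n/2]$ into $[n/2]$) will follow from the analysis of $B^1$, and the second statement will follow symmetrically from $B^2$.

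First I would leverage the xyz-block structure from \cref{lem:order}: the nodes of $B$ form $n$ consecutive blocks $\{x_{\pi(i)}, y_i, z_i\}$ in topological order with $\pi = \defperm{B}$. Under the restriction defining $B^1$, every $y_i$-node is contracted (via $y_i=0$), every $z_i$-node with $i \leq n/2$ is contracted via $z_i=1$, and every $z_i$-node with $i > n/2$ is contracted via $z_i=0$, so that $B^1$ is an $\oaBP$ on the $x$-variables whose nodes appear in the same relative topological order as in $B$. Assume for contradiction that $\pi(i^*)=j^*$ with $i^* \leq n/2$ and $j^* > n/2$. Since the query complexity of $\bigvee_{k \leq n/2} x_k$ on the all-zero input is exactly $n/2$, the source-to-$0$-sink path of $B^1$ on $\vec{x}=0^n$ must test every $x_k$ with $k \leq n/2$; I would trace the corresponding path of $B$ on $(0^n,0^n,1^{n/2}0^{n/2})$ and identify which blocks it enters, with the aim of exhibiting an input perturbation on which $B$ disagrees with $\gamma_G$.

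The main obstacle will be the internal block-level analysis. By \cref{lem:adjacent-mapping} the three nodes inside a block admit four possible topological orderings (with $x_{\pi(i)}$ adjacent to $y_i$); for each ordering I would work out which edges survive under the $B^1$-restriction and how they are glued to adjacent blocks, relying on the structural constraints in \cref{lem:structures} for conditions \eqref{item:x-is-0}, \eqref{item:x-is-1}, and \eqref{item:z-is-1} to pin down the inter-block edges of $B$ (for instance, the staircase structure of $B|_{\vec{x}=1^n,\vec{y}=0^n}$ forces each $z_i$-node's outgoing $0$-edge to feed into the next $z$-node in topological order). The target is to show that the blocks whose $x$-nodes lie on the $0$-path of $B^1$ are precisely those with $y/z$-index at most $n/2$; combined with the query complexity argument above, this forces $\pi([n/2]) = [n/2]$. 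Applying the analogous case analysis to $B^2$ yields $\pi(\{n/2+1,\ldots,n\}) = \{n/2+1,\ldots,n\}$, completing the proof.
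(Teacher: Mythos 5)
Your approach differs materially from the paper's, and as written it has a genuine gap. The paper's proof works with a single $1$-input: under the negation hypothesis (a triplet $\{x_i,y_j,z_j\}$ with $i\le n/2 < j$), it considers the accepting path of $B$ on the input $\vec{x}=e_i$, $\vec{y}=0^n$, $\vec{z}=1^{n/2}0^{n/2}$ (which is a $1$ of $\gamma_G$ by \eqref{item:y-0-z-1-0}), and invokes \cref{lemma:need_two_ones_to_satisfy}: the last $xyz$-triplet visited before entering the $1$-sink must have at least two of its three variables set to $1$. On this particular input \emph{every} triplet has at most one variable equal to $1$, so the contradiction is immediate and the whole proof is a few lines. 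You never invoke \cref{lemma:need_two_ones_to_satisfy}, nor any substitute for it.

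Instead you trace the $0$-path of $B^1$ and name a \emph{target}: that the blocks whose $x$-nodes lie on this $0$-path are exactly those with $y/z$-index at most $n/2$. Your counting step from the target to the lemma is fine (sensitivity forces all of $x_1,\dots,x_{n/2}$ onto the $0$-path, there are only $n/2$ blocks with small $y/z$-index, so the two sets of $x$-nodes coincide). But the target itself is never proved; you only outline a case analysis over the four internal block orderings that you "would work out." That case analysis is exactly where the difficulty lives: the outgoing edges of the $x$-node inside a block are not determined by \cref{lem:or-of-ands-structure} or \cref{lem:structures} alone, and constraining them is precisely what the paper's \cref{lemma:need_two_ones_to_satisfy} does. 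In effect your target would require re-deriving the content of \cref{lemma:need_two_ones_to_satisfy} inside the block-level analysis, and the proposal stops short of doing that. So the plan is plausible in spirit but incomplete: the missing ingredient is the two-ones property (or an equivalent structural fact about paths entering the $1$-sink), and without it the $0$-path analysis does not close.
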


\begin{lemma}\label{lemma:oaBP_satisfies_gamma_property_7}
For every $\oaBP$ $B$ computing $\gamma_G$, the permutation $\defperm{B}$ defined by $B$ corresponds to an independent set in $G$, that is $\{(i, \defperm{B}(i)) \mid i \in [n]\}$ is independent in $G$.
\end{lemma}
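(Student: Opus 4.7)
I would prove the lemma by contradiction, passing to the restriction $B' := B|_{\vec y = 0^n}$. Suppose $S := \{(i, \pi(i)) : i \in [n]\}$ with $\pi := \defperm{B}$ is not independent in $G$. Combining \cref{lemma:oaBP_satisfies_gamma_properties_5-6} with the bipartite structure of $G$, non-independence supplies $j, k, j', k' \in [n/2]$ with $\pi(j) = k$, $\pi(n/2+j') = n/2+k'$, and $((j, k), (n/2+j', n/2+k')) \in E(G)$. The edge assignment $\sigma := (\overline{e_k e_{k'}}, 0^n, e_j e_{j'})$ then satisfies $\gamma_G(\sigma) = 1$ by \eqref{item:edges}; since $\sigma$ has $\vec y = 0^n$ we also have $B'(\sigma) = B(\sigma) = 1$, and the entire proof reduces to computing $B'(\sigma)$ from its structure and obtaining $0$.

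By \cref{lem:order}, the $3n$ non-sink nodes of $B$ lie in $n$ topologically consecutive triplets $T_b = \{x_{\pi(b)}, y_b, z_b\}$. The restriction $\vec y = 0^n$ contracts each $y_b$-node into its $y_b = 0$-successor, so $B'$ has $2n$ non-sink nodes arranged into $n$ consecutive pairs $P_b = \{x_{\pi(b)}, z_b\}$ in the same topological order as the $T_b$. From cases \eqref{item:x-is-0}--\eqref{item:y-0-z-0-1}, $B'$ is pinned down on the defined part of $\gamma_G|_{\vec y=0^n}$: it is $0$ on $\vec x=0^n$ or $\vec z=0^n$; equals $\bigvee_i z_i$ on $\vec x=1^n$; equals $\bigvee_i x_i$ on $\vec z=1^n$; equals $\bigvee_{i\le n/2} x_i$ on $\vec z=1^{n/2}0^{n/2}$; and equals $\bigvee_{i>n/2} x_i$ on $\vec z=0^{n/2}1^{n/2}$.

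A direct analog of \cref{lem:or-of-ands-structure} applied to $B'$ yields a chain-of-ANDs topology: each pair $P_b$ acts as a two-node AND gadget on $(x_{\pi(b)}, z_b)$, with the $0$-labeled outgoing edges of both nodes going to the source of $P_{b+1}$ (or to the $0$-sink when $b=n$), the $1$-edge of the first node of $P_b$ going to the second node of $P_b$, and the $1$-edge of the second node going to the $1$-sink. Consequently $B'$ computes $\bigvee_b x_{\pi(b)} \land z_b$ globally, so
\[
B'(\sigma) \;=\; \bigvee_b x_{\pi(b)} \land z_b \;=\; (x_k \land 1) \lor (x_{n/2+k'} \land 1) \;=\; 0,
\]
the only surviving disjuncts being at $b \in \{j, n/2+j'\}$ (the unique indices with $z_b=1$), where $x_{\pi(b)}=0$ by the edge hypothesis. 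This contradicts $B'(\sigma)=1$ and completes the proof.

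The main technical obstacle is establishing the chain-of-ANDs structure of $B'$. I would mirror the proof of \cref{lem:or-of-ands-structure}, using that $\bigvee_b x_{\pi(b)} \land z_b$ has query complexity exactly $2n$ (forcing a length-$2n$ path through $B'$), the Hamiltonian structure of $B'|_{\vec z = 1^n}$ and $B'|_{\vec x = 1^n}$ inherited from \cref{lem:structures}, and a sensitivity argument on each pair $P_b$. The forcing of each gadget to compute $\land$ then follows from the case constraints: \eqref{item:x-is-0} and \eqref{item:z-is-0} kill $g_b(0,\cdot)$ and $g_b(\cdot,0)$; \eqref{item:x-is-1} and \eqref{item:z-is-1} combined with \cref{lemma:oaBP_satisfies_gamma_properties_5-6} force $g_b(1,1)=1$; and \eqref{item:y-0-z-1-0}, \eqref{item:y-0-z-0-1} are automatically satisfied via the half-preserving property of $\pi$.
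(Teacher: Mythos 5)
Your approach is genuinely different from the paper's. The paper proves this lemma by invoking \cref{lemma:need_two_ones_to_satisfy}: it takes the edge assignment $s=(\overline{e_k e_{k'}}, 0^{n}, e_j e_{j'})$ directly, considers the last $xyz$-triplet visited by the consistent path in $B$ before reaching the $1$-sink, and observes that since $\vec y = 0^n$ and the edge structure guarantees at most one of the two remaining variables in that triplet is $1$, the path cannot set two variables to $1$, contradicting \cref{lemma:need_two_ones_to_satisfy}. You instead restrict to $B' := B|_{\vec y=0^n}$, claim a full structural characterization of $B'$ (a chain of AND gadgets computing $\bigvee_b x_{\pi(b)} \land z_b$), and then evaluate on the edge assignment. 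Both routes require a non-trivial structural lemma: the paper's \cref{lemma:need_two_ones_to_satisfy} versus your chain-of-ANDs claim. Functionally these play the same role --- restricting \cref{lemma:need_two_ones_to_satisfy} to $\vec y = 0^n$ says exactly that a path in $B'$ reaching the $1$-sink from a pair $P_b$ must set both $x_{\pi(b)}$ and $z_b$ to $1$, which is the one direction you actually need of your stronger claim.

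The gap is in the sketch of the chain-of-ANDs step. Your proposed mirror of \cref{lem:or-of-ands-structure} starts from ``the query complexity of $\bigvee_b x_{\pi(b)} \land z_b$ is exactly $2n$, forcing a length-$2n$ path through $B'$.'' This step does not carry over: \cref{lem:or-of-ands-structure} exploits that $\bigvee_i (y_i \land z_i)$ is a \emph{total} function whose query complexity is $2n$, but $B'$ only computes some unknown total extension of the partial function $\gamma_G|_{\vec y=0^n}$, and you cannot a priori assert which extension that is or what its query complexity is --- the very thing you are trying to establish is that the extension is $\bigvee_b x_{\pi(b)} \land z_b$. (Indeed, for a bad permutation the actual extension \emph{cannot} be $\bigvee_b x_{\pi(b)} \land z_b$, which is exactly the tension you want to exploit.) The Hamiltonian structure of $B'|_{\vec z=1^n}$ and $B'|_{\vec x=1^n}$ together with $B'|_{\vec x=0^n}\equiv 0$ and $B'|_{\vec z=0^n}\equiv 0$ probably do pin down the edges, and I believe the chain-of-ANDs conclusion is true, but establishing it requires a case-by-case edge-chasing argument of roughly the same length and delicacy as the paper's proof of \cref{lemma:need_two_ones_to_satisfy}; ``a direct analog of \cref{lem:or-of-ands-structure}'' understates what is being asserted. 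Until that structural claim is actually proved (with the query-complexity step replaced by the Hamiltonian-path and sensitivity arguments), the proposal has a real gap.
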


We start by deriving the theorem from the lemmas.

\begin{proof}[Proof of \cref{theorem:reduction-correctness-oaBP}]
From \cref{lemma:oaBP_satisfies_gamma_properties_5-6} and \cref{lemma:oaBP_satisfies_gamma_property_7} we get that an $\oaBP$ computing $\gamma_G$ defines a permutation on $[n]$ which maps elements of the first/second half of $[n]$ to the elements of the first/second half of $[n]$ respectively, and for every edge it contains at most one of the vertices of a BPIS instance. Hence, the permutation defined by any $\oaBP$ computing $\gamma_G$ is a correct certificate for a $\BPIS$ instance. Thus an $\oaBP$ computing $\gamma_G$ exists iff there is a correct permutation defining independent set in an input $\BPIS$ instance.
\end{proof}

\subsubsection{Proofs of \texorpdfstring{\cref{lemma:oaBP_satisfies_gamma_properties_5-6}}{lemma} and \texorpdfstring{\cref{lemma:oaBP_satisfies_gamma_property_7}}{lemma}}
Both of the lemmas follow from a helper lemma about the structure of $\oaBP$ for $\gamma'$:
\begin{lemma}\label{lemma:need_two_ones_to_satisfy}
Let $B$ be an $\oaBP$ computing $\gamma'$. Then, if a path $p$ in $B$ goes to the $1$-sink from one of the nodes in a triplet $x_i, z_j, y_j$, then $p$ sets at least two variables among $x_i, z_j, y_j$ to $1$.
\end{lemma}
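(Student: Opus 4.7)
I propose to prove the lemma by contradiction: suppose that there is a path $p$ in $B$ from some node $v$ in the triplet $T = \{x_i, y_j, z_j\}$ to the $1$-sink that sets at most one of the three triplet variables to $1$. The plan is to complete $p$ to a full source-to-$1$-sink computation and then to extend the induced partial assignment to a total input $\alpha$ with $\gamma'(\alpha) = 0$. Since $B(\alpha) = 1$ along the extended computation, this yields the desired contradiction.

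The backbone of the argument is a tight description of the edges incident to $T$. By \cref{lem:order} the three nodes of $T$ are consecutive in the topological order of $B$, and by \cref{lem:adjacent-mapping} the internal ordering is one of the (at most four) permutations in which the $x_i$-labelled node is adjacent to the $y_j$-labelled node. Applying \cref{lem:or-of-ands-structure} to $B|_{\vec{x}=0^n}$ nails down the edges between the $y_j$- and $z_j$-labelled nodes, since the pair $\{y_j, z_j\}$ must form one of the blocks of the structure described there. Applying \cref{lem:structures} to $B|_{\vec{x}=1^n}$ and to $B|_{\vec{z}=1^n}$, together with property \ref{it:x-to-sink} of \cref{lem:adjacent-mapping}, pins down the out-edges of the $x_i$-labelled node. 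Reading these constraints together, every route from a node of $T$ to the $1$-sink must use at least two of the edges labelled $x_i = 1$, $y_j = 1$, $z_j = 1$: the only ways to reach the $1$-sink either set $y_j = z_j = 1$ via the $\{y_j, z_j\}$ pair, or combine an $x_i = 1$ edge with one of $y_j = 1$ or $z_j = 1$.

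The argument is completed by enumerating the permitted orderings of $T$ and, for each one, verifying that if $p$ uses at most one edge of the form $v = 1$ with $v \in \{x_i, y_j, z_j\}$ then the induced partial assignment can be extended to a total $\alpha$ lying in a region where $\gamma'(\alpha) = 0$: typically case \eqref{item:x-is-0} with $\vec{y}$ chosen so that no pair $y_k, z_k$ is simultaneously $1$, or case \eqref{item:z-is-0} with $\vec{z} = 0^n$, or case \eqref{item:x-is-1} with $\vec{z} = 0^n$. The principal obstacle is exactly this enumeration: the several orderings of $T$ times the choice of $v$ times which triplet variable is (if any) set to $1$ by $p$ produces a sizeable bookkeeping burden. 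The subtle part is ensuring that the constructed $\alpha$ can actually be placed in a defined $0$-region rather than the $*$-region of $\gamma'$, which requires a careful choice of values for the variables queried by $p$ in later triplets and may require invoking the structural lemmas a second time at those triplets.
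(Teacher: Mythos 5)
Your high-level strategy (contradiction, invoking \cref{lem:order}, \cref{lem:adjacent-mapping}, \cref{lem:or-of-ands-structure}, and \cref{lem:structures} to pin down the local structure around the triplet $T$) is the right starting point, but the proposal has a genuine gap at its centre, and the fallback route it gestures at is actually a harder path than the one the paper takes.

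The sentence ``Reading these constraints together, every route from a node of $T$ to the $1$-sink must use at least two of the edges labelled $x_i = 1$, $y_j = 1$, $z_j = 1$'' is precisely the content of the lemma, not a consequence that falls out of the structural lemmas in one step. The structural lemmas constrain what edges look like \emph{inside the restricted diagrams} $B|_{\vec{x}=0^n}$, $B|_{\vec{x}=1^n}$, $B|_{\vec{z}=1^n}$; in $B$ itself, an edge such as $x_i = 1$ is allowed to land on a $z$-node rather than on the $1$-sink or on a node of $T$, because that $z$-node disappears under the restriction $\vec{z}=1^n$. So the out-edges of $T$ are not fully determined by your enumeration, and a path $p$ that enters $T$ at its last node and leaves immediately on a single $1$-labelled edge is not ruled out by ``reading the constraints together.'' The paper's proof is long precisely because it must deal with exactly these escapes: it first proves a separate claim that the last edge of $p$ is a $1$-edge, then rules out $w \in \{y_j, z_j\}$, and then --- for $w = x_i$ --- needs two further claims (\cref{claim:y_or_z_before_x_in_top}, \cref{claim:y_or_z_before_x_on_p}) showing that $p$ must have queried $y_j$ or $z_j$ before reaching $x_i$, which in turn requires tracing edges through the \emph{preceding} triplet and invoking a minimality choice of $p$. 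None of these steps appears in your proposal, and they are not subsumed by ``enumerating the permitted orderings of $T$.''

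Your fallback plan --- complete $p$ to a source-to-sink path and extend the induced partial assignment to a total $\alpha$ with $\gamma'(\alpha) = 0$ --- is also a genuinely different and riskier strategy than the paper's. The defined $0$-regions of $\gamma'$ are extremely tight (essentially $\vec{x} = 0^n$ with no $y_k z_k = 11$, or $\vec{z} = 0^n$), and a source-to-$T$ prefix will in general have already committed to a mix of $x$-values or $z$-values incompatible with both. The paper sidesteps this entirely by never constructing a global $\alpha$: it derives contradictions at the level of individual edges inside the restricted programs $B|_{\vec{x}=0^n}$, $B|_{\vec{x}=1^n}$, $B|_{\vec{z}=1^n}$, where the sink-reachability facts from \cref{lem:structures} are immediate. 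You flag this subtlety yourself, but ``invoking the structural lemmas a second time at those triplets'' is not a resolution --- it is where the entire difficulty of the lemma lives, and the proposal stops short of it.
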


We defer the proof of \cref{lemma:need_two_ones_to_satisfy} to \cref{sec:proof_two_ones_lemma}. Now we are ready to prove the lemmas.

\begin{proof}[Proof of \cref{lemma:oaBP_satisfies_gamma_properties_5-6}]
 First, as $\gamma_G$ is a restricted version of $\gamma'$, we get that $\defperm{B}$ is well-defined by \cref{lem:order}. Moreover, we know that the topological order of $B$ has the form $a_1,b_1,c_1, a_2, b_2, c_2, \dots, a_{n}, b_{n}, c_{n}$, where for every $h \in [n]$ there exist $k, \ell \in [n]$, such that $\{a_h, b_h, c_h\} = \{x_k, z_\ell, y_\ell\}$, where $\defperm{B}(\ell) = k$.

Let $x_i, z_j, y_j$ form a consequent $xyz$-triplet in the $\oaBP$, with $i = \defperm{B}(j)$. Assume towards a contradiction that $i$ and $j$ are from different halves of the set $[n]$. WLOG let $i \in [n/2]$ and $j \in \{n/2+1,\dots, n\}$.

Consider a substitution $s$, in which $z=1^{n/2}0^{n/2}, y=0^{n}, x_i=e_i$. Consider a path $p$ consistent with $s$. By property \eqref{item:edges} it terminates in the $1$-sink. Now, consider the last $xyz$-triplet, that $p$ goes through before getting to the $1$-sink. This could either be $x_i, z_j, y_j$-triplet, or $x_k, z_\ell, y_\ell$, such that $i \neq k; j \neq \ell$. In both cases, at most one of the variables in such $xyz$-triplets is set to $1$. Hence, we get a contradiction with \cref{lemma:need_two_ones_to_satisfy}. The proof for the $xyz$-triplets in the last $n/2$-elements of $[n]$ goes similarly.
\end{proof}
\begin{proof}[Proof of \cref{lemma:oaBP_satisfies_gamma_property_7}]
Consider an edge connecting $(j,k)$ with $(j', k')$ in $G$. Assume towards a contradiction that $\defperm{B}$ maps an element $j$ to $k$, and $j'$ to $k'$. Then, there are two triplets $x_j, y_{k}, z_{k}$, and $x_{j'}, y_{k'}, z_{k'}$ such that within each $xyz$-triplet the elements are layered consequently in $B$.
Consider a substitution $s$, which sets $x_j = x_{j'}=0$, all other bits of $\vec{x}$ to $1$, $z_k = z_{k'} = 1$, all other bits of $\vec{z}$ to $0$, and all bits of $\vec{y}$ to $0$. $\gamma|_{s}=1$, hence there is a path $p$ in $B$ consistent with $s$ which goes from the source to 1-sink. Consider a triplet $x_a, y_b, z_b$, which is queried the last on the path $p$. Similarly to the previous case we consider two cases, if this triplet is $x_j, y_{k}, z_{k}$ (or $x_{j'}, y_{k'}, z_{k'}$, proof for which is analogous) or not. In the latter case, by our assumptions on the structure of $B$, all $a \neq j$, $a \neq j'$, $b \neq k$, and $b \neq k'$.

If $a=j$ and $b=k$, then, by our choice of $s$ and this triplet, we get that $p$ goes to 1-sink after substituting $x_j=0, z_k=1, y_k=0$. Which is a contradiction with \cref{lemma:need_two_ones_to_satisfy}.

Finally, assume either $a \neq j$ and $b \neq k$. Then there is a path in $B$ from the nodes $x_a, y_b, z_b$ directly to 1-sink which is consistent with a substitution $x_a=1, z_b=0, y_b=0$. Thus, again, leads to a contradiction with \cref{lemma:need_two_ones_to_satisfy}.
\end{proof}

\subsubsection{Proof of \texorpdfstring{\cref{lemma:need_two_ones_to_satisfy}}{the helper lemma}}\label{sec:proof_two_ones_lemma}
Assume the opposite: there exist paths in $B$, which go to the $1$-sink immediately after substituting exactly one $1$-value to variables of an $xyz$-triplet. Consider a path $p$ among these paths, which goes to the $1$-sink from the earliest node according to the topological order of $B$. Let us denote the $xyz$-triplet, from which $p$ goes to the $1$-sink as $x_i, z_j, y_j$, and let $w$ be the last non-sink node in $p$.

\begin{claim}
    $p$ assigns $1$ to $w$.
\end{claim}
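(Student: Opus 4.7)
The plan is to argue by contradiction: assume $p$ assigns $0$ to $w$, so that the final edge of $p$ is labeled $w=0$ and lands at the $1$-sink. Since by hypothesis $p$ reaches the $1$-sink from a node in the triplet $\{x_i, z_j, y_j\}$ and $w$ is the last non-sink node of $p$, we must have $w \in \{x_i, z_j, y_j\}$. I would then derive a contradiction in two subcases depending on the label of $w$.

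If $w \in \{x_i, y_j\}$, I would invoke the description of $B|_{\vec{z}=1^n}$ from \cref{lem:structures} (case \eqref{item:z-is-1}): this restriction is a Hamiltonian path on the $x$- and $y$-nodes terminating at the $0$-sink, with every $0$-edge going to the next node on the path and every $1$-edge going to the $1$-sink. The critical observation is that the restriction $\vec{z}=1^n$ only contracts $z=1$ edges and removes $z=0$ edges, so it does not touch the edge $w=0$; because $w$ is the last non-sink node of $p$, this edge already goes directly to the $1$-sink in $B$, and hence does so in $B|_{\vec{z}=1^n}$ as well, contradicting the Hamiltonian-path structure.

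If instead $w = z_j$, I would invoke case \eqref{item:x-is-1} of \cref{lem:structures}: every internal node reachable from the source in $B|_{\vec{x}=1^n}$ computes a function of the form $\bigvee_{k \in S} z_k$ for some $S \subseteq [n]$. The restriction $\vec{x}=1^n$ leaves all $z$-edges untouched, so the edge $z_j = 0$ in $B|_{\vec{x}=1^n}$ coincides with the corresponding edge in $B$ and thus points at the $1$-sink. But the $1$-sink computes the constant $1$, which is never of the form $\bigvee_{k \in S} z_k$ since such a disjunction evaluates to $0$ at $\vec{z} = 0^n$, giving the desired contradiction.

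The main step, and essentially the only non-trivial bookkeeping, is to verify that the edge $w=0$ is preserved under the appropriate restriction (either $\vec{z}=1^n$ or $\vec{x}=1^n$) because that restriction only acts on edges carrying a different variable label than $w$. Once this is noted, both contradictions follow immediately from the structural lemmas already established in the previous subsection.
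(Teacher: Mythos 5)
Your proof is correct and takes a genuinely different case decomposition from the paper. The paper's proof handles $w = x_i$ via the restriction $\vec{z}=1^n$ (as you do), but then treats $w \in \{y_j, z_j\}$ together via a path argument in $B|_{\vec{x}=0^n}$ using the chained-block structure from \cref{lem:or-of-ands-structure}: one picks a path that sets the first variable of every $zy$-block to $1$ and the second to $0$, argues that such a path must end in the $0$-sink, and derives a contradiction if $w=0$ were a $1$-sink edge. Your decomposition instead handles $w \in \{x_i, y_j\}$ together via $B|_{\vec{z}=1^n}$, which is arguably cleaner here because the Hamiltonian-path structure of \cref{lem:structures}~\eqref{item:z-is-1} pins down \emph{all} $0$-edges out of $x$- and $y$-nodes at once, and then handles $w = z_j$ via $B|_{\vec{x}=1^n}$, using \eqref{item:x-is-1}. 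Your key observation --- that the edge $w=0$ is preserved under a restriction that only acts on differently-labeled variables, so the structural conclusions about $B|_\rho$ pull back to $B$ --- is exactly the mechanism the paper uses as well, just on a different pair of restrictions.

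One small imprecision worth tightening in the $w = z_j$ case: \cref{lem:structures}~\eqref{item:x-is-1} only constrains labeled (non-sink) nodes, so the cleanest way to state the contradiction is not that the $1$-sink ``fails to be of the form $\bigvee_{k \in S} z_k$,'' but rather that the node $z_j$ itself computes $\bigvee_{k \in S} z_k$ with $j \in S$, and the subfunction obtained by substituting $z_j = 0$, namely $\bigvee_{k \in S \setminus \{j\}} z_k$, is never identically $1$; hence the $z_j=0$ edge cannot lead to the $1$-sink. This is what your sentence is getting at, but phrasing it around the function computed at $z_j$ rather than at the sink avoids appealing to the lemma for a node it does not govern.
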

\begin{proof}
    Assume the opposite: the last edge in $p$ going to the $1$-sink is $w=0$.

    Consider a case $w=x_i$. We want to show that $x_i=0$-edge cannot lead to the 1-sink. To see this observe that $\oaBP$ $B|_{\vec{z}=1^n}$ is sensitive in all $x$-variables. Hence, in the $\oaBP$ $B|_{\vec{z}=1^n}$ a path consistent with setting $\vec{x}=\vec{y}=0^n$ goes through $x_i$, and goes to the $0$-sink. This contradicts our assumption that the edge $x_i=0$ goes to $1$.

    Now consider a case when $w \in \{y_j, z_j\}$. Consider a path $p'$ in the $\oaBP$ $B|_{\vec{x}=0^n}$ such that for every $k\in [n]$  $p'(z_k) = 1$ and $p'(y_k) = 0$ if $z_k$ precedes $y_k$ in the topological order, and $p'(z_k) = 0$ and $p'(y_k) = 1$ otherwise. By \eqref{item:x-is-0} we have $\gamma'|_{\vec{x}=0^n} = \bigvee_{i \in [n]} (y_i \land z_i)$, so every such $p'$ ends in the $0$-sink. This contradicts the fact that the edge $w=0$ goes to the $1$-sink, since at least one of such paths $p'$ contains this edge. The claim then follows.
\end{proof}

In the remainder of the proof, we show that $w=1$ cannot be the only $1$-value assigned by $p$ to the variables in the last triplet $x_i, z_j, y_j$. Consider the case when $w=y_j$ or $w=z_j$ and the value of $w$ is assigned to $1$ by $p$. Consider an $\oaBP$ $B'=B|_{\vec{x}=0^n}$ corresponding to a partial substitution $\vec{x}=0^n$. We know the structure of $B'$ from \cref{lem:or-of-ands-structure}. For each $zy$-block there is a path to the $1$-sink iff we set both $z_i=y_i=1$. But that contradicts the existence of a path of a form $x_i=0, y_j=0, z_j=1$ or $x_i=0, y_j=1, z_j=0$ from the $x_i, z_j, y_j$ triplet to the $1$-sink. Hence, $w$ could not be $y$ or $z$ variable.

Consider the only remaining case, when $w=x_i$, and $p$ sets $x_i=1$. In the following two claims we show that $p$ should ask either $y_j$ or $z_j$ prior to $x_i$.
\begin{claim}\label{claim:y_or_z_before_x_in_top}
Either $y_j$ or $z_j$ precedes $x_i$ in the topological order.
\end{claim}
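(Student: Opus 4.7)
My plan is to derive the claim as an almost immediate consequence of property \ref{it:x-to-sink} of \cref{lem:adjacent-mapping}, which was set up precisely for this kind of situation. Suppose, for the sake of contradiction, that $x_i$ precedes both $y_j$ and $z_j$ in the topological order of $B$. By \cref{lem:order}, the nodes $x_i, y_j, z_j$ constitute one of the consecutive $xyz$-triplets of $B$, with $\defperm{B}(j) = i$; in particular, $x_i$ plays exactly the role of $x_{\pi(j)}$ in the notation of \cref{lem:adjacent-mapping}. Under our contradictory assumption, $x_{\pi(j)}$ precedes both $y_j$ and $z_j$, so the hypothesis of \ref{it:x-to-sink} is satisfied, and the property forces the edge out of $x_i$ labelled $x_i = 1$ to avoid the $1$-sink.

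Now I will confront this with what has already been extracted from the path $p$. In the previous paragraph of the proof it has been shown that $w = x_i$, that $p$ sets $x_i = 1$, and that $w$ is by definition the last non-sink node on $p$, where $p$ terminates at the $1$-sink. Putting these three facts together, the edge $x_i = 1$ does go to the $1$-sink, directly contradicting the conclusion drawn from \ref{it:x-to-sink}. Hence the contrary assumption must fail, and at least one of $y_j, z_j$ appears before $x_i$ in the topological order, which is the claim.

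The only point requiring a small amount of care—and the only place I could imagine a slip—is the bookkeeping between the permutation $\defperm{B}$ on $[n]$ defined from the $\oaBP$ $B$ and the mapping $\pi$ produced inside \cref{lem:adjacent-mapping} for a fixed topological order: both identify $i$ as the unique $x$-index grouped with $y_j, z_j$ in the triplet, so the application of \ref{it:x-to-sink} is legitimate. No additional combinatorial analysis or case split is needed for this sub-claim; the real work was already absorbed into the proofs of \cref{lem:adjacent-mapping,lem:order}, which is why the argument here reduces to a single forced contradiction.
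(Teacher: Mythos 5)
The proposal is correct and uses the same essential argument as the paper: assume $x_i$ is first in its triplet, invoke property \ref{it:x-to-sink} of \cref{lem:adjacent-mapping} to conclude that $x_i = 1$ cannot lead to the $1$-sink, and contradict the fact that $p$ reaches the $1$-sink from $w = x_i$ via $x_i = 1$. The only cosmetic difference is that the paper first pins down the specific order $x_i, y_j, z_j$ within the triplet while you apply the property directly from the weaker fact that $x_i$ precedes the other two; both handle the alignment of $\defperm{B}$ with the mapping $\pi$ of \cref{lem:adjacent-mapping} at about the same level of explicitness.
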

\begin{proof}
Assume the opposite, $x_i$ is the first node in the triplet. By \cref{lem:adjacent-mapping} $y_j$ goes either right before or right after $x_i$ in the topological order. Hence, we get that nodes in this triplet appear in the order $x_i, y_j, z_j$ in the topological order. But then, by the property \emph{\ref{it:x-to-sink}} in \cref{lem:adjacent-mapping}, if $x_i$ is the first node of the triplet then the edge $x_i=1$ does not go to the $1$-sink, contradicting the properties of $p$.
\end{proof}
\begin{claim}\label{claim:y_or_z_before_x_on_p}
Path $p$ queries either a value of $y_j$ or $z_j$ before querying $x_i$.
\end{claim}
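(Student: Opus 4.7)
The plan is to argue by contradiction: assume $p$ queries neither $y_j$ nor $z_j$ before reaching $x_i$. Since $x_i$ is the last non-sink node on $p$, this assumption means that $p$ never queries $y_j$ or $z_j$ at all. By \cref{claim:y_or_z_before_x_in_top} we may assume, without loss of generality, that $y_j$ precedes $x_i$ in the topological order of $B$. Let $\alpha$ denote the partial assignment that $p$ induces on the variables it queries; every completion $\alpha'$ of $\alpha$ satisfies $B(\alpha') = 1$, because $p$ terminates at the $1$-sink through $x_i = 1$ regardless of how the unvisited variables are fixed.

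The first step is to show that $\alpha$ must assign the value $1$ to at least one $z$-variable. Otherwise, the completion obtained by setting every unqueried variable of $\vec z$ (in particular $z_j$) to $0$ has $\vec z = 0^n$, so $\gamma'$ evaluates to $0$ on it by clause~\eqref{item:z-is-0}, contradicting $B(\alpha') = 1$. Fix therefore some index $m \neq j$ with $\alpha(z_m) = 1$; in particular $p$ queries $z_m$ before $x_i$ and traverses the $z_m = 1$ edge.

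The heart of the argument is to combine this $z_m = 1$ edge with the earliest-escape minimality of $w = x_i$. By the chain characterization of $B|_{\vec x = 0^n}$ in \cref{lem:or-of-ands-structure}, the $z_m = 1$ edge either leads to the partner $y_m$ of $z_m$ in the chain, or directly to the $1$-sink. Lifting this back to $B$, there is a directed path from $z_m$ along the $z_m = 1$ edge followed by $x$-nodes set to $0$ that ends either at $y_m$ or at the $1$-sink. If it ends at the $1$-sink, concatenating the prefix of $p$ up to $z_m$ with this $x$-chain produces an escape path whose last non-sink sits in a triplet strictly preceding that of $x_i$ in the topological order, contradicting the minimality of $w$. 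If it ends at $y_m$, then the fact that $p$ itself does not go to the $1$-sink immediately after $z_m$ forces $p$ to deviate from this chain at some $x$-node queried with value $1$; that intermediate $x$-node then serves as the terminal vertex of an escape path that is again strictly earlier than $x_i$, contradicting the choice of $p$.

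The main obstacle I anticipate is to verify that the alternative escape paths constructed above really do have exactly one $1$-value inside their terminal triplet, as required by the setup of \cref{lemma:need_two_ones_to_satisfy}. This requires a careful case analysis on which triplet the new terminal node belongs to and on the values $\alpha$ has already fixed inside that triplet. The structural characterizations \eqref{item:x-is-0} and \eqref{item:x-is-1} from \cref{lem:structures}, together with the fact that $p$ has already fixed most of the relevant variables by the time it reaches $x_i$, should be enough to rule out the problematic configurations; I expect the technical bookkeeping to parallel that used in Cases 2a and 2b of the proof of \cref{lem:adjacent-mapping}.
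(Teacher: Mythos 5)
Your proposal takes a genuinely different route from the paper, but it has a gap that I don't think is merely "technical bookkeeping."

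The paper's proof examines the node $w'$ \emph{immediately} preceding $x_i$ on $p$: since $p$ skips $y_j, z_j$, the node $w'$ lies in a different triplet $\{x_b,y_a,z_a\}$, and a direct case analysis on the type and value of $w'$ (using \cref{lem:or-of-ands-structure} and \cref{lem:structures}) yields the contradiction. Your proof instead locates an arbitrary $z_m=1$ somewhere earlier on $p$ and tries to propagate a contradiction forward along the chain of \cref{lem:or-of-ands-structure}. This $z_m$ can be very far from $x_i$, and the chain argument never re-engages with the triplet $\{x_i,y_j,z_j\}$, which is what the claim is actually about.

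Concretely, the step that fails is the assertion that, when the lifted $z_m=1$ chain ends at $y_m$, ``$p$ is forced to deviate from this chain at some $x$-node queried with value $1$.'' Nothing forces a deviation: $p$ can set every $x$-node on the chain to $0$, reach $y_m$, set $y_m=0$, and continue through the next blocks toward $x_i$. (And it \emph{cannot} set $y_m=1$: by \cref{lem:or-of-ands-structure} the edge $y_m=1$ in $B|_{\vec x=0^n}$ goes to the $1$-sink, so lifting to $B$ it is either $y_m=1\to 1$-sink, making $p$ end at $y_m$ rather than $x_i$, or it passes through an $x$-node with a $0$-edge into the $1$-sink, which the first claim of the lemma already excludes.) Moreover, even if $p$ did deviate at some $x_c$ with $x_c=1$, that edge $x_c=1$ cannot go to the $1$-sink --- otherwise $p$ would terminate at $x_c$, not $x_i$ --- so there is no ``escape path'' terminating at $x_c$ at all. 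The other branch (chain ends at the $1$-sink) is also handled incorrectly: that case simply does not arise, because either $z_m=1$ goes directly to the $1$-sink (then $p$ ends at $z_m$) or the chain's last edge is some $x_c=0\to 1$-sink, which the first claim rules out; neither situation is a ``minimality'' contradiction of the sort you invoke.

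So the obstacle is not the one you flag (checking that the terminal triplet of an escape path has exactly one $1$-value); it is that the escape paths you try to construct do not exist. You would need to abandon the chain-from-$z_m$ idea and look at the immediate predecessor of $x_i$ on $p$, which is what the paper does.
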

\begin{proof}
By \cref{claim:y_or_z_before_x_in_top} we know $x_i$ is not the first node of the triplet $x_i, y_j, z_j$ in the topological order. Now assume that $p$ skips querying $yz$-nodes preceding $x_i$ in this triplet and goes directly to $x_i$. Let $w'$ be the last node prior to $x_i$ in $p$. By our assumption $w'$ belongs to an $xyz$-triplet $\{x_b, y_a, z_a\}$ different from $\{x_i, y_j, z_j\}$.

First let us rule out the case $w'=z_a$ or $w'=y_a$. If $p$ assigns $0$ to $w'$, then the edge $w'=0$ in $B|_{\vec{x}=0^n}$ goes to the endpoint of the edge $x_i=0$ in $B$, which contradicts \cref{lem:or-of-ands-structure}, since this edge must go to the first variable of the subsequent $zy$-block, which precedes $x_i$ in the topological order of $B$. If $p$ assigns $1$ to $w'$, then in $B|_{\vec{x}=0^n}$ the edge $w'=1$ goes to the $1$-sink (as it does not go to the node in the same $zy$-block), hence in $B$ the edge $x_i=1$ must go to the $1$-sink, which implies that the function computed by $B$ does not depend on $x_i$, which is a contradiction, since $\gamma'$ does depend on $x_i$.

Hence $w'=x_b$. Let us now show that $p$ assigns $1$ to $x_b$. Assume the opposite: $x_b$ is set to $0$ in $B$. Let us then trace the path corresponding to the assignment $\vec{x}=0^n; \vec{z}=1^n; \vec{y}=0^n$ in $B$ until we reach the node $x_b$. This must happen by \cref{lem:structures} for \eqref{item:z-is-1}. Now by the \cref{lem:or-of-ands-structure} and property \eqref{item:x-is-0} the edge $x_b=0$ must go to the first node of the subsequent $zy$-block, which is a contradiction with the assumption that $x_b=0$ ends in $x_i$. Therefore $p$ assigns $x_b=1$.

Consider an assignment $s$ of $\vec{z}=1^n, \vec{y}=0^n$, $\vec{x}=e_b$. By \eqref{item:z-is-1} we have $\gamma(s)=1$. But by our assumptions an edge $x_b=1$ does not go to the $1$-sink immediately, it goes to $x_i$ first. A node $x_i$ should also be reachable by a substitution $s'$ which assigns $\vec{z}=1^n, \vec{y}=0^n$, $\vec{x}=0^n$, as $\gamma'|_{\vec{z}=1^n}$ is sensitive to the value of $x_i$ by \eqref{item:z-is-1}. But $\gamma(s')=0$. Hence, we get a contradiction with the fact that paths corresponding to assignments $s$ and $s'$ meet in the node $x_i$, and follow the same path in $B$, but one must end in the $0$-sink, and another in the $1$-sink. Therefore, either $z_j$ or $y_j$ node is not queried by $p$.
\end{proof}

 Now, let $r$ be the immediate predecessor of $w$ in $p$. By \cref{claim:y_or_z_before_x_on_p} we know that either $r \in \{y_j, z_j\}$. First, consider the case of $r=z_j$. Consider a substitution $s$ of $\vec{x}=1^n$ and $\vec{z}=0^n$. We know that the $\oaBP$ $B|_{\vec{x}=1^n}$ queries all values of $z$. Hence, $B|_{\vec{x}=1^n}$ contains a path $r=0, w=1$, which goes to the $1$-sink. But, as the assignment $s$ does not satisfy $B|_{\vec{x}=1^n}$, and is consistent with the substitution $r=0, w=1$ we get a contradiction (since the path $s$ passes through $r$).

Finally, to get a contradiction with the case when $r=y_j$ we show the following two claims.
\begin{claim}
If $r=y_j$, then $z_j$ goes before $y_j$ and $x_i$ in the topological order.
\end{claim}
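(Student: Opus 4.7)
The strategy is proof by contradiction. Suppose that the triplet's topological order is not $z_j, y_j, x_i$. Since $r = y_j$ being the immediate predecessor of $w = x_i$ on $p$ gives the edge $y_j \to x_i$ in $B$ and hence $y_j \prec x_i$, and since \cref{lem:adjacent-mapping} forces $y_j$ and $x_i$ to be adjacent while \cref{lem:order} forces $\{x_i, y_j, z_j\}$ to be contiguous, the only other possibility is that the triplet is ordered $y_j, x_i, z_j$. The plan is to derive a contradiction from this hypothesized order.

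I would first extract the local edges of $B$ under this order. Applying \cref{lem:or-of-ands-structure} to $B|_{\vec{x}=0^n}$ with the identification $y_j = a_h$, $z_j = b_h$ gives $y_j = 1 \to z_j$ and $z_j = 1 \to \text{1-sink}$ in $B|_{\vec{x}=0^n}$. Both edges lift to direct edges in $B$: a hypothetical intermediate $x$-node would need its $0$-edge to land on $z_j$ or the $1$-sink, which is forbidden by the Hamiltonian-path structure of $B|_{\vec{z}=1^n}$ from \cref{lem:structures} (every $x$-node's $0$-edge points to the next $x/y$-node on the Hamiltonian path). Combined with the previously established $y_j = 0 \to x_i$ and $x_i = 1 \to \text{1-sink}$, the four outgoing edges from the three triplet nodes are fully pinned down.

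The final step is to leverage the minimality of $p$ (chosen to reach the $1$-sink from the topologically earliest node). I would construct an auxiliary path $p^{**}$ that follows $p$ through all earlier triplets, then deviates within triplet $h-1$ by taking a $1$-labeled edge of the $b_{h-1}$-node to the $1$-sink (such an edge exists in $B$ by the same lifting argument as above). Using \cref{lem:adjacent-mapping} to pin down the relative position of triplet $h-1$'s $x$-node and $y$-node, plus the fact that $p$'s trajectory through triplet $h-1$ is forced by its exit edge into $y_j$ (dictated by the Hamiltonian path of $B|_{\vec{z}=1^n}$), a short case analysis should show that $p^{**}$ inherits at most one $1$-value in its own last triplet, so $p^{**}$ satisfies the same one-$1$ hypothesis as $p$ but terminates at a node strictly before $x_i$, contradicting the choice of $p$.

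The main obstacle is the case analysis for triplet $h-1$'s order, particularly ensuring that the $1$-value introduced on the $b_{h-1}$-node by the deviation is the \emph{only} $1$ in $p^{**}$'s last triplet. As a back-up route, one can instead fix a single total assignment, e.g.\ $\vec{x} = 1^n, \vec{y} = 0^n, \vec{z} = 0^n$ with $\gamma' = 0$ (by \eqref{item:z-is-0}), and use \cref{lem:structures} for \eqref{item:x-is-1} together with the edges $y_j = 0 \to x_i$ and $x_i = 1 \to \text{1-sink}$ to argue that $B$ is forced to output $1$ on this input; the hard sub-step there is showing $y_j$'s predecessors in $B$ are not all routed through some $x$-node's $0$-edge that would let the path bypass $y_j$.
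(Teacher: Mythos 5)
The student's proposal identifies the correct two candidate orders ($z_j, y_j, x_i$ vs.\ $y_j, x_i, z_j$) and the correct local edges ($y_j=0 \to x_i$ from the Hamiltonian path of $B|_{\vec z=1^n}$, and $x_i=1\to1$-sink from $p$), but neither of the two proposed routes to a contradiction actually closes. The paper's argument is a direct evaluation contradiction: under the assignment $\vec x=e_i$, $\vec y=0^n$, $\vec z=0^n$ (which has $\gamma'$-value $0$), every $x$-variable preceding $x_i$ is $0$, so the path in $B$ matches the all-zero path in $B|_{\vec x=0^n}$; by \cref{lem:or-of-ands-structure} this path visits the first node of every $zy$-block, and in the hypothesized order $y_j,x_i,z_j$ that first node is $y_j$, so the path reaches $y_j$, then $y_j=0\to x_i$, then $x_i=1\to$ 1-sink --- a contradiction.

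Your main route (deviate from $p$ at triplet $h-1$ to get a shorter witness $p^{**}$) has a gap you already flag yourself: nothing in the hypotheses constrains how many $1$-values $p$ assigns inside triplet $h-1$, so $p^{**}$ need not satisfy the ``at most one $1$ in its last triplet'' condition, and the minimality of $p$ gives you nothing. This is not a small obstacle --- it is the entire content of the step, and the case analysis you hope will resolve it is not bounded by the lemmas you cite. Your back-up route also fails at the point you notice is hard: with $\vec x=1^n$ the trace lives in $B|_{\vec x=1^n}$, whose structure (\cref{lem:structures} for \eqref{item:x-is-1}) only controls the $z$-nodes; there is no guarantee the path ever visits $y_j$, and indeed it will typically walk down the $z$-chain to the $0$-sink without touching $y$-nodes. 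The choice $\vec x = e_i$ rather than $\vec x = 1^n$ is essential precisely because it keeps the trace inside the $B|_{\vec x=0^n}$ structure until $x_i$, which is what forces $y_j$ onto the path.
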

\begin{proof}
Assume the opposite. Let $z_j$ go right after $x_i$ in the topological order. So we get that the order of nodes in this triplet is $y_j, x_i, z_j$. First, note that $y_j=0$ edge goes to $x_i$ as $B|_{\vec{z}=1^n}$ queries all $y$s and $x$s sequentially, and every edge labeled with $0$-edge in $B|_{\vec{z}=1^n}$ goes to either the next node or to the $0$-sink. We also know that $x_i=1$-edge goes to the $1$-sink. Now note that that a path consistent with a substitution $s$ defined as $\vec{x}=e_i, \vec{y}=0^n, \vec{z}=0^n$ should reach node $y_j$. But then this path will follow edge $y_j=0$ to $x_i$ and edge $x_i=1$ to $1$ sink, though $\gamma|_{s}=0$. A contradiction.
\end{proof}
\begin{claim}
If $z_j$ is the first node in a $xyz$-triplet then path $p$ goes through $z_j$.
\end{claim}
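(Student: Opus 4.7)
The plan is to show that every path from the source to $y_j$ in $B$ visits $z_j$; applied to $p$ (which reaches $y_j$ as $r = y_j$), this contradicts the assumption on $p$.

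By \cref{lem:or-of-ands-structure} applied to $B|_{\vec{x}=0^n}$, the unique incoming edge of $y_j$ in $B|_{\vec{x}=0^n}$ is $z_j = 1$ from $z_j$ (because $z_j$ comes first in the $(z_j, y_j)$-pair), so every path ending at $y_j$ in $B|_{\vec{x}=0^n}$ must pass through $z_j$.

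Next I rule out the possibility that $B$ contains any edge of the form $x_m = 1 \to y_j$: such an edge would survive unchanged in $B|_{\vec{z}=1^n}$ since $y_j$ is not a $z$-node and so is not contracted there, but \cref{lem:structures} for \eqref{item:z-is-1} dictates that every edge labeled $x = 1$ in $B|_{\vec{z}=1^n}$ lands on the $1$-sink, contradicting the fact that $y_j$ is not a sink.

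Now let $Q$ be any path from the source to $y_j$ in $B$. Let $x_{m^*}$ be the last $x$-node on $Q$ whose out-edge on $Q$ is labeled $1$, and let $w^*$ be its $1$-endpoint (if no such $x$-node exists, take $w^*$ to be the source of $B$). The suffix $Q'$ of $Q$ from $w^*$ to $y_j$ uses only edges labeled $x = 0$ at $x$-nodes together with $y$- and $z$-edges. Contracting its $x = 0$-edges yields a valid path in $B|_{\vec{x}=0^n}$ ending at $y_j$, which by the first step passes through the $z_j$-labeled node. Since $z_j$ is a $z$-node and hence not contracted in $B|_{\vec{x}=0^n}$, this translates to $Q'$ visiting $z_j$ in $B$. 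Hence $Q$ visits $z_j$, contradicting the assumption on $p$.

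The main obstacle is the projection step: the previous paragraph is needed to ensure $w^* \ne y_j$, so that the projection is non-trivial and its last edge is necessarily $z_j = 1 \to y_j$; and one must verify that a visit to the merged $z_j$-labeled node of $B|_{\vec{x}=0^n}$ corresponds to an actual visit of $z_j$ in $B$ --- either $Q'$ visits $z_j$ directly, or it first visits an $x$-node absorbed into $z_j$ and then continues via a $0$-edge chain until reaching $z_j$.
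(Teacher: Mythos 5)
Your approach is genuinely different from the paper's. The paper argues locally: it fixes the triplet preceding $\{z_j,y_j,x_i\}$ and, using $\cref{lem:structures}$ for \eqref{item:x-is-1} and \eqref{item:z-is-1}, shows that every edge leaving that previous triplet must land either inside it, on the $1$-sink, or on $z_j$ --- so $p$ cannot slip past $z_j$. You instead prove the stronger global statement that \emph{every} source-to-$y_j$ path visits $z_j$, by projecting a suitable suffix of the path into $B|_{\vec{x}=0^n}$ and invoking $\cref{lem:or-of-ands-structure}$ for the unique predecessor of $y_j$ there. Both arguments lean on the same structural lemmas; yours is a bit more modular.

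There is, however, a gap in the non-triviality step. You argue (correctly) that no edge $x_m=1 \to y_j$ exists, and claim this gives $w^* \ne y_j$, hence the projected path is non-trivial. But $w^* \ne y_j$ does not by itself guarantee that the \emph{image} of $w^*$ under the contraction defining $B|_{\vec{x}=0^n}$ differs from $y_j$: if $w^*$ were an $x$-node whose outgoing $0$-edge chain ends at $y_j$, then $w^*$ would be merged into $y_j$ and the projected path would degenerate to the single node $y_j$, giving no information. Your step 2 only excludes $y_j$ as the direct endpoint of an $x=1$ edge; it does not exclude $x$-nodes as endpoints. The fix uses exactly your step 2 technique, just stated more strongly: any $x_m = 1$ edge whose endpoint is not a $z$-node survives unchanged in $B|_{\vec{z}=1^n}$, where by $\cref{lem:structures}$ for \eqref{item:z-is-1} it must land on the $1$-sink. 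Hence $w^*$, being on $Q$ and therefore not a sink, is a $z$-node (or the source of $B$ in the degenerate case), and $z$-nodes are never contracted under $\vec{x}=0^n$; moreover the source of $B|_{\vec{x}=0^n}$ cannot be $y_j$ since $z_j$ precedes it. With that strengthening in place, the rest of your projection argument goes through.
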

\begin{proof}
Consider all possible edges which get to this triplet. Let $x_a, z_b, y_b$ be the triplet right before this one. From the structure of $B|_{\vec{x}=1^n}$ we know that all edges from $z_b$ and $y_b$ either go to a node within this triplet, or the $1$-sink, or to $z_j$.
We also know that $x_a=0$ cannot skip $z_j$, as $x_a$-node is reachable by a path consistent with a substitution $\vec{x}=0^n, \vec{y} = 0^n, \vec{z}=1^n$. Hence if $x_a=0$ skips $z_j$ then $B|_{\vec{x}=0^n}$ would be insensitive to $z_j$.

It now remains to show that $x_a=1$ cannot skip $z_j$ either. In $B|_{\vec{z}=1^n}$ every edge $x_a=1$ should go to $1$-sink. Hence edge $x_a=1$ either goes to $z$-node or to $1$-sink. This finishes the proof.
\end{proof}

From these two claims, we get that $p$ should necessarily go through $z_j$. Hence, by considering $B|_{\vec{x}=1^n}$ we get that a path consistent $z_j=y_j=0$ should either go to a node in the next triplet, or go to the $0$-sink, in case it is the last triplet of $B$. Which contradicts the fact that $p$ after substituting $z_j=y_j=0$ and $x_i=1$ goes to the $1$-sink. This finishes the proof of \cref{lemma:need_two_ones_to_satisfy}.

\section{Corollaries}\label{section:corollaries}
 Multiple corollaries follow from \cref{theorem:reduction-correctness-oaBP}.
 For all classes of branching programs that degenerate to an $\oaBP$ if constrained on having at most $n$ nodes on an $n$-bit input, we get that the corresponding minimization problem is ETH-hard. In particular it holds for general branching programs ($\BP$), ordered binary decision diagrams ($\OBDD$) and read-$k$ branching programs for any $k \ge 1$ ($k$-$\BP$). For definitions of the latter two classes we refer the reader to \cite{Wegener}. Formally we have
\begin{corollary}
\label{cor:hardness-for-class-C}
Let $\mathcal{C}$ be a circuit class such that for every $f\colon \{0,1\}^n \to \{0,1\}$ that depends on all its variables there exists $C \in \mathcal{C}$ with $|C| = n$ computing $f$ if and only if there exists $\oaBP$ $B$ computing $f$.

Then, assuming ETH holds we have that $\mathcal{C}\text{-}{\rm MCSP}|_{s=n}^*$ requires time $2^{\Omega(n \log n)}$.
\end{corollary}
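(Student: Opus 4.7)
The plan is to show that Ilango's reduction $\mathcal{R}$ from $\BPIS$, combined with \cref{theorem:reduction-correctness-oaBP}, already serves as a reduction from $\BPIS$ to $\mathcal{C}\text{-}\MCSP|_{s=n}^{*}$ under the hypothesis on $\mathcal{C}$. Given an $(n\times n)$-$\BPIS$ instance $G$, I would output the partial truth table $\gamma_G$ of the $3n$-variable partial function together with the size parameter $3n$; this matches the ``$s=n$'' regime, where now ``$n$'' plays the role of the input length of the function produced by the reduction.

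The first step I would carry out is to check that every Boolean extension $f$ of $\gamma_G$ is sensitive in all $3n$ of its input variables. Sensitivity in each $x_i$ is witnessed via \eqref{item:z-is-1} by the pair of assignments $(\vec{x},\vec{y},\vec{z})=(0^n,0^n,1^n)$ and $(e_i,0^n,1^n)$; sensitivity in each $y_i$ is witnessed analogously via \eqref{item:z-is-1} using $(0^n,e_i,1^n)$; and sensitivity in each $z_i$ is witnessed via \eqref{item:x-is-1}. Once this is in hand, the hypothesis of the corollary applies uniformly to every extension of $\gamma_G$: such an $f$ is computable by a $\mathcal{C}$-circuit of size $3n$ if and only if it is computable by an $\oaBP$. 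Combining this equivalence with \cref{theorem:reduction-correctness-oaBP} then yields
\[
G \in \BPIS \iff (\gamma_G, 3n) \in \mathcal{C}\text{-}\MCSP|_{s=n}^{*}.
\]

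To finish, I would argue via time bounds: $\mathcal{R}$ runs in time $2^{O(n)}$, so any hypothetical $2^{o(n\log n)}$-time algorithm for $\mathcal{C}\text{-}\MCSP|_{s=n}^{*}$ on the constructed $3n$-variable instances would yield a $2^{o(n\log n)}$-time algorithm for $(n\times n)$-$\BPIS$, contradicting \cref{theorem:eth-hardness-of-bpis}. The main point of care--and essentially the only obstacle I anticipate--is that the hypothesis on $\mathcal{C}$ is stated for \emph{total} functions, so one must justify applying it to Boolean extensions of the partial function $\gamma_G$. The full-sensitivity observation above handles this cleanly: every extension of $\gamma_G$ automatically falls into the totally-sensitive regime in which the hypothesis applies, and the rest of the argument is a direct unpacking of \cref{theorem:reduction-correctness-oaBP}.
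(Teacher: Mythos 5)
Your proposal is correct and takes essentially the same approach as the paper: observe that every Boolean extension of $\gamma_G$ depends on all of its variables, use the hypothesis on $\mathcal{C}$ to translate membership in $\mathcal{C}\text{-}\MCSP|_{s=n}^*$ into $\oaBP$-computability, and then combine \cref{theorem:reduction-correctness-oaBP} with \cref{theorem:eth-hardness-of-bpis}. The paper merely asserts the full-dependence observation, whereas you additionally exhibit explicit sensitivity witnesses via cases \eqref{item:z-is-1} and \eqref{item:x-is-1} and unfold the time-bound bookkeeping; both details are correct.
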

It is easy to see that for $\mathcal{C} \in \{\OBDD, \BP\} \cup \{k\text{-}\BP \mid k \ge 1\}$ this statement holds, so in particular \cref{theorem:main-theorem} is implied.
\begin{proof}[Proof of \cref{cor:hardness-for-class-C}]
    Let $\mathcal{R}$ be the reduction given by \cref{theorem:reduction-correctness-oaBP}. Observe that every partial function in the image of $\mathcal{R}$ can only be extended to a function that depends on all of its variables.

    Hence by the assumption on $\mathcal{C}$, $\mathcal{R}(G) \in \mathcal{C}\text{-}{\rm MCSP}|_{s=n}^*$ iff $\mathcal{R}(G)$ is computable by an $\oaBP$. Therefore, \cref{theorem:reduction-correctness-oaBP} and \cref{theorem:eth-hardness-of-bpis} imply the statement.
\end{proof}

Following on our previous work \cite{GlinskihR22}, in which we show that $\BPIS$ is unconditionally hard for $\oneNBP$s by showing that reduction of Ilango is computable by a $\oneNBP$s, we get that similar $\oneNBP$ lower bound holds for all total minimization problems for branching programs.

\begin{corollary}
For $\mathcal{C}$ as in \cref{cor:hardness-for-class-C}, the size of the smallest $\oneNBP$ that computes $\mathcal{C}\text{-}\MCSP$ is $N^{\Omega(\log\log(N))}$, where $N$ is the input size of $\mathcal{C}\text{-}\MCSP$.
\end{corollary}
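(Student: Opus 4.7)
The plan is to mirror the three-step argument of \cite{GlinskihR22}, which established an analogous unconditional lower bound for total $\MCSP$, and swap in the new reduction from \cref{theorem:reduction-correctness-oaBP} together with \cref{cor:hardness-for-class-C}. The three steps are: (i) recall the unconditional $\oneNBP$ lower bound for $\BPIS$; (ii) show that the map $G \mapsto \mathcal{R}(G)$ can be computed by a small $\oneNBP$, so that any small $\oneNBP$ for $\mathcal{C}\text{-}\MCSP^{*}|_{s=n}$ would yield a small $\oneNBP$ for $\BPIS$; and (iii) reduce $\mathcal{C}\text{-}\MCSP^{*}$ to $\mathcal{C}\text{-}\MCSP$ in the $\oneNBP$ model at the cost of only a linear multiplicative factor.

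For step (i), I would invoke the $\Omega(n!) = 2^{\Omega(n \log n)}$ lower bound on the $\oneNBP$ size of $\BPIS$ proved in \cite{GlinskihR22}; the argument there is purely combinatorial and does not depend on any downstream reduction, so it carries over verbatim. For step (ii), since $\mathcal{R}(G) = (\gamma_G, n)$ uses the very same partial function $\gamma_G$ as in \cite{Ilango20, GlinskihR22}, the $\oneNBP$ encoding of the map $G \mapsto \gamma_G$ from \cite{GlinskihR22} applies unchanged: for a fixed $(\vec{x}, \vec{y}, \vec{z}) \in \{0,1\}^{3n}$, all cases \eqref{item:x-is-0}--\eqref{item:y-0-z-0-1} are determined without looking at the adjacency matrix of $G$, and only case \eqref{item:edges} requires one query into it. Composing a hypothetical $\oneNBP$ of size $N^{o(\log \log N)}$ for $\mathcal{C}\text{-}\MCSP^{*}|_{s=n}$ with this encoder and using $N = 2^{3n}$ then yields a $\oneNBP$ for $\BPIS$ of size $2^{o(n \log n)}$, contradicting step (i) and hence lower-bounding the $\oneNBP$ complexity of $\mathcal{C}\text{-}\MCSP^{*}$.

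Step (iii) is the bridge from the partial version to the total version. The idea inherited from \cite{GlinskihR22} is that a $\oneNBP$ for $\mathcal{C}\text{-}\MCSP$ can simulate $\mathcal{C}\text{-}\MCSP^{*}$ by nondeterministically guessing, in a single left-to-right pass over the input truth table, a $0/1$ value for each $*$-entry and feeding the resulting total truth table into the $\mathcal{C}\text{-}\MCSP$ solver; by semantics of $\mathcal{C}\text{-}\MCSP^{*}$, some completion is accepted iff the original partial function admits a $\mathcal{C}$-representation of size $\le s$. The blowup is an additive $O(N)$ guessing prefix and at most a constant multiplicative overhead, so the $\oneNBP$ complexities of $\mathcal{C}\text{-}\MCSP$ and $\mathcal{C}\text{-}\MCSP^{*}$ agree up to a linear factor. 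For $\mathcal{C} \in \{\OBDD, \BP\} \cup \{k\text{-}\BP \mid k \ge 1\}$ this is sound because replacing $*$ by a specific bit never increases the minimum $\mathcal{C}$-size of the resulting total function.

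The main obstacle is step (iii): the read-once restriction means that each position of the input truth table is scanned exactly once, so the guesses for the $*$-entries must be interleaved with the computation of the underlying $\mathcal{C}\text{-}\MCSP$ $\oneNBP$ in a way that respects the ordering in which that $\oneNBP$ reads its input. One therefore has to fix a single variable order that is simultaneously compatible with the guessing gadget and with some $\oneNBP$ achieving the minimum size for $\mathcal{C}\text{-}\MCSP$, and verify that the guess-and-verify simulation never reads any input twice. This is essentially the same bookkeeping as in \cite{GlinskihR22} and goes through without changes; combining it with steps (i)--(ii) gives the desired $N^{\Omega(\log \log N)}$ lower bound on the $\oneNBP$ size of $\mathcal{C}\text{-}\MCSP$.
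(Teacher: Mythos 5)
Your proposal follows the same three-step recipe the paper uses, citing the same ingredients from \cite{GlinskihR22}: the $\oneNBP$ lower bound for $\BPIS$, the $\oneNBP$-computability of $\mathcal{R}$, and the linear relation between $\oneNBP$ sizes of the partial and total minimization problems (the paper's proof simply cites Theorems~3, 4, and~6 of that work). Two small corrections to your step~(iii): the claim that ``replacing $*$ by a specific bit never increases the minimum $\mathcal{C}$-size'' is false as stated --- a particular completion can have larger $\mathcal{C}$-size than the optimum over all completions; soundness instead rests on the existential semantics of $\MCSP^*$, which the nondeterminism of the $\oneNBP$ realizes directly. Also, no global variable order is needed: the guessing is local --- at each query node of the $\oneNBP$ for $\mathcal{C}$-$\MCSP$ one attaches a $*$-edge leading to a nondeterministic choice between the existing $0$- and $1$-successors, which preserves the read-once property without any ordering argument.
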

\begin{proof}
    By \cite[Theorem~6]{GlinskihR22} we have that the sizes of $\oneNBP$s computing $\mathcal{C}$-$\MCSP^*$ and $\mathcal{C}$-$\MCSP$ are linearly related (the theorem is only stated for vanilla $\MCSP$, but the proof does not use this). By \cite[Threorem~4]{GlinskihR22}, the reduction $\mathcal{R}$ can be implemented in $\oneNBP$, so by \cref{theorem:reduction-correctness-oaBP} $\oneNBP$ size of $\mathcal{C}$-$\MCSP^*$ with input size $N=2^{3n}$ is lower bounded by the $\oneNBP$ size of $\BPIS$ which by \cite[Theorem~3]{GlinskihR22} is $2^{\Omega(n \log n)} = N^{\Omega(\log \log N)}$.
\end{proof}

Another corollary follows from the fact that $\mathcal{R}$ always outputs a partial functions that can be encoded as a $2$-$\BP$.
\begin{corollary}\label{corollary:NP-hardness-2-BP-compression}
It is $\NP$-hard to check, whether any extension of a partial function expressed by a $2$-$\BP$ over the alphabet $\{0,1,*\}$ can be computed with an $\oaBP$.
\end{corollary}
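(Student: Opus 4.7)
The approach is to reduce $\BPIS$ directly to the compression problem, reusing the partial function $\gamma_G$ and the reduction $\mathcal{R}$ already at hand. Given a $\BPIS$ instance $G$, I would construct, in polynomial time, a $2$-BP $B_G$ of size $\poly(n)$ over $\{0,1,*\}$ whose associated partial function is \emph{exactly} $\gamma_G$ (outputting $*$ on precisely those inputs where $\gamma_G$ is undefined). By \cref{theorem:reduction-correctness-oaBP}, $G$ is a yes-instance of $\BPIS$ iff the partial function of $B_G$ admits an extension computable by some $\oaBP$, so any decider for the compression problem solves $\BPIS$. Since $\BPIS$ is $\NP$-hard \cite{LMS18}, the corollary follows.

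\textbf{Construction of $B_G$.} I would process the three variable blocks in phases while budgeting reads carefully. In phase $1$, the program reads $x_1,\dots,x_n$ once each, tracking $O(n^2)$ states that record whether $\vec{x}\in\{0^n,1^n\}$, the number of zeros in each half of $\vec{x}$ (capped at $3$), and the positions of any isolated zeros (at most one per half, for the edge case). The resulting bucket dictates phases $2$--$3$. If $\vec{x}=0^n$, the program runs a single interleaved pass $y_1,z_1,y_2,z_2,\dots$ computing $\bigvee_i (y_i\wedge z_i)$ on the fly, implementing \eqref{item:x-is-0}. If $\vec{x}=1^n$, it reads $z_1,\dots,z_n$ and returns their disjunction, implementing \eqref{item:x-is-1}. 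In every other bucket it reads $\vec{z}$ once (with $O(n^2)$ states) to detect whether $\vec{z}\in\{0^n,1^n,1^{n/2}0^{n/2},0^{n/2}1^{n/2}\}$ or has the form $e_j e_{j'}$, then reads $\vec{y}$ once when needed to verify $\vec{y}=0^n$. The output is determined by \eqref{item:z-is-0}--\eqref{item:y-0-z-0-1}, or by an $O(n^4)$-entry lookup into $E$ addressed by $(k,k',j,j')$ for \eqref{item:edges}, and defaults to $*$ elsewhere.

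\textbf{Budget, obstacle, and conclusion.} On every root-to-sink path each variable is touched at most once in phase $1$ and at most once afterwards, so $B_G$ is genuinely a $2$-BP; its total size is dominated by the $O(n^4)$ edge table and hence polynomial. The only delicate point is that no branch may exceed the read-$2$ budget: in particular the $\vec{x}=0^n$ branch cannot afford to first pattern-detect $\vec{z}$ and then re-read $\vec{z}$ inside the $\bigvee_i (y_i\wedge z_i)$ subroutine, and symmetrically for the $\vec{x}=1^n$ branch with $\vec{z}$. The dispatch above sidesteps both by routing these two buckets straight into a dedicated reading pattern, bypassing pattern detection entirely. Once $B_G$ is in hand, \cref{theorem:reduction-correctness-oaBP} closes the proof without further work.
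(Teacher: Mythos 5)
Your proposal is correct and follows essentially the same approach as the paper: construct a polynomial-size bounded-read $\BP$ over $\{0,1,*\}$ that computes $\gamma_G$ exactly, then invoke \cref{theorem:reduction-correctness-oaBP} and $\NP$-hardness of $\BPIS$. (A minor difference: the paper uses a two-pass scheme — a first read of all variables in the order $x_1,y_1,z_1,\dots$ to dispatch among the seven cases, followed by a second pass to evaluate — whereas your dispatch-on-$\vec{x}$-then-$\vec{z}$-then-$\vec{y}$ scheme reads each variable only once, so it actually yields a $1$-$\BP$; both constructions suffice.)
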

\begin{proof}
    Since $\BPIS$ is an $\NP$-complete problem, it suffices to check that for every $G$ the function $\gamma_G$ can be represented as a $2$-$\BP$ of polynomial size.
    First, notice that a $1$-$\BP$ can distinguish between the cases \eqref{item:x-is-0}-\eqref{item:y-0-z-0-1} and the case that either \eqref{item:edges} is true or the function equals to $*$: we read variables in the order $x_1, y_1, z_1, \dots, x_n, y_n, z_n$ and for each of the patterns among \eqref{item:x-is-0}-\eqref{item:y-0-z-0-1} we maintain if the current prefixes of $\vec{x}$, $\vec{y}$, and $\vec{z}$ coincide with the prefix of the pattern. Thus, at each level of our $1$-$\BP$ we have $2^6$ states. For each of the states at the last level, we choose one of the patterns which is satisfied arbitrary, so we can merge the states into 7: 6 corresponding to the conditions \eqref{item:x-is-0}-\eqref{item:y-0-z-0-1} and one to the case where none of the conditions hold. For each of the cases \eqref{item:x-is-0}-\eqref{item:y-0-z-0-1}, the restricted function can easily be computed by a $1$-$\BP$.

    Now it remains to distinguish the case \eqref{item:edges} from a star with a $1$-$\BP$. In order to do this we split the input into $6$ chunks of length $n/2$ and read then sequentially. For each chunk we maintain the set of possible values among $\{0^{n/2}, 1^{n/2}\} \cup \{e_i \mid i \in [n/2]\} \cup \{\overline{e_i} \mid i \in [n/2]\}$ that the chunk might have given the current prefix. Observe, that the number of potential states is linear. After we read all the chunks we have $O(n^6)$ states at the last level, so we know the only tuple $j,k,j',k'$ that may satisfy the predicate in \eqref{item:edges}. Since the edges are fixed, we can unite each of the states at the last level with the $1$-sink or the $*$-sink.
\end{proof}

Finally, to show the $\coNP$-hardness of compressing $\BP$s we use the following fact.
\begin{proposition}\cite{Tovey84}\label{prop:3-4-SAT_NP_hard}
Let $(3,4)$-$\SAT$ be a language of satisfiable $3$-$\CNF$ formulas in which every variable occurs in at most 4 clauses. $(3,4)$-$\SAT$ is $\NP$-complete.
\end{proposition}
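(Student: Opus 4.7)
The plan is to establish $\NP$-hardness by a local occurrence-splitting reduction from ordinary $3$-$\SAT$; membership in $\NP$ is immediate, since a satisfying assignment is a polynomial-size witness checkable in polynomial time. Given a $3$-$\CNF$ formula $\phi$, for every variable $z$ occurring in $k \ge 5$ clauses of $\phi$, I would introduce fresh variables $z^{(1)},\ldots,z^{(k)}$, replace the $i$th occurrence of $z$ (in some fixed ordering of its clauses) by $z^{(i)}$, and append the cyclic implication clauses
\[
(\lnot z^{(1)} \lor z^{(2)}),\ (\lnot z^{(2)} \lor z^{(3)}),\ \ldots,\ (\lnot z^{(k)} \lor z^{(1)}).
\]
These clauses collectively force $z^{(1)} = z^{(2)} = \cdots = z^{(k)}$ in any satisfying assignment, so the rewritten formula $\phi'$ is satisfiable iff $\phi$ is. After this transformation each fresh variable $z^{(i)}$ appears in exactly three clauses (one inherited from $\phi$ plus the two implications in which it participates), and variables with at most $4$ occurrences in $\phi$ are left untouched; hence the ``at most $4$ clauses'' bound is met throughout.

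The only technicality is that the appended implication clauses have width $2$, whereas $(3,4)$-$\SAT$ is stated for $3$-$\CNF$. I would handle this by replacing each $2$-clause $(\lnot u \lor v)$ with the pair $(\lnot u \lor v \lor d) \land (\lnot u \lor v \lor \lnot d)$, where $d$ is a fresh variable used only in that one pair. Such a local dummy occurs in exactly two clauses, so the occurrence bound is still safe, and the conjunction of the two padded $3$-clauses is logically equivalent to the original $2$-clause. The transformation is clearly polynomial-time (indeed linear in the size of $\phi$), so combined with the $\NP$-completeness of $3$-$\SAT$ this yields the $\NP$-completeness of $(3,4)$-$\SAT$. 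The whole argument is essentially routine; the only place where one must take care is to use a \emph{fresh} dummy per padded clause rather than a globally shared dummy, since the latter would itself violate the occurrence bound and would then require another level of splitting to repair.
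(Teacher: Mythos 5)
The paper itself does not give a proof of this proposition; it simply cites Tovey (1984). Your approach — occurrence-splitting via a cyclic implication chain — is exactly the standard one, so the comparison to be made is really about the correctness of your padding step, and there is a genuine gap there.

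You verify the ``at most $4$'' bound for the fresh variables $z^{(i)}$ \emph{before} applying the width-$2$-to-width-$3$ padding, and after padding you re-check only the dummy $d$. But the padding $(\lnot u \lor v) \mapsto (\lnot u \lor v \lor d) \land (\lnot u \lor v \lor \lnot d)$ also doubles the number of occurrences of $u$ and of $v$. Each $z^{(i)}$ sits in two implication clauses of the cycle, one as $u$ (in $z^{(i)} \Rightarrow z^{(i+1)}$) and one as $v$ (in $z^{(i-1)} \Rightarrow z^{(i)}$); after padding both, $z^{(i)}$ appears in $2+2=4$ implication-derived clauses plus $1$ inherited clause, for a total of $5$ occurrences. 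So the construction you describe does not land in $(3,4)$-$\SAT$, and the ``safe'' claim about the occurrence bound is false.

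There are two ways to repair this. If one adopts the convention that a $3$-$\CNF$ clause may have \emph{at most} $3$ literals (Garey--Johnson style), the padding is unnecessary: the width-$2$ implications are already legal, and each $z^{(i)}$ appears in only $3$ clauses, which is within the bound. If, as in Tovey's paper, every clause must have \emph{exactly} $3$ literals, then one must pad each $2$-clause with a \emph{single} extra literal $(\lnot u \lor v \lor f)$, where $f$ is a fresh variable forced to be false by a separate gadget, so that $u$ and $v$ each gain only one occurrence rather than two. Designing that forcing gadget so that $f$ itself (and the gadget's auxiliary variables) stays within the occurrence bound is exactly the nontrivial content of Tovey's Theorem 2.3, and it is what your proof currently omits. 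As written, the argument establishes at best the $\le 3$-literals-per-clause version, not the statement in the proposition under the exact-width reading.
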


\begin{corollary}
    It is $\coNP$-hard to check whether an input $4$-$\BP$ for a total function can be compressed to a $\BP$ of a constant size.
    \end{corollary}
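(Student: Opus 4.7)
The plan is to reduce from the complement of $(3,4)$-$\SAT$, which is $\coNP$-complete by \cref{prop:3-4-SAT_NP_hard}. Let $c$ denote the target constant size. Given a $(3,4)$-$\CNF$ formula $\varphi(x_1, \dots, x_n)$ with $m$ clauses, I would introduce $c+1$ fresh variables $z_1, \dots, z_{c+1}$ and construct a $4$-$\BP$ $B$ computing the total function
\[
f(\vec{x}, \vec{z}) \;=\; \varphi(\vec{x}) \,\land\, \bigl(z_1 \lor z_2 \lor \cdots \lor z_{c+1}\bigr).
\]

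The construction of $B$ proceeds by chaining $m$ clause-gadgets followed by a $z$-chain. For each clause $C = \ell_1 \lor \ell_2 \lor \ell_3$, the gadget reads $\ell_1, \ell_2, \ell_3$ in order; whenever a literal is satisfied the gadget jumps to the source of the next clause-gadget, and if all three literals are falsified it routes to the $0$-sink. After the last clause gadget we enter a chain of $c+1$ nodes querying $z_1, \dots, z_{c+1}$ in order: an edge labeled $1$ goes to the $1$-sink while an edge labeled $0$ continues along the chain, with the edge $z_{c+1}=0$ terminating at the $0$-sink. Since each $x_i$ appears in at most four clauses of $\varphi$, it is queried at most four times in $B$; each $z_i$ is queried exactly once. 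Hence $B$ is a $4$-$\BP$ of polynomial size and can clearly be built in polynomial time.

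For correctness, if $\varphi$ is unsatisfiable then $f \equiv 0$ is computed by the $\BP$ consisting of only the $0$-sink, so $B$ compresses to size at most $c$. Conversely, if $\varphi$ admits a satisfying assignment $\vec{x}^{*}$, then for each $i \in [c+1]$ we have $f(\vec{x}^{*}, e_i) = 1$ while $f(\vec{x}^{*}, 0^{c+1}) = 0$, so $f$ is sensitive in each $z_i$. A $\BP$ computing a function that depends on a variable $z$ must contain at least one node labeled by $z$; therefore every $\BP$ computing $f$ has at least $c+1$ non-sink nodes and cannot be compressed to size at most $c$. This yields $\varphi$ is \textsc{unsat} if and only if $B$ compresses to a $\BP$ of size at most $c$, giving the desired $\coNP$-hardness.

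There is no real obstacle: the only items to verify carefully are the read-count bound in the clause gadgets (which follows directly from the $(3,4)$ bound on variable occurrences) and the sensitivity-based lower bound on $\BP$ size in the satisfiable case. The construction is uniform in the constant $c$, so the reduction establishes $\coNP$-hardness for any fixed target compression size $c \ge 1$.
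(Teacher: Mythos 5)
Your proof is correct but takes a genuinely different route from the paper's. The paper constructs a $4$-$\BP$ $B$ computing $\lnot\phi$ directly (chaining clause gadgets so that $B$ reaches the $1$-sink exactly when some clause of $\phi$ is falsified), and observes that $\phi$ is unsatisfiable iff $B\equiv 1$ iff $B$ compresses to size $0$. This creates an annoying edge case: if $\phi$ happens to be a tautology, then $B\equiv 0$, which also has size $0$, so the reduction would give the wrong answer; the paper patches this by first evaluating $\phi$ on the all-zero assignment and substituting a trivial non-constant $\BP$ whenever $\phi(\overline{0})=1$. You instead compute $\phi(\vec{x})\land\bigl(z_1\lor\cdots\lor z_{c+1}\bigr)$ with $c+1$ fresh variables. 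The padding achieves two things at once: it collapses the tautology and the non-tautology satisfiable cases into a single argument (any satisfying assignment of $\phi$ witnesses sensitivity in every $z_i$, forcing size $\ge c+1$), and it makes the reduction work for an arbitrary fixed target size $c$, not just $c=0$. Both constructions give $4$-$\BP$s since each $x$-variable occurs in at most four clauses and each fresh $z_i$ is read once. Your version is a bit more robust and does not need the special-case check, at the small cost of introducing the auxiliary variables.
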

\begin{proof}
    By \cref{prop:3-4-SAT_NP_hard} we get that $\overline{(3,4)\text{-}\SAT}$ is $\coNP$-hard. Now, we construct a reduction from $\overline{(3,4)\text{-}\SAT}$ to the partial $4$-$\BP$ minimization as follows. For a $4$-$\CNF$ formula $\phi$ in which every variable appears at most $4$ times we  construct a branching program $B$ by consequently adding nodes of each clause. Denote the first clause of $\phi$ as $C_1$, and let $x_i, x_j, x_k$ be the variables in it. We add nodes labeled by $x_i$, $x_j$, and $x_k$ to $B$. We set the node labeled by $x_i$ as a source of $B$. Then, if $x_i$ was negated in $C_1$ we direct a 1-edge from the $x_i$-node to the node labeled by $x_j$, otherwise we direct a 0-edge from $x_i$ to the $x_j$-node. Then we similarly define one of the edges of $x_j$ to $x_k$. Finally, if $x_k$ was negated in $C_1$ we direct its $1$-edge to the 1-sink. Then we add nodes from the next clause of $\phi$, and direct all missing edges for nodes $x_i$, $x_j$, and $x_k$ to the first node corresponding to the next node. We repeat this procedure for all clauses, redirecting all unassigned edges for nodes corresponding to variables in the last clause of $\phi$ to the $0$-sink.

    Now, for any substitution $s$ of values of the variables $B$ outputs $1$ if $\phi$ was not satisfied by $s$, and $0$ otherwise. But note, that if $\phi$ was an unsatisfiable formula than $B$ would had to output $1$ on all inputs. Hence, it could be compressed to a branching program of size 0 (as we count the number of all nodes other than sinks). On the other hand, if $\phi$ is satisfiable and has at least one unsatisfying assignment $B$ has to contain at least one node other than a sink. To account for the fact that the size of $B$ corresponding to $\phi$ that is satisfiable by any assignment is also $0$ (it is just a 0-sink), we change our reduction to first substitute all 0 assignment to $\phi$. If $\phi(\overline{0})=1$ we set $B$ to be a $\BP$ for an identity function on one bit which has the size $1$. Otherwise, construct $B$ as described before.
\end{proof}
 Additionally, as for any 3-$\SAT$ instance in which every variable appears in at most 4 clauses we can easily construct an equivalent $3$-$\BP$ of polynomial size. we get that checking whether this BP can compressed to a $\BP$ of size $1$ allows us to check whether the formula is unsatisfiable.

\section{Future Work}\label{section:future-work}
The main future direction is to strengthen our results to the total version of $\MBPSP$, and to show $\NP$-hardness of $\MBPSP$ (or at least $\MBPSP^*$). We discuss a couple of barriers for improving results in both of these directions as well as additional open questions motivated by our work.

\paragraph{Show polynomial-time oracle reduction from $\MBPSP^*$ to $\MBPSP$.} As noted in \cite{Ilango21} such reductions are very ad hoc for each specific model. Indeed, the reduction that Ilango showed for the Minimum DeMorgan Formula Size Problem in \cite{Ilango21} does not work for branching programs as it relies on the tree-like structure of formulas. The ability of branching programs to reuse the same computation in different branches makes it harder to come up with a suitable reduction. For now we do not even know such a reduction for $\onebp$s but the natural next step would be to reduce $(\onebp)$-$\MCSP^*$ to $(\onebp)$-$\MCSP^*$.

\paragraph{Prove a stronger conditional lower bound.} It would be interesting to find a lower bound stronger than $N^{\Omega(\log\log(N))}$ for the time complexity of $\MBPSP^{*}$ (or $\MFSP^*$). The obstacle here is that we actually show the ETH-hardness of $\MBPSP^*|_{n}$, for which this time complexity is actually tight, since there are $n^{O(n)} = N^{O(\log \log N)}$ potential branching programs that can compute a function. Hence, in order to have stronger lower bounds we need to consider larger size parameters. The current framework requires us to find explicit functions that require $\BP$-size larger than the given size parameter. Hence, the best we can hope for without improving the state-of-the-art $\BP$ lower bounds is to show that $\MBPSP^*|_{n^{2}}$ is ETH-hard. Here the trivial upper bound is $n^{O(n^2)} = N^{O(\log N \log \log N)}$, which while still quasipolynomial, is significantly stronger.

\paragraph{Results for nondeterministic BP minimization.} Currently our hardness result only holds for the problem of minimizing deterministic branching programs. But we believe that similar ideas could be used to show that the problem of minimizing nondeterministic branching programs for partial functions is also ETH-hard. Nondeterministic branching programs ($\NBP$) differ from deterministic ones by an additional type of nodes, called \textit{nondeterministic} nodes, that do not have labels. Such nodes correspond to nondeterministic guessing of a branching program. Note, that if we consider a complexity measure that counts \textit{all nodes in an $\NBP$} other than sinks, then our main result immediately extends to the ETH-hardness of minimizing $\NBP$ and $k$-$\NBP$ for all $k$.

However, the standard complexity measure for $\NBP$s is the minimal number of nodes that query a variable (so all nodes other than sinks and nondeterministic nodes) in an $\NBP$ computing the function. So one of the approaches to showing the ETH-hardness of the Partial Minimum Nondeterministic Branching Program Size Problem could be in showing that existence of the $\oaBP$ with nondeterministic nodes for $\gamma_G$ is equivalent to the existence of the deterministic $\oaBP$ for $\gamma_G$. Namely by adding the nondeterministic node we do not make it easier for an $\oaBP$ to compute $\gamma_G$. Could this result be obtained with an additional careful analysis?

\paragraph{Could we adapt the $\NP$-hardness proof of Hirahara?} Hirahara in~\cite{Hirahara22} showed a randomized reduction from $\NP$-hard problem that is a variation of Minimum Monotone Satisfying Assignment called Collective Minimum Weight Monotone Satisfying Assignment (CMMSA) problem to $\MCSP^{*}$. One of the possible directions of strengthening our result from the ETH-hardness to $\NP$-hardness is by adapting this framework for branching programs. Indeed, the reduction from \cite{Hirahara22} seems to be robust enough to correctly work on all \textit{no}-instances of CMMSA even for branching programs.

But the same reduction does not immediately go through for the \textit{yes}-instances. The main obstacle that we see is in showing an analogue of the Uhlig's lemma \cite{Uhlig74, Uhlig92} for branching programs. This lemma states that in order to compute the direct product of the same function (of arbitrary circuit complexity) on subexponentially many inputs, it is sufficient to use one circuit of size $O(2^{n}/{n})$. Such efficient compression is an important step for constructing a small circuit corresponding to the \textit{yes}-instances of CMMSA.

It is not known whether an analogue of Uhlig's lemma exists for branching programs.
Rao and Sinha \cite{RaoS16} showed a variation of the direct-sum theorem for read-once BPs, hence disproving the analogue of Uhlig's lemma in $\onebp$ setting.
We believe that showing an analogue of Uhlig's lemma for general branching programs in the specific setting required by the Hirahara's proof is a promising direction for showing $\NP$-hardness of $\MBPSP^{*}$.

\section*{Acknowledgments}
The authors are grateful to Rahul Santhanam for suggesting working on connections between meta-complexity and branching programs, and to Mark Bun and anonymous reviewers for very helpful feedback on the preliminary versions of this paper.

Artur Riazanov was supported by the Swiss State Secretariat for Education, Research and Innovation (SERI) under contract number MB22.00026.

Ludmila Glinskih was supported by a Sloan Research Fellowship to Mark Bun. This work was done in part while Ludmila was visiting the Simons Institute for the Theory of Computing.
\bibliographystyle{alpha}
\bibliography{bib}
\end{document}